\newcommand{\email}[1]{\href{mailto:#1}{#1}}
\newcommand{\pr}[1]{(#1(t), \, 0\leq t\leq 1)}
\newcommand{\prT}[1]{(#1_t, \, 0 \leq{t} \leq{T})}
\newcommand{\prTmenos}[1]{(#1_t, \, 0 \leq{t} <{T})}
\providecommand{\keywords}[1]
{
  \small	
  \textbf{\textit{Keywords---}} #1
}
\newcommand{\cA}{\mathcal{A}}
\newcommand{\cF}{\mathcal{F}}
\newcommand{\cG}{\mathcal{G}}
\newcommand{\cH}{\mathcal{H}}
\newcommand{\cS}{\mathcal{S}}
\newcommand{\cT}{\mathcal{T}}
\newcommand{\cX}{\mathcal{X}}
\newcommand{\bF}{\mathbb{F}}
\newcommand{\bG}{\mathbb{G}}
\newcommand{\bOne}{\mathbbm{1}}
\newcommand{\bR}{\mathbb{R}}
\newcommand{\bT}{\mathbb{T}}
\newcommand{\PP}{\boldsymbol{P}}
\newcommand{\QQ}{\boldsymbol{Q}}
\newcommand{\EE}{\boldsymbol{E}}
\providecommand{\keywords}[1]
{
  \small	
  \textbf{\textit{Keywords---}} #1
}
\title{Insider information and its relation with the arbitrage condition and the utility maximization problem }
\author{Bernardo D'Auria
\thanks{This research was partially supported by the Spanish Ministry of Economy and Competitiveness Grants MTM2017-85618-P via FEDER funds and MTM2015-72907-EXP; 
both authors thank the NYUAD, Abu Dhabi (United Arab Emirates) for hosting them during the fall 2018.
}
\thanks{UC3M, Department of Statistics, 28911, Legan\'{e}s, Spain \& UC3M-BS, Institute of Financial Big Data, 28903 Getafe, Spain 
  (\email{bernardo.dauria@uc3m.es}).}
\and Jos\'{e} Antonio Salmer\'{o}n
\thanks{This author acknowledges financial support by an FPU Grant (FPU18/01101) of Spanish Ministerio de Ciencia, Innovaci\'{o}n y Universidades.}
\thanks{UC3M, Department of Statistics, 28911, Legan\'{e}s, Spain 
  (\email{joseantonio.salmeron@uc3m.es}).}
}
\numberwithin{equation}{section}
\DeclareMathOperator*{\argsup}{arg\,sup}
\newtheorem{theorem}{Theorem}[section]
\newtheorem{Lemma}[theorem]{Lemma}
\newtheorem{proposition}[theorem]{Proposition}
\newtheorem{corollary}[theorem]{Corollary}
\newtheorem{definition}[theorem]{Definition}
\newtheorem{remark}[theorem]{Remark}
\providecommand{\norm}[1]{\lVert#1\rVert}
\providecommand{\abs}[1]{\lvert#1\rvert}
\providecommand{\citep}[1]{(\cite{#1})}
\newcommand{\Ito}{It\^{o}\xspace}
\begin{document}

\maketitle
\begin{abstract}
    Within the well-known framework of financial portfolio optimization, we analyze the existing
    relationships between the condition of arbitrage and the utility maximization in presence of \emph{insider information}.
    We assume that, since the initial time, the information flow is altered  by adding the knowledge of an additional random variable including future information.
    In this context we study the utility maximization problem under the logarithmic and the Constant Relative Risk Aversion (CRRA) utilities, with and without the restriction of no temporary-bankruptcy. 

    In particular, we show that the value of the insider information may be bounded while the arbitrage condition holds 
    and 
    we prove that the insider information does not always imply arbitrage for the insider by providing an explicit example.

\end{abstract}

\keywords{
optimal portfolio; 
enlargement of filtration; 
value of the information; 
arbitrage; 
No Free Lunch Vanishing Risk;
risk neutral measure
}

\maketitle

\section{Introduction}
In the field of financial mathematics, the problem of the optimal portfolio plays a crucial role and in recent years it has been deeply analyzed in the literature. In its simplest form it consists in finding the best strategy in order to maximize a given utility function at a fixed terminal finite time.

In the simplified settings of just two assets, one risk-less and one risky, the optimal portfolio problem has been introduced and solved in \cite{merton1969} by considering both the logarithmic and risk adverse utilities.
Later, a more general model was considered in \cite{merton1971}. 

One fundamental ingredient in computing the optimal portfolio is the information flow that the agent employs in order to build her strategy. This flow is modeled mathematically by the concept of filtration, and the restrictions on the agent choices are modeled by requiring that her portfolio is adapted to this filtration.
While in general the underlying filtration is the one naturally generated by the set of risky assets, in the literature the interest has recently grown about analyzing filtrations that contain additional information.

The technique consisting in substituting the natural filtration by a larger one containing additional information, is generally referred to as \emph{enlargement of filtration} and has been introduced and studied in the seminal works \cite{JeulinYor1978,Jacod1985,Yor1985,ImkellerFollmer93}.
The concept of enlargement of filtration was first applied to the financial setting by \cite{karatzas1996}, to describe situations in which the agent has access to privileged information and to model the insider trading portfolios, and in \cite{GrorudPontier1998}, as a first attempt to detect the use of insider information by applying statistical tests.

In \cite{karatzas1996}, various examples of initial enlargements were analyzed by computing the expected additional gain carried by the privileged information. 
It was shown that the knowledge of the price of the stock at a given moment in the future implies an expected unbounded additional profit while the knowledge of an interval of values containing the future price only added a bounded expected additional gain. 
For the case of an infinite interval, a direct proof of this result was given in \cite{karatzas1996} while, for the case of a finite interval, the result was only conjectured by the support of numerical calculations. Later a series of  remarkable works, \cite{AmendingerImkellerSchweizer1998, Amendinger2003}, employed the concept of the \emph{Fischer Entropy Theory}, shortly mentioned already in \cite{karatzas1996}, to close the conjecture. In Theorem \ref{thm:karatzas1996.conjecture} below we prove again this result by the same techniques used in \cite{karatzas1996}. In \cite{AnkirchnerImkeller06}, the Fischer Entropy Theory is generalized for a more broad class of enlargement filtration. 

In more recent years many results on insider trading models appeared, we just mention \cite{Baudoin2004,Biagini2005,Aksamit2017} and references therein.

The above research indicates that there should exist a relation between the type of additional information, such as if it is exact or it is of interval type, and the value that it carries in terms of its contribution to the maximal expected utility. A partial result to this question has been given in \cite{AmendingerImkellerSchweizer1998} where they look at the atomicity of the insider information.
However there are many questions still open, such as understanding the existence of the arbitrage condition is related to the boundedness of the value of the information.

On this direction, \cite{AnkirchnerImkeller05} analyzes the relation between the arbitrage condition -- in particular the (NFLVR), see Proposition \ref{FLVR.and.alpha} below -- and the integrability of the drift of the semi-martingale representation of the asset in the enlarged filtration, see Proposition \ref{drift.decomposition} below.
The works \cite{Fontana16, Fontana18,ChauRunggaldierTankov} studied the relations of a weak arbitrage condition, the No Unbounded Profits with Bounded Risk property (NUPBR), with different types of  enlargements of filtrations.

It seems that an important ingredient in the analysis is the role played by the set of strategies that are allowed to be used.
For example, the strategy constructed in \cite{karatzas1996} to play with the information given by an interval of future prices, does not take advantage of the arbitrage condition introduced by this information.
This is due to the fact that the proposed strategy avoids the possibility for the insider agent to be for some moments in time in bankruptcy, what we later refer to as \emph{temporary-bankruptcy}. 
However it is possible to construct, by removing this constraint, even simpler strategies that get advantage of the arbitrage condition and assure a positive gain, see for example Propositions \ref{prop:strategy.semi-infinite.inteval} and \ref{prop:strategy.finite.inteval} below, for the semi-infinite and the finite interval respectively.

In this work, we analyze the relations that may exist between the condition of arbitrage and the utility maximization problem under the special setting of enlargement of filtration.
We first study the condition of (NFLVR), introduced in \cite{Delbaen1992}, and its relationship with bankruptcy,
by concluding that if the privileged information implies arbitrage, the investor can improve her profit expectations by employing strategies that allow temporary-bankruptcy. 
Even if it may seem counterintuitive, the privileged information guarantees that the trend will be corrected and the condition of bankruptcy will be only temporary. 

Finally we prove the result that the insider information does not always imply arbitrage, by constructing a counter example in Section \ref{sec:example.3}. This is surprising if we consider that for the case of progressive enlargement, it has been proved that often the additional information implies arbitrage, for more information see \cite{Imkeller2002, Fontana2014,Kardaras2014}.

We verify that the introduction of the privileged information does not violate the Novikov condition, Equation \eqref{novikov.condition} below, therefore assuring the existence of an equivalent local martingale measure and therefore the absence of arbitrage.

In Section \ref{sec:basic.notions}, we provide the notation as well as the basic and preliminary notions that we adopt for the rest of the paper.
It includes a more precise definition of the general framework in which the problem of the optimal portfolio is framed,
such as the definition of arbitrage and the concept of enlargements of filtration.
In Section \ref{sec:utility.max.prob}, we introduce the utility maximization problem, analyzing it under different conditions on the set of allowed strategies, such as the no-temporary-bankruptcy, and linking it with the arbitrage conditions.
Section \ref{sec:examples} provides three examples that show various cases of enlargement of filtration, where the arbitrage condition is analyzed and  the utility maximization problem is solved.
Section \ref{sec:conclusions} ends with some conclusions.

\section{Basic notions and preliminaries}\label{sec:basic.notions}
As a general setup we assume to work in a probability space
$(\Omega, \cF,\bF,\PP)$
where $\cF$ is the event sigma-algebra, and $\bF=\{\cF_t, t\geq0\}$ is a right continuous filtration satisfying the usual conditions.
We consider a financial market with a continuous $\mathbb{R}^2$ semi-martingale $\cS = (D,S)$ 
and, unless otherwise specified, the filtration $\bF$ that is the natural one generated by $\cS$. 
Each component represents the prices of an asset in which the agent could invest. 
Usually, the first one is assumed to be a risk-less asset driven by some interest rate $r>0$, and the second one is given by a diffusion process whose coefficients $\mu$ and $\sigma$ are $\bF$-adapted processes.
In this paper, we generally assume that the dynamics of the semi-martingale $\cS$ are given by the following stochastic differential equations,
\begin{subequations}\label{def.asset}
    \begin{align}
     dD_t &= D_t r \, dt \label{asset.riskless} \\
     dS_t &= \mu(t,S_t) \, dt + \sigma(t,S_t) \, dB_t  \label{asset.risky}
    \end{align}
\end{subequations}
with $D_0=1$ and where the process $B = \prT{B}$ is a standard $\bF$-adapted Brownian motion.
We fix $T>0$ as a horizon time, assumed to be finite.
The semi-martingale assumption implies that the stochastic integral operator is well defined. 
So, for an $\mathbb{R}^2$ valued predictable and $\bF$-adapted process $H=(H^1,H^2)$, 
we use the notation $(H\cdot\cS)_t$ to denote the sum of the components of the \Ito integral,
i.e. 
\begin{align}
    (H\cdot\cS)_t &= \int_0^t H^1_udD_u+\int_0^t H^2_udS_u \quad t \in [0,T]\ .
\end{align}
We refer to the process $D^{-1}\cS = (1,D^{-1}S)$ as the discounted price process.
We define the process $\Theta = \{(M_t,N_t) :0\leq t \leq T\}$ as the number of shares that the agent owns of each asset at any time $t\in[0,T]$. 
Assuming that, in the market, short-selling and borrowing money from the bank is allowed, there is no restriction on the values assumed by the process $\Theta$. 
To guarantee the existence of the \Ito integral $(\Theta\cdot \cS)_T$, we are going to work with a restricted class of strategies, the ones satisfying the following definition.
\begin{definition}\label{integrability.condition}
Given the filtration $\bT=\{\cT_t, t\geq0\}$, we define an \emph{allowed portfolio} process as a
progressive $\cT_t\text{-measurable process }\Theta^{\bT}(t,\omega):[0,T]\times \Omega \longrightarrow \bR^2$ 
which satisfies
$\int_0^T \norm{\Theta^{\bT}_t \sigma(t,S_t)}^2 dt< +\infty$
almost surely. When $\bT=\bF$, we suppress the superscript notation.
\end{definition}

The total wealth of an \emph{allowed portfolio} process $\Theta$ is then defined as the following \Ito integral
\begin{subequations}\label{def.wealth}
    \begin{align}
        X^{\Theta}_t 
        &= M_t \, D_t + N_t \, S_t \label{def.wealth.sum}\\
        &= X_0 +(\Theta\cdot \cS)_t  \label{def.wealth.int}
        \quad t \in [0,T] \ . 
    \end{align}
\end{subequations}
The second equality indicates that $\Theta$ satisfies the \emph{self-financing condition}.

\subsection{Classical arbitrage conditions}
In this section we are going to define the concepts of arbitrage with the special care of denoting 
the set of strategies that the agent is restricted to use.
We do so by explicitly including in the definition of the arbitrage conditions the set of allowed strategies. 
In the following we use the notation $\cA(\bT)$ to denote a general set of $\bT$-progressive admissible strategies, later we will replace it with more concrete examples. Again when $\bT=\bF$, we omit to indicate the filtration in the notation.

For the following, we need some technical definitions. 
We refer to $L^{\infty}(\Omega,\cF_T,\PP)$ as the set of $\cF_T$-measurable bounded random variables, equipped with the essential supremum  norm $\norm{X}_{\infty}$. 
We abbreviate it to $L^{\infty}$ when the measure $\PP$ is established.
Finally we use $L^{\infty}_+(\Omega,\cF_T,\PP)$ to denote the class of non-negative $\cF_T$-measurable bounded random variables and $L^{0}_+(\Omega,\cF_T,\PP)$ for the class of $\cF_T$-measurable non-negative random variables.
Following the classical definition in \cite[Definition 2.8]{Schachermayer1994a}, we define \emph{the set of contingent claims $\cA$-attainable at price 0},
$$ K(\cA) = \Big\{  (\Theta\cdot \cS)_T  |\ \Theta\in\cA \Big\} = \{    X_T^{\Theta} - X_0 |\ \Theta\in\cA \} \  \ .$$
This set contains all possible random values of terminal wealth that an agent with initial $0$ wealth can reach at time $T$ by only applying $\cA$-admissible strategies.
Then, by allowing the possibility of ``throwing away money'' at time $T$, we define the sets 
\begin{align*}
    C_0(\cA) &= K(\cA)-L_+^0 =  \{ (\Theta\cdot \cS)_T -f\ |\ \Theta \in \cA, \ f\geq 0\ \text{ finite}\},\\ 
    C(\cA) &= C_0(\cA)\cap L^{\infty}.
\end{align*}
together with the set $\overline{C(\cA)}$ as the closure of $C(\cA)$ in $L^{\infty}$.

Given the above, we are ready to state the most important definition in Arbitrage Theory. A complete review of this theory is given in \cite{DelbaenSchachermayer2006}.
\begin{definition}\ \label{def.NA.NFLVR}
Given a semi-martingale $\cS$ and a set of admissible strategies $\cA$ we say that 
\begin{enumerate}[label=(\roman*)]
    \item[ $(NA)_{\cA}$:] the condition of \emph{No Arbitrage} holds on $\cA$ when $K(\cA)\cap L_+^{\infty}=\{0\}$. 
    \item[ $(NFLVR)_{\cA}$:] the condition of \emph{No Free Lunch with Vanishing Risk} holds on $\cA$ when $\overline{C(\cA)}\cap L_+^{\infty}=\{0\}$.
\end{enumerate}
We also define the conditions of \emph{Arbitrage}, $(A)_{\cA}$, and 
\emph{Free Lunch with Vanishing Risk}, $(FLVR)_{\cA}$, as the complements of the conditions above, 
that is $\overline{(NA)}_{\cA}$ and  $\overline{(NFLVR)}_{\cA}$ respectively.
\end{definition}
Since  $K(\cA)\subset \overline{C(\cA)}$, it immediately follows that (NFLVR) implies (NA).

With the following definitions, we describe the main sets of strategies that an agent is allowed to use. 
The reason to consider different sets of strategies is mainly due to the fact that some of them allow the agent to have negative total wealth for some moments in time 
-- condition that we refer to as \emph{temporary-bankruptcy} --,
while others do not allow for this possibility.
\begin{definition}
An allowed portfolio process $\Theta$ is called \emph{a-admissible}, with $a>0$, if 
$(\Theta\cdot \cS)_t\geq -a,\ \forall t\in[0,T]$ almost surely.
If last inequality is strict we call $\Theta$ \emph{super $a$-admissible}, and we simply call it \emph{admissible} if it is \emph{a-admissible} for some $a>0$. 
\end{definition}
\begin{definition}
The set of all admissible (resp. $a$-admissible) strategies is denoted by $\cH$ (resp. $\cH_a$). We write $\cH_a^*$ for the set of super $a$-admissible strategies. 
In particular, we use $\cH^+$ for the set $\cH_{X_0}^*$.
\end{definition}
We will mainly work with the set $\cH$ and $\cH^+$, the latter being the one that forbids the condition of temporary-bankruptcy.

In the following we recall the central result of the theory of pricing and hedging by no-arbitrage, that relates the condition of no arbitrage with the existence of an \textit{Equivalent Local Martingale Measure} (ELMM). A proof of this result, known as the \emph{Fundamental Theorem of Asset Pricing}, is given in quite general settings in \cite{Schachermayer1994a}.

\begin{definition}
    The semi-martingale $S$ satisfies (ELMM) if there is a probability measure $\QQ$ on $\cF$ equivalent to $\PP$ such that the discounted price process is a local martingale with respect to $(\Omega,\cF,\QQ)$.
\end{definition}
    
\begin{proposition}[Fundamental Theorem of Asset Pricing]\label{nflvr.ELMM}
    The semi-martingale $S$ satisfies $(NFLVR)_{\cH}$ if and only if it satisfies (ELMM).
\end{proposition}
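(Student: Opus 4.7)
The plan is to handle the two implications separately, since they are of very different natures.

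For the easy direction (ELMM $\Rightarrow$ NFLVR), suppose there exists $\QQ \sim \PP$ under which the discounted price process $D^{-1}\cS$ is a local martingale. For any $\Theta \in \cH$, the discounted wealth $D^{-1}X^{\Theta}-X_0$ is then a local $\QQ$-martingale bounded from below by admissibility. By the standard result that a local martingale bounded from below is a supermartingale, I deduce $\EE_{\QQ}[f] \leq 0$ for every $f \in K(\cH)$, hence for every $f \in C(\cH)$. A Fatou argument extends this inequality to the $L^{\infty}$-closure $\overline{C(\cH)}$: if $f_n \to f$ in $L^{\infty}$ with $f_n \in C(\cH)$, then $\EE_{\QQ}[f]=\lim\EE_{\QQ}[f_n]\leq 0$. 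If moreover $f \in L^{\infty}_+$, then $f \geq 0$ together with $\EE_{\QQ}[f]\leq 0$ and $\QQ \sim \PP$ forces $f=0$ almost surely.

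For the hard direction (NFLVR $\Rightarrow$ ELMM), the plan is to follow the Hahn--Banach / Kreps--Yan paradigm. First I would establish that the convex cone $\overline{C(\cH)}$ is weak-$*$ closed in $L^{\infty}$; this is the key structural fact. Combined with $(NFLVR)_{\cH}$, it realises $\overline{C(\cH)}$ as a weak-$*$ closed convex cone meeting $L^{\infty}_+$ only at $0$. For each non-zero $g \in L^{\infty}_+$, a Hahn--Banach separation in the dual pairing $(L^{\infty},L^1)$ produces a functional $\varphi_g$ that is non-positive on $\overline{C(\cH)}$ and strictly positive at $g$. A Kreps--Yan exhaustion argument then yields a single such functional $\varphi$ that is strictly positive on every non-zero element of $L^{\infty}_+$. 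Because $-L^{\infty}_+ \subset C(\cH)$, this $\varphi$ is order-continuous and hence represented by integration against some $Z \in L^1_+$ with $Z>0$ almost surely; setting $d\QQ/d\PP = Z/\EE[Z]$ gives a probability measure equivalent to $\PP$. To conclude, I would test against simple strategies of the form $\Theta = H\,\mathbbm{1}_{(s,t]}\mathbbm{1}_A$ with $A\in\cF_s$, whose associated stochastic integrals lie (after a suitable localisation making them admissible) in $C(\cH)$; the inequality $\EE_{\QQ}[(\Theta\cdot\cS)_T]\leq 0$ applied to both $\Theta$ and $-\Theta$ forces the local-martingale property of $D^{-1}\cS$ under $\QQ$.

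The main obstacle is clearly the weak-$*$ closedness of $\overline{C(\cH)}$ in $L^{\infty}$. For a general continuous semi-martingale with unbounded admissible strategies, $C(\cH)$ itself is typically not closed, and proving that its closure is stable under weak-$*$ limits requires non-trivial $L^0$-compactness results of Koml\'os type, together with the Ansel--Stricker theorem guaranteeing that a limit of admissible stochastic integrals remains a local martingale under the candidate measure. In the write-up I would invoke this closedness as a black box at the crucial point, citing \cite{Schachermayer1994a, DelbaenSchachermayer2006} for the technical details, and focus the exposition on how it feeds into the separation argument that produces the density $Z$.
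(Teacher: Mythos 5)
The paper does not actually prove this proposition: it is stated as the Delbaen--Schachermayer \emph{Fundamental Theorem of Asset Pricing} with the proof deferred to \cite{Schachermayer1994a}, and your outline is precisely the argument of that source --- the supermartingale bound under an ELMM for the easy direction, and the Kreps--Yan separation with the weak-$*$ closedness of $\overline{C(\cH)}$ invoked as a black box for the converse --- so your treatment is correct and effectively the same as the paper's. The only points worth flagging are the discounting mismatch (the paper's $K(\cH)$ is built from undiscounted gains $(\Theta\cdot\cS)_T$ while (ELMM) concerns $D^{-1}\cS$; harmless here because $D_t=e^{rt}$ is deterministic and bounded on $[0,T]$, but it should be said) and that the supermartingale step in the easy direction relies on the Ansel--Stricker lemma for general admissible integrands, which you should cite explicitly rather than call ``standard''.
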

The following result on change of probabilities provides a necessary condition for the existence of an ELMM and it will be useful in the context of enlargements of filtration introduced in the next section.
\begin{proposition}[Cameron-Martin-Girsanov]\label{novikov}
 Let $\theta=\prT{\theta}$ be an $\bF$ predictable process, and  such that
 \begin{align}\label{novikov.condition}
      \EE\left[ \exp\left( \frac{1}{2}\int_0^T\theta^2_t \, dt \right) \right] < +\infty\ \ \ \text{(Novikov's Condition)}\ ,
 \end{align}
then there exists a measure $\QQ$ on $(\Omega,\cF_T)$, equivalent to $\PP$, such that
     $$\frac{d\QQ}{d\PP} =  \mathcal{E}\left(\int_0^T \theta_t dB_t\right)$$
    and  $W=(B_t - \int_0^t \theta_u \, du, 0 \leq t \leq T)$ is a $(\bF,\QQ)$ Brownian motion.
    The process $B=(B_t, 0 \leq t \leq T)$ is the $(\bF,\PP)$ Brownian motion appearing in \eqref{asset.risky}.
\end{proposition}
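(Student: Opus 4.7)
The statement is the classical Cameron--Martin--Girsanov theorem under Novikov's condition, so the plan follows the textbook route through the Doléans--Dade exponential and Lévy's characterization.

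First I would introduce the candidate density process $Z_t \defeq \mathcal{E}\bigl(\int_0^t \theta_s\, dB_s\bigr) = \exp\bigl(\int_0^t \theta_s\, dB_s - \tfrac{1}{2}\int_0^t \theta_s^2\, ds\bigr)$. A direct application of Itô's formula shows that $Z$ satisfies $dZ_t = Z_t\, \theta_t\, dB_t$ with $Z_0 = 1$, so $Z$ is automatically a strictly positive continuous $\PP$-local martingale. In particular, by Fatou's lemma it is a supermartingale, giving $\EE[Z_T] \le 1$.

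The heart of the proof is to promote $Z$ from supermartingale to genuine martingale on $[0,T]$, which reduces to establishing $\EE[Z_T]=1$. This is where the Novikov condition \eqref{novikov.condition} enters, and it is the only non-routine step. The standard argument proceeds by a truncation and an $L^p$-boundedness argument: for a parameter $a \in (0,1)$ one writes
\begin{equation*}
Z_T^{a} = \exp\!\Bigl(a\!\int_0^T\!\theta_s\,dB_s - \tfrac{a^2}{2}\!\int_0^T\!\theta_s^2\,ds\Bigr)\cdot \exp\!\Bigl(\tfrac{a-a^2}{2}\!\int_0^T\!\theta_s^2\,ds\Bigr),
\end{equation*}
applies Hölder's inequality with conjugate exponents $1/a$ and $1/(1-a)$, and uses \eqref{novikov.condition} to bound the second factor. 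Sending $a \uparrow 1$ yields uniform integrability of the localized family $(Z_{\tau_n})$ along a reducing sequence of stopping times, and hence $\EE[Z_T]=1$. Once this is granted, $\QQ \defeq Z_T \cdot \PP$ is well defined on $(\Omega,\cF_T)$; strict positivity of $Z_T$ gives $\QQ \sim \PP$.

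Finally, I would verify that $W_t \defeq B_t - \int_0^t \theta_u\, du$ is a $(\bF,\QQ)$-Brownian motion via Lévy's characterization. Continuity and $W_0 = 0$ are immediate, and the quadratic variation is invariant under equivalent measure change, so $\langle W, W\rangle_t = \langle B, B\rangle_t = t$. It remains to check the $\QQ$-local martingale property: an $\bF$-adapted process $X$ is a $\QQ$-local martingale iff $XZ$ is a $\PP$-local martingale. Applying the integration-by-parts formula,
\begin{equation*}
d(W_t Z_t) = W_t\, dZ_t + Z_t\, dW_t + d\langle W, Z\rangle_t = W_t Z_t \theta_t\, dB_t + Z_t(dB_t - \theta_t\, dt) + Z_t \theta_t\, dt = (W_t \theta_t + 1)\, Z_t\, dB_t,
\end{equation*}
which is manifestly a $\PP$-local martingale. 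Combining the three points, Lévy's theorem yields that $W$ is a standard Brownian motion under $(\bF, \QQ)$, completing the proof. The only conceptually delicate step is the uniform integrability argument in stage two; the rest is bookkeeping with Itô calculus.
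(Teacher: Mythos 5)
The paper offers no proof of Proposition \ref{novikov}: it is invoked as the classical Cameron--Martin--Girsanov/Novikov theorem, so your attempt can only be measured against the standard textbook argument, which is indeed what you are reconstructing. Your first and third stages are sound: $dZ_t=Z_t\theta_t\,dB_t$ with $Z_0=1$, strict positivity, the supermartingale property via Fatou, the equivalence $\QQ\sim\PP$ once $\EE[Z_T]=1$ is established, and the L\'evy-characterization step, including the computation $d(W_tZ_t)=(W_t\theta_t+1)Z_t\,dB_t$ and the Bayes-rule fact that $X$ is a $\QQ$-local martingale iff $XZ$ is a $\PP$-local martingale, are all correct.

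The gap sits in the middle stage, the only place where \eqref{novikov.condition} enters. Your displayed factorization is false: the right-hand side has exponent $a\int_0^T\theta_s\,dB_s+(\tfrac{a}{2}-a^2)\int_0^T\theta_s^2\,ds$, which agrees with the exponent of $Z_T^a$ only when $a=1$. The correct identity is $Z_T^a=\mathcal{E}(a\int_0^{\cdot}\theta\,dB)_T\,\exp(\tfrac{a^2-a}{2}\int_0^T\theta_s^2\,ds)$, whose second factor is $\le 1$, needs no Novikov bound, and only reproduces the supermartingale inequality $\EE[Z_T^a]\le 1$; it cannot deliver the missing lower bound $\EE[Z_T]\ge 1$, and moment bounds with an exponent $a<1$ never yield uniform integrability of $(Z_{\tau_n})_n$ either. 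The standard argument runs the other way: write $\mathcal{E}(a\theta\cdot B)_T=Z_T^{a^2}\exp((a-a^2)\int_0^T\theta_s\,dB_s)$, show via localization and an $L^p$ bound (using \eqref{novikov.condition}) that $\EE[\mathcal{E}(a\theta\cdot B)_T]=1$ for $a<1$, apply H\"older with exponents $1/a^2$ and $1/(1-a^2)$ to obtain $1\le\EE[Z_T]^{a^2}\,\EE[\exp(\tfrac{a}{1+a}\int_0^T\theta_s\,dB_s)]^{1-a^2}$, bound the last expectation uniformly in $a$ by Cauchy--Schwarz against \eqref{novikov.condition} (possible since $\tfrac{a}{1+a}\le\tfrac12$), and let $a\uparrow 1$ to conclude $\EE[Z_T]\ge 1$. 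With that substitution your outline becomes the proof in Karatzas--Shreve (Prop.~3.5.12) or Revuz--Yor (Ch.~VIII); as written, the central step does not go through.
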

\subsection{Enlargement of filtrations} 
In this section we introduce the concepts required to model the portfolio of an insider agent. 
We assume that an insider has at her disposal more information than the one freely accessible, and we model this
by enlarging the filtration with respect to which she can look for adapted strategies.

To this end, we introduce the filtration $\bG=\{\cG_t, t\geq0\}$ that we assume larger than $\bF$, that is  $\bF\subset\bG$. 
In particular we focus on the case where the additional information is accessible since the initial time, that is $\bG=\bF^{\,\sigma(G)}$ where 
\begin{equation}\label{bG.def}
    \cF^{\,\sigma(G)}_t = \bigcap_{s>t}\left( \cF_s \vee \sigma(G) \right)\ ,
\end{equation}
with $G\in\cF_T$ being a real random variable modeling the privileged information. 

In order to assure that any $\bF$ semi-martingale is also a $\bG$ semi-martingale,
what is known in the literature as \emph{hypothesis (H')}, see \cite{JeulinYor1978},
for the rest of this paper we state the following standing assumption, known as  \emph{condition (A')} in \cite{Jacod1985}, that is stronger as it implies the hypothesis (H'). 

\textbf{Assumption:}
The distribution of $G$ is positive and $\sigma$-finite while the regular $\cF_t$-conditional distributions
almost surely verifies for $t\in[0,T)$ the absolutely continuity condition
$\PP(G \in \cdot|\cF_t) \ll \PP(G \in \cdot)$.

The above assumption assures the existence of a jointly measurable process 
$\eta^g = \prTmenos{\eta^g}$, with $(g,t)\in\bR\times[0,T)$ such that $\PP(A|\cF_t) = \int_A \eta^g_t \PP(G\in dg)$ for any $A\in\sigma(G)$.

The following result allows to compute the $\bG$-semi-martingale decomposition 
of an $\bF$-semi-martingale. Its proof is given in \cite{Jacod1985}.

\begin{proposition}[Jacod, 1985]\label{drift.decomposition}

 There exists a jointly measurable process 
 $\alpha^g = \prTmenos{\alpha^g}$, with $(g,t)\in\bR\times[0,T)$
 such that,
 \begin{itemize}
     \item $\int_0^t (\alpha_u^G)^2 du<+\infty $ almost surely. 
     \item $\langle\eta^G,B\rangle_t =\int_0^t \eta_u^G\alpha_u^G du$.
 \end{itemize}
 and  $W_t = B_t -\int_0^t \alpha_u^G du$ is a $\bG$-Brownian motion.
\end{proposition}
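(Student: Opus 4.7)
The plan is to combine the density assumption (A') with the Brownian predictable representation theorem to produce the process $\alpha^g$, and then to verify that $W$ is a $\bG$-Brownian motion via L\'evy's characterization.

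First, for each fixed $g$ in the support of the law of $G$, the process $\prTmenos{\eta^g}$ is a nonnegative $(\bF,\PP)$-martingale, as follows from the defining formula $\PP(A\mid\cF_t) = \int_A \eta^g_t\,\PP(G\in dg)$ for $A\in\sigma(G)$ by a standard monotone class argument. Since $\bF$ is the natural augmented filtration of the Brownian motion $B$ appearing in \eqref{asset.risky}, the predictable representation theorem provides, for each $g$, a predictable integrand $\beta^g$ with $\eta^g_t = \eta^g_0 + \int_0^t \beta^g_u\, dB_u$. A parameterized version of this theorem together with a measurable selection argument produces a jointly measurable version $(g,\omega,t)\mapsto \beta^g_t(\omega)$. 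I then set
\[
\alpha^g_t \;=\; \frac{\beta^g_t}{\eta^g_t}\, \One{\eta^g_t > 0},
\]
so that the covariation rules for It\^o integrals immediately give $\langle \eta^g,B\rangle_t = \int_0^t \eta^g_u\,\alpha^g_u\, du$. Evaluating at the random parameter $g=G$ and using the fact that $\eta^G_u>0$ almost surely on $[0,T)$, which is a direct consequence of assumption (A'), yields $\int_0^t (\alpha^G_u)^2\, du < +\infty$ almost surely.

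Next, to establish the last assertion, I would invoke L\'evy's characterization. The process $W$ is continuous with $\langle W,W\rangle_t = t$, so it suffices to show that $W$ is a $\bG$-local martingale. Because $\bG$ is the right-continuous filtration generated by $\bF$ and $\sigma(G)$, and since a generating system of $\cG_s$-measurable random variables is provided by products $f(G)Y$ with $Y\in L^\infty(\cF_s)$ and $f$ bounded Borel, this reduces to verifying that for all such $f$ and $s<t$,
\[
\EE\bigl[f(G)(B_t-B_s)\mid \cF_s\bigr] \;=\; \EE\!\left[\,f(G)\int_s^t \alpha^G_u\, du \,\Big|\, \cF_s\right].
\]
By the tower property and the density formula $\EE[f(G)\mid \cF_t] = \int f(g)\,\eta^g_t\,\PP(G\in dg)$, the left-hand side rewrites as $\int f(g)\,\EE[\eta^g_t(B_t-B_s)\mid\cF_s]\,\PP(G\in dg)$; the It\^o product formula applied to the $\bF$-martingale $\eta^g$ and to $B$ converts this into $\int f(g)\,\EE[\langle\eta^g,B\rangle_t-\langle\eta^g,B\rangle_s\mid\cF_s]\,\PP(G\in dg)$, which by the identification of $\langle\eta^g,B\rangle$ above and a Fubini exchange equals the right-hand side.

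The main obstacle is the joint measurability in $(g,\omega,t)$ of the representation integrand $\beta^g$: for each fixed $g$ the predictable representation theorem is classical, but producing a common version measurable in the parameter $g$ so that the process can be safely evaluated at the random value $g=G$ is the technically delicate part of Jacod's argument, handled through a regularization of the conditional kernel $g\mapsto \eta^g$ combined with the parameterized predictable representation theorem. Once this jointly measurable version is secured, the remainder of the proof is an essentially mechanical manipulation using Fubini, the density decomposition of $\EE[\,\cdot\mid\cF_t]$, and L\'evy's characterization.
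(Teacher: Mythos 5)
Your argument is correct in outline, but it is worth noting that the paper does not prove this proposition at all: it is quoted as Jacod's theorem and the proof is delegated to the reference \cite{Jacod1985}. What you propose is the standard ``Brownian'' shortcut to Jacod's result: since $\bF$ is generated by $B$, each density $\eta^g$ is an $(\bF,\PP)$-martingale admitting a predictable representation $\eta^g_t=\eta^g_0+\int_0^t\beta^g_u\,dB_u$, one sets $\alpha^g=\beta^g/\eta^g$ on $\{\eta^g>0\}$, and the $\bG$-martingale property of $W$ is checked against test variables $f(G)Y$ via the kernel formula and the It\^o product rule; this is essentially the route of Amendinger--Imkeller--Schweizer and F\"ollmer--Imkeller rather than Jacod's original one, which works for an arbitrary reference filtration and semimartingale and builds $\langle\eta^g,B\rangle$ by a measurable-parameter Kunita--Watanabe/Girsanov argument without any representation theorem. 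Your version buys concreteness (an explicit $\alpha^g$ as a logarithmic derivative of the density, exactly what the paper later computes in its examples), at the price of the parameterized representation/measurable-selection step you correctly flag as the delicate point. A few glossed details you should be aware of: the identity $\langle\eta^g,B\rangle_t=\int_0^t\eta^g_u\alpha^g_u\,du$ with your truncated definition of $\alpha^g$ needs $\beta^g=0$ $dt\otimes d\PP$-a.e.\ on $\{\eta^g=0\}$ (true because a nonnegative continuous martingale is absorbed at $0$); the a.s.\ positivity of $\eta^G_u$ and the transfer of ``for a.e.\ $g$'' statements to $g=G$ both rest on the kernel computation $\PP(\eta^G_t=0)=\int\EE\bigl[\eta^g_t\,\One{\eta^g_t=0}\bigr]\PP(G\in dg)=0$, plus continuity to get $\inf_{u\le t}\eta^G_u>0$; and the product-formula step and the martingale property of $W$ require localization, since $W_t-W_s$ need not be integrable. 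With those points made explicit, your sketch is a legitimate self-contained proof in the Brownian setting the paper actually uses.
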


\begin{remark}\label{drift.computation}
Using the second statement of the previous lemma, for $t<T$, we can write
\begin{equation}\label{Doleans-Dade}
\eta_t^G = \mathcal{E} \left(-\int_0^t \alpha_u^G dB_u  \right) \ ,
\end{equation}
where $\mathcal{E}(\cX)$ denotes the \emph{Dol\'eans-Dade exponential} of the semi-martingale $\cX$,
see also \cite{ImkellerFollmer93,baudoin2002}.
\end{remark}

The following proposition gives a first relation between the (FLVR) arbitrage condition and the  $\alpha^G$ process. For a proof see \cite{AnkirchnerImkeller05}.
\begin{proposition}\label{FLVR.and.alpha}
    If $\PP\left( \int_0^T (\alpha^G_t)^2 dt = \infty \right)>0$, then $\cS$ satisfies $(FLVR)_{\cH(\bG)}$.
\end{proposition}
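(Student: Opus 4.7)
The plan is to argue by contraposition, via the Fundamental Theorem of Asset Pricing stated in Proposition \ref{nflvr.ELMM} applied inside the enlarged filtration $\bG$. Assuming $(NFLVR)_{\cH(\bG)}$, I would extract an equivalent local martingale measure $\QQ\sim\PP$ on $\cG_T$ and then use the explicit $\bG$-semimartingale decomposition from Proposition \ref{drift.decomposition} to force $\int_0^T (\alpha_t^G)^2\, dt <+\infty$ almost surely, contradicting the hypothesis.

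The first step is to substitute $B_t = W_t + \int_0^t \alpha_u^G\, du$ into \eqref{asset.risky}, which gives the $\bG$-dynamics
\[
dS_t = \bigl[\mu(t,S_t)+\sigma(t,S_t)\alpha_t^G\bigr]\, dt + \sigma(t,S_t)\, dW_t,
\]
where $W$ is a $(\bG,\PP)$-Brownian motion. Consequently, the $(\bG,\PP)$-drift of the discounted price $\widetilde S_t \defeq D_t^{-1}S_t$ is $D_t^{-1}\bigl[(\mu(t,S_t)-rS_t)+\sigma(t,S_t)\alpha_t^G\bigr]$.

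The second step is the Girsanov step in $\bG$. Assuming an ELMM $\QQ$ exists, its density process $Z$ is a strictly positive continuous $(\bG,\PP)$-martingale, and continuity of $\cS$ together with Assumption (A') lets me write $Z=\mathcal{E}(L)$ for a continuous $(\bG,\PP)$-local martingale $L$. A Kunita--Watanabe decomposition against $W$ gives $L = -\int_0^{\cdot} \hat\theta_u\, dW_u + N$ with $N$ orthogonal to $W$, so only the $W$-part contributes to the quadratic covariation with $\int_0^\cdot \sigma(u,S_u)\, dW_u$. The $\QQ$-drift of $\widetilde S$ is therefore
\[
D_t^{-1}\sigma(t,S_t)\left[\frac{\mu(t,S_t)-rS_t}{\sigma(t,S_t)} + \alpha_t^G - \hat\theta_t\right],
\]
and the local-martingale requirement for $\widetilde S$ under $\QQ$ identifies
\[
\hat\theta_t = \frac{\mu(t,S_t)-rS_t}{\sigma(t,S_t)} + \alpha_t^G.
\]
Since $\hat\theta$ is the integrand of a stochastic integral, it must satisfy $\int_0^T\hat\theta_t^2\, dt<+\infty$ almost surely; combined with the classical regularity of the market price of risk $(\mu(t,S_t)-rS_t)/\sigma(t,S_t)$ inherited from the semimartingale hypotheses on $\cS$, this forces $\int_0^T(\alpha_t^G)^2\, dt<+\infty$ a.s., which is the desired contradiction.

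The main obstacle is making the Girsanov step rigorous inside $\bG$: one needs the Kunita--Watanabe decomposition of $L$ against $W$ to produce a square-integrable integrand $\hat\theta$, and one needs the component orthogonal to $W$ to drop out of the drift of $\widetilde S$ after the change of measure. Both points rely on the right-continuity and completeness of $\bG$ (guaranteed by Assumption (A') and the definition \eqref{bG.def}) and on the continuity of $\cS$. A secondary subtlety is that the Kunita--Watanabe decomposition produces $\hat\theta$ that is $\bG$-predictable but not \emph{a priori} adapted to $\bF$, which is in fact exactly what is needed, since the insider's additional drift $\alpha^G$ lives only in $\bG$.
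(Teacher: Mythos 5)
The paper itself does not prove this proposition --- it defers to \cite{AnkirchnerImkeller05} --- and your contrapositive route (FTAP in $\bG$, then Girsanov/Kunita--Watanabe to identify the $\bG$-market price of risk as $\beta_t+\alpha^G_t$ with $\beta_t=(\mu(t,S_t)-rS_t)/\sigma(t,S_t)$, whose time integral of squares must be a.s.\ finite) is essentially the standard argument behind the cited result, which works through the structure condition that $(NFLVR)$ forces on a continuous semimartingale. So the skeleton is right; the problem is the last step.

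The genuine gap is the claim that $\int_0^T\beta_t^2\,dt<+\infty$ a.s.\ is ``inherited from the semimartingale hypotheses on $\cS$.'' It is not: in the generality of Section 2 the coefficients $\mu,\sigma$ are merely $\bF$-adapted, nothing prevents $\int_0^T\beta_t^2\,dt=\infty$ on a set of positive probability, and without this your conclusion $\int_0^T\hat\theta_t^2\,dt<\infty$ only yields $\int_0^T(\beta_t+\alpha^G_t)^2\,dt<\infty$, which does not bound $\int_0^T(\alpha^G_t)^2\,dt$ unless the $\bF$-part is controlled. The fix is either to invoke the standing market assumptions under which the proposition is actually used (in the examples $\eta$, $\xi$ and $1/\xi$ are bounded, so $\beta$ is bounded), or to split cases: if $(FLVR)_{\cH(\bF)}$ holds it passes to $(FLVR)_{\cH(\bG)}$ because $\cH(\bF)\subset\cH(\bG)$ and hence $\overline{C(\cH(\bF))}\subset\overline{C(\cH(\bG))}$; otherwise $(NFLVR)_{\cH(\bF)}$ holds and the Delbaen--Schachermayer structure condition (equivalently, your same Girsanov computation carried out in $\bF$) gives $\int_0^T\beta_t^2\,dt<\infty$ a.s., after which $\int_0^T(\alpha^G_t)^2\,dt\le 2\int_0^T(\beta_t+\alpha^G_t)^2\,dt+2\int_0^T\beta_t^2\,dt<\infty$ a.s.\ yields the contradiction. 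A smaller inaccuracy: Assumption (A$'$) does not make every $(\bG,\PP)$-martingale (hence the density process $Z$) continuous; fortunately you never need that, since only $\langle L,W\rangle$ enters the drift of the continuous process $D^{-1}S$, so you should simply drop the continuity claim and keep the Kunita--Watanabe step, taking care also that the identification of $\hat\theta$ requires $\sigma(t,S_t)\neq 0$.
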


Using results from the previous section combined with the semi-martingale decomposition given in Proposition \ref{drift.decomposition} we can construct a simple test to check the (NFLVR)$_{\cH(\bG)}$ condition in presence of initial enlargement of filtration. This result can be also found in \cite[Theorem 2.3]{IMKELLER2000}.
\begin{corollary}\label{cor:novikov.condition.bG}
Let $G$ as in \eqref{bG.def}, if the process $\alpha^{G}=\prTmenos{\alpha^G}$ satisfies 
\begin{equation}\label{novikov.condition.bG}
    \EE\left[ \exp\left( \frac{1}{2}\int_0^T(\alpha_t^G)^2 dt \right) \right] < +\infty \ ,
\end{equation}
 then $(NFLVR)_{\cH(\bG)}$ holds true.
\end{corollary}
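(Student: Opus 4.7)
The plan is to apply Proposition \ref{nflvr.ELMM} (the Fundamental Theorem of Asset Pricing) in the enlarged filtration $\bG$ by exhibiting an equivalent local martingale measure for the discounted price process $D^{-1}\cS$ under $\bG$. The condition \eqref{novikov.condition.bG} is exactly what powers a Cameron--Martin--Girsanov change of measure in $\bG$ to absorb the extra drift produced by the enlargement.

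First I would use Proposition \ref{drift.decomposition} to rewrite the dynamics of $S$ in the enlarged filtration. Substituting $dB_t = dW_t + \alpha^G_t\,dt$ in \eqref{asset.risky} yields
\begin{equation*}
dS_t = \bigl[\mu(t,S_t) + \sigma(t,S_t)\,\alpha^G_t\bigr]\,dt + \sigma(t,S_t)\,dW_t,
\end{equation*}
so the candidate $\bG$-market price of risk is $\theta^{\bG}_t = \lambda_t + \alpha^G_t$, where $\lambda_t = (\mu(t,S_t) - r\,S_t)/\sigma(t,S_t)$ is the market price of risk of the underlying $\bF$-model, which I take to satisfy its own Novikov condition so that $\bF$-NFLVR holds as the baseline.

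Next I would apply Proposition \ref{novikov} in the filtration $\bG$ to the integrand $-\theta^{\bG}$: granted Novikov for $\theta^{\bG}$, one obtains a measure $\QQ \sim \PP$ on $(\Omega, \cG_T)$ under which $W^{\QQ}_t = W_t + \int_0^t \theta^{\bG}_u\,du$ is a $(\bG,\QQ)$-Brownian motion. A direct \Ito computation then shows that $d(D_t^{-1}S_t) = D_t^{-1}\,\sigma(t,S_t)\,dW^{\QQ}_t$, so $D^{-1}S$ is a $(\bG,\QQ)$-local martingale and $\QQ$ is an ELMM for $\cS$ on $\bG$. Proposition \ref{nflvr.ELMM} then delivers $(NFLVR)_{\cH(\bG)}$.

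The main obstacle is verifying Novikov's condition for $\theta^{\bG} = \lambda + \alpha^G$ from the stated assumption \eqref{novikov.condition.bG} on $\alpha^G$ alone. Using $(a+b)^2 \leq 2a^2 + 2b^2$ together with Cauchy--Schwarz gives
\begin{equation*}
\EE\Bigl[\exp\Bigl(\tfrac{1}{2}\int_0^T (\theta^{\bG}_t)^2\,dt\Bigr)\Bigr] \leq \EE\Bigl[\exp\Bigl(\int_0^T \lambda_t^2\,dt\Bigr)\Bigr]^{1/2}\,\EE\Bigl[\exp\Bigl(\int_0^T (\alpha^G_t)^2\,dt\Bigr)\Bigr]^{1/2},
\end{equation*}
so the argument reduces to a mild strengthening of \eqref{novikov.condition.bG} plus the corresponding $\bF$-Novikov for $\lambda$. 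A cleaner alternative, which I would use if the above integrability is not at hand, is to apply Girsanov in $\bG$ with integrand $-\alpha^G$ alone: this produces a decoupling measure $\QQ^0$ under which $B$ remains a $\bG$-Brownian motion and the $\bG$-dynamics of $S$ coincide with its $\bF$-dynamics under $\PP$; any $\bF$-ELMM then lifts to a $\bG$-ELMM by a standard density composition, bypassing the joint Novikov computation.
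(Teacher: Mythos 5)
Your proposal is correct and follows essentially the same route as the paper's proof, which simply combines the Jacod decomposition (Proposition \ref{drift.decomposition}), the Cameron--Martin--Girsanov theorem (Proposition \ref{novikov}) applied in the enlarged filtration, and the Fundamental Theorem of Asset Pricing (Proposition \ref{nflvr.ELMM}). In fact you are more careful than the paper on the one delicate point: the paper asserts that \eqref{novikov.condition.bG} ``is equivalent to'' the Novikov condition for the full $\bG$-market price of risk, silently absorbing the baseline term $\lambda_t=(\mu(t,S_t)-rS_t)/\sigma(t,S_t)$, whereas you explicitly flag that one needs either a slightly strengthened integrability (your Cauchy--Schwarz display should carry factors of $2$ in the exponents, but you already concede a strengthening is required there) or the decoupling-measure composition, which is precisely the standard fix and is what justifies the claim in the paper's applications, where $\eta$, $\xi$ and $1/\xi$ are bounded so that $\lambda$ satisfies its own Novikov condition.
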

\begin{proof}
By the Fundamental Theorem of Asset Pricing, Proposition \ref{nflvr.ELMM}, the (NFLVR) condition follows by the existence of an ELMM. Combining the Cameron-Martin-Girsanov theorem  (Proposition \ref{novikov}) with the semi-martingale decomposition given in Proposition \ref{drift.decomposition}, the $(NFLVR)_{\cH(\bG)}$ follows by equation  \eqref{novikov.condition.bG} that in this context is equivalent to the Novikov condition \eqref{novikov.condition}. 
\end{proof}

\section{Utility maximization problem}\label{sec:utility.max.prob}
In this section we begin to study the relationships between the set of strategies that the agent is allowed to employ with her maximum expected profit and in general with the conditions of arbitrage.
We start by introducing the general class of utility functions, and then by considering the associated maximal expected utility. 
The utility functions are chosen to satisfy the classical assumptions, that is to be increasing, continuous, differentiable and strictly concave. These assumptions have perfect sense economically. 

\begin{definition}\label{def:utility.functions}
Let $\bT=\{\cT_t, t\geq0\}$ be a given filtration and $U_{\gamma}(x) = (x^{\gamma}-1)/\gamma+\gamma$, with $\gamma\in[0,1]$, an utility function -- for $\gamma=0$ we actually mean the right limit as $\gamma \to 0$, that is $U_0(x)=\ln(x)$ --, we denote by
$$v_\gamma^\bT(\cA) = \sup_{\Theta\in\cA(\bT)}\EE\left[U_\gamma(X_T^{\Theta})\right] 
\quad \gamma\in[0,1] \ ,$$
the corresponding maximal expected utility of the agent constrained to work with the strategies belonging to the set $\cA$. For the extreme cases we use the notations $u^\bT(\cA):=v_0^\bT (\cA)$ and $v^\bT(\cA)= v_1^\bT(\cA)$.
Following the convention adopted above, we omit in the notation to include the filtration $\bT$ whenever the underlying one is $\bF$.
\end{definition}
\begin{remark}
The utility functions as introduced above are a slight modification of the classical Constant Relative Risk Aversion (CRRA) functions, see \cite{merton1969, Bauerle}, by the additional constant term $\gamma$. 
We use this modified form as they preserve the same optimal portfolio while  including the improper linear utility function for $\gamma=1$.
\end{remark}


\begin{proposition}\label{comparison.gain}
With the previous notation, the following inequalities hold,
$$u^\bT(\cH^+) < v_\gamma^\bT(\cH^+) < v^\bT(\cH^+) \leq v^\bT(\cH)$$
\end{proposition}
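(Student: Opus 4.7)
The plan is to reduce the three displayed inequalities to straightforward pointwise comparisons among the utility functions $U_\gamma$, then pass to expectations and suprema, and finally observe that the last inequality is nothing but a set inclusion.

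The first step is to prove the pointwise bounds
\begin{equation*}
U_0(x)+\gamma \;\leq\; U_\gamma(x) \;\leq\; U_1(x)-(1-\gamma), \qquad x>0,\ \gamma\in(0,1).
\end{equation*}
For the left bound, consider $f(x):=U_\gamma(x)-U_0(x)=(x^\gamma-1)/\gamma+\gamma-\ln x$. A direct differentiation gives $f'(x)=(x^\gamma-1)/x$, which vanishes only at $x=1$ and changes sign from negative to positive there; hence $f$ attains its global minimum at $x=1$ with $f(1)=\gamma$. The right bound is obtained analogously from $g(x):=U_1(x)-U_\gamma(x)=x-(x^\gamma-1)/\gamma-\gamma$, whose derivative $g'(x)=1-x^{\gamma-1}$ also vanishes only at $x=1$ and has the same sign pattern, yielding $g(x)\geq g(1)=1-\gamma$.

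The second step transfers these pointwise bounds to the value functions. Since every $\Theta\in\cH^+(\bT)$ is super $X_0$-admissible, we have $X_T^\Theta>0$ almost surely, so Step 1 applies pathwise. Taking expectations and then the supremum over $\Theta\in\cH^+(\bT)$ produces
\begin{equation*}
v_\gamma^\bT(\cH^+) \geq u^\bT(\cH^+)+\gamma, \qquad v^\bT(\cH^+) \geq v_\gamma^\bT(\cH^+)+(1-\gamma).
\end{equation*}
To upgrade these to \emph{strict} inequalities, I need to know that the dominated suprema are not $+\infty$ in a way that swallows the additive constants $\gamma$ and $1-\gamma$. The riskless strategy $(M_t,N_t)\equiv(X_0,0)$ belongs to $\cH^+$ and leads to $X_T=X_0 e^{rT}$, whence $u^\bT(\cH^+)\geq \ln X_0 + rT>-\infty$; the same strategy together with the middle inequality already derived gives a finite lower bound for $v_\gamma^\bT(\cH^+)$. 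Hence the additive gaps $\gamma>0$ and $1-\gamma>0$ indeed force strictness.

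The last inequality $v^\bT(\cH^+)\leq v^\bT(\cH)$ is immediate from the set inclusion $\cH^+=\cH_{X_0}^{*}\subset\cH$: every super-$X_0$-admissible strategy is in particular $X_0$-admissible, hence admissible, and the supremum of the same functional over a larger set can only grow. The main obstacle, if any, is the bookkeeping of the first two strict inequalities: the pointwise bounds give only $\geq$ after taking suprema, and strict separation is a purely arithmetic issue of ensuring the constants $\gamma$ and $1-\gamma$ are not absorbed by a degenerate $+\infty$, which the riskless-strategy lower bound rules out. No deeper use of the enlarged filtration $\bT$ is needed; the inequalities are inherited from the structural relations among the three utility functions.
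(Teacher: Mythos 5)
Your proof is correct and rests on the same two pillars as the paper's one-line argument: monotonicity of $U_\gamma$ in the parameter $\gamma$, and the inclusion $\cH^+\subset\cH$ for the last inequality. Where you genuinely add something is precisely the point the paper glosses over: strict \emph{pointwise} inequality between integrands does not by itself survive taking expectations and suprema, so ``$U_\gamma$ is strictly increasing in $\gamma$'' alone does not yield strict inequalities between the value functions. Your uniform additive gaps $U_0(x)+\gamma\leq U_\gamma(x)\leq U_1(x)-(1-\gamma)$ (the minimization at $x=1$ is correct, using $U_1(x)=x$ and that $X_T^\Theta>0$ for $\Theta\in\cH^+=\cH^*_{X_0}$) transfer to $v_\gamma^\bT(\cH^+)\geq u^\bT(\cH^+)+\gamma$ and $v^\bT(\cH^+)\geq v_\gamma^\bT(\cH^+)+(1-\gamma)$, which is exactly the quantitative statement needed for strictness whenever the values are finite. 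One caveat: your riskless-strategy bound only shows the values exceed $\ln X_0+rT>-\infty$; it does not exclude the value $+\infty$, and when the suprema are $+\infty$ the additive gap is swallowed and strictness degenerates --- this actually happens in the paper's own examples (Section 4.1, where $u^\bG(\cH^+)=v_\gamma^\bG(\cH^+)=+\infty$, and Section 4.2 for $\gamma\geq 1/2$). That defect, however, lies in the statement of the proposition itself and afflicts the paper's proof at least as much as yours; read in the finite regime, your argument is complete and, on the strictness point, more rigorous than the paper's.
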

\begin{proof}
The first two inequalities follow by the fact that $U_\gamma(x)$ is strictly increasing in the parameter $\gamma$ while the last one by the fact that $\cH^+\subset\cH$.
\end{proof}

It can be shown that with a sufficiently large initial capital, arbitrage opportunities do not depend on the bankruptcy, as the following result shows.

 \begin{Lemma}\ \label{lem:arbitrage.comparison}
 Given a filtration $\bT$, the following equivalence hold
 $$
    (NA)_{\cH(\bT)}\Longleftrightarrow \{(NA)_{\cH^+(\bT)},\ \forall X_0\in\bR^+\} \ .$$
\end{Lemma}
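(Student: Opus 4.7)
The plan is to prove the equivalence in two very short steps, each via a direct set-inclusion argument; no machinery beyond the definitions is needed. The key observation is that for \emph{any} fixed $X_0>0$ one has the trivial inclusion $\cH^+(\bT)=\cH^*_{X_0}(\bT)\subset \cH(\bT)$, since a super $X_0$-admissible strategy is in particular $X_0$-admissible, hence admissible.

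For the implication ($\Rightarrow$), I would simply push the inclusion above through the operator $\Theta\mapsto (\Theta\cdot\cS)_T$ to obtain $K(\cH^+(\bT))\subset K(\cH(\bT))$. Intersecting with $L^\infty_+$ and using the hypothesis $K(\cH(\bT))\cap L^\infty_+=\{0\}$ yields $K(\cH^+(\bT))\cap L^\infty_+=\{0\}$ for every choice of $X_0>0$, i.e.\ $(NA)_{\cH^+(\bT)}$ holds for all $X_0\in\bR^+$.

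For the reverse implication ($\Leftarrow$), I would argue by contrapositive. Suppose $(NA)_{\cH(\bT)}$ fails: there exists $\Theta\in \cH(\bT)$ with $(\Theta\cdot\cS)_T\in L^\infty_+\setminus\{0\}$. By the definition of admissibility, $\Theta$ is $a$-admissible for some $a>0$, so $(\Theta\cdot\cS)_t\geq -a$ almost surely for every $t\in[0,T]$. Picking any $X_0>a$, one then has $(\Theta\cdot\cS)_t\geq -a>-X_0$, so $\Theta$ is super $X_0$-admissible and therefore belongs to $\cH^+(\bT)$ for that initial capital. Since $(\Theta\cdot\cS)_T$ is unchanged, it remains an arbitrage, contradicting $(NA)_{\cH^+(\bT)}$ for this particular $X_0$.

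The argument is essentially bookkeeping; the only mild subtlety is the strict-versus-non-strict inequality in the definition of $\cH^+=\cH^*_{X_0}$. This forces us to take $X_0$ strictly larger than the admissibility bound $a$, and explains why the right-hand side of the equivalence must quantify over \emph{all} $X_0\in\bR^+$ rather than a single fixed value: one needs the freedom to dominate whatever admissibility constant $a$ the putative arbitrage carries.
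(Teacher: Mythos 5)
Your proposal is correct and follows essentially the same route as the paper: the forward direction via the inclusion $\cH^+(\bT)\subset\cH(\bT)$, and the reverse by taking an arbitrage strategy $\Theta\in\cH(\bT)$ with admissibility bound $a$ and an initial capital exceeding $a$ so that $\Theta$ becomes super admissible, contradicting $(NA)_{\cH^+(\bT)}$ for that $X_0$. Your choice of $X_0>a$ is just a cleaner way of handling the strict inequality that the paper treats with its ``$\geq -a+\epsilon$'' device, so there is nothing to add.
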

\begin{proof}
The necessary condition follows from the inclusion $\cH^+ \subset \cH$. For the sufficient one, we start by assuming that $(NA)_{\cH^+(\bT)}$ holds true for all $X_0\in\bR^+$ and we prove it by contradiction.
If $(A)_{\cH(\bT)}$ holds, then there exists a strategy $\Theta\in\cH(\bT)$ with the property that $(\Theta\cdot \cS)_T \geq 0$ almost surely with $\PP((\Theta\cdot \cS)_T>0)>0$. 
By definition of the set $\cH(\bT)$ we deduce that there exists a constant $a>0$ such that, $\forall\ t\in[0,T]$, $(\Theta\cdot \cS)_t\geq -a+\epsilon$ almost surely, with some $\epsilon > 0$. 
So considering the portfolio
$X_t = a + (\Theta\cdot \cS)_t$, we see that we get $X_T > a$ almost surely 
therefore implying that $(A)_{\cH^+(\bT)}$ holds for $X_0=a$ and $\Theta\in\cH^+_{\bT}$, that is a contradiction. 
\end{proof}

We present a technical result that we will use later. Its proof is given in \cite{AnkirchnerImkeller05}.

\begin{Lemma}\label{lem:admisible.strategy}
Let $\cS = (D,S)$ be the pair of continuous semi-martingale satisfying $(NFLVR)_{\cH(\bT)}$.
If $(\Theta\cdot\cS)_T \geq -X_0$ almost surely with $\Theta$ allowed, then the process $\Theta\in\cH_{X_0}$.
\end{Lemma}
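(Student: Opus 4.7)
The strategy of the proof is to combine the Fundamental Theorem of Asset Pricing with a standard localization-and-limit argument. Setting $V_t \defeq X_0 + (\Theta\cdot\cS)_t$ and $\tilde V_t \defeq V_t/D_t$, the conclusion $\Theta \in \cH_{X_0}(\bT)$ is equivalent to showing $\tilde V_t \geq 0$ for all $t\in[0,T]$, $\PP$-a.s. First, by $(NFLVR)_{\cH(\bT)}$ and Proposition~\ref{nflvr.ELMM}, there exists an ELMM $\QQ\sim\PP$ on $\cT_T$, under which $D^{-1}S$ is a local martingale. The self-financing identity $d\tilde V_t = N_t\, d(D_t^{-1}S_t)$ then shows that $\tilde V$ is a continuous $\QQ$-local martingale on $[0,T]$.

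Next, I would localize via the stopping times $\tau_n \defeq \inf\{t\in[0,T] : V_t \leq -n\}$ (with $\inf\emptyset=+\infty$). The truncated strategies $\Theta^{(n)} \defeq \Theta\,\One{t\leq \tau_n}$ are $n$-admissible, hence $\Theta^{(n)}\in\cH(\bT)$, and the stopped discounted wealth $\tilde V^{\tau_n}$ is a $\QQ$-local martingale bounded below by $-n$, therefore a $\QQ$-supermartingale satisfying $\EE_{\QQ}[\tilde V^{\tau_n}_T\mid\cT_t] \leq \tilde V^{\tau_n}_t$. Because $\Theta$ is allowed, $V$ is continuous on the compact interval $[0,T]$ and hence $\PP$-a.s.\ bounded there, so $\{\tau_n\leq T\}\downarrow\emptyset$ as $n\to\infty$, and in particular $\tilde V^{\tau_n}_s \to \tilde V_s$ pathwise for every $s\in[0,T]$.

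The decisive step is to pass the supermartingale inequality to the limit in $n$. The hypothesis $V_T\geq 0$ yields the one-sided domination $\tilde V^{\tau_n}_T \leq \tilde V_T$: the two coincide on $\{\tau_n > T\}$, while on $\{\tau_n\leq T\}$ one has $\tilde V^{\tau_n}_T = -n/D_T \leq 0 \leq \tilde V_T$. Applying the conditional Fatou lemma to the nonnegative sequence $\tilde V_T - \tilde V^{\tau_n}_T$ and combining with the supermartingale inequality, I would obtain $\tilde V_t \geq \EE_{\QQ}[\tilde V_T\mid\cT_t] \geq 0$ $\QQ$-a.s., and hence $\PP$-a.s. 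Since $D_t > 0$, this is exactly $V_t\geq 0$, i.e.\ $(\Theta\cdot\cS)_t \geq -X_0$ for every $t\in[0,T]$, so $\Theta \in \cH_{X_0}(\bT)$.

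The main obstacle is precisely this limiting step: the lower bounds $-n$ for $\tilde V^{\tau_n}_T$ deteriorate in $n$, so dominated convergence does not apply directly. One must therefore rely on the one-sided upper bound $\tilde V^{\tau_n}_T \leq \tilde V_T$ together with the supermartingale estimate $\EE_{\QQ}[\tilde V^{\tau_n}_T]\leq X_0$, which is what controls how fast the negative mass $n\cdot\One{\tau_n\leq T}$ can persist in the limit; the near-null behaviour of $\{\tau_n\leq T\}$ due to the pathwise finiteness of $V$ on the compact interval $[0,T]$ is essential for the Fatou argument to deliver $\EE_{\QQ}[\tilde V_T\mid\cT_t]$ and not a strictly smaller quantity.
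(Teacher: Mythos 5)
Your route (pass to an ELMM via Proposition~\ref{nflvr.ELMM}, localize, use the supermartingale property of the stopped wealth, then take limits) is the natural one — the paper itself gives no proof, deferring to the cited reference of Ankirchner and Imkeller — but the limiting step you single out as "decisive" is exactly where the argument breaks, and it cannot be repaired as stated. The conditional Fatou lemma applied to the nonnegative variables $\widetilde V_T-\widetilde V^{\tau_n}_T$ gives $\EE_{\QQ}[\liminf_n(\widetilde V_T-\widetilde V^{\tau_n}_T)\mid\cT_t]\le\liminf_n\EE_{\QQ}[\widetilde V_T-\widetilde V^{\tau_n}_T\mid\cT_t]$, i.e.\ $\limsup_n\EE_{\QQ}[\widetilde V^{\tau_n}_T\mid\cT_t]\le\EE_{\QQ}[\widetilde V_T\mid\cT_t]$: an \emph{upper} bound on the conditional expectations. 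Combined with the supermartingale inequality $\widetilde V^{\tau_n}_t\ge\EE_{\QQ}[\widetilde V^{\tau_n}_T\mid\cT_t]$ this only yields $\widetilde V_t\ge\limsup_n\EE_{\QQ}[\widetilde V^{\tau_n}_T\mid\cT_t]$, which is a lower bound by a quantity that may well be negative; what you need is the reverse inequality $\liminf_n\EE_{\QQ}[\widetilde V^{\tau_n}_T\mid\cT_t]\ge\EE_{\QQ}[\widetilde V_T\mid\cT_t]\ge 0$, i.e.\ Fatou from below, and that requires a uniform $\QQ$-integrable minorant of the $\widetilde V^{\tau_n}_T$ — precisely what the truncation destroys, since the minorants are $-n$. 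The one-sided domination $\widetilde V^{\tau_n}_T\le\widetilde V_T$ and the estimate $\EE_{\QQ}[\widetilde V^{\tau_n}_T]\le X_0$ control the wrong side. (A small slip besides: on $\{\tau_n\le T\}$ one has $\widetilde V^{\tau_n}_T=-n/D_{\tau_n}$, not $-n/D_T$; this is harmless.)

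The gap is not a matter of a cleverer limit argument: for a merely \emph{allowed} $\Theta$ the statement itself fails. In the Black--Scholes model \eqref{DS.SDE}, which satisfies $(NFLVR)_{\cH}$, the classical doubling (suicide) strategy is allowed in the sense of Definition~\ref{integrability.condition} — its discounted wealth is a time-changed Brownian motion stopped when it first hits a level $c>0$, and the corresponding $\int_0^T\norm{\Theta_t\sigma(t,S_t)}^2dt$ equals that a.s.\ finite hitting time up to bounded factors — its terminal gain equals $c>0\ge -X_0$ almost surely, yet its wealth is unbounded below, so it lies in no $\cH_a$. Hence the lemma needs the additional hypothesis under which it is actually applied in the paper (in the proposition that follows it), namely $\Theta\in\cH(\bT)$, i.e.\ $(\Theta\cdot\cS)_t\ge -a$ for some constant $a$ (or, more generally, a $\QQ$-integrable lower bound on the wealth). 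With that hypothesis your machinery collapses to two lines and no localization is needed: $\widetilde V$ is a continuous $\QQ$-local martingale bounded below by a constant, hence a $\QQ$-supermartingale, so $\widetilde V_t\ge\EE_{\QQ}[\widetilde V_T\mid\cT_t]\ge 0$, which is $(\Theta\cdot\cS)_t\ge -X_0$ for all $t$. The missing ingredient is thus the a priori uniform lower bound supplied by admissibility, not a refinement of the Fatou step.
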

Finally, we get to the result that relates the condition of (NFLVR)$_{\cH(\bT)}$ with the expected terminal utility under $\cH^+(\bT)$ and $\cH(\bT)$. 
\begin{proposition}
Let $U$ be an utility function such $\sup_{x\in\bR}\{U(x)=-\infty\} = 0$, then the following implications hold.
$$  
(NFLVR)_{\cH(\bT)} \Longrightarrow \sup_{\Theta\in\cH(\bT)}\EE\left[U(X_T^{\Theta})\right] = \sup_{\Theta\in\cH^+(\bT)}\EE\left[U(X_T^{\Theta})\right].$$
\end{proposition}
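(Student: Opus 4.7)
The plan is to prove the non-trivial inequality
$$\sup_{\Theta\in\cH(\bT)}\EE[U(X_T^{\Theta})] \le \sup_{\Theta\in\cH^+(\bT)}\EE[U(X_T^{\Theta})]$$
(the reverse being automatic from the inclusion $\cH^+(\bT)\subset\cH(\bT)$) by establishing the stronger pointwise statement: every $\Theta\in\cH(\bT)$ with $\EE[U(X_T^{\Theta})]>-\infty$ already belongs to $\cH^+(\bT)$. Once this is shown, the two admissible classes contribute identical families of finite-utility outcomes and the suprema coincide trivially.

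Fix such a $\Theta$. The hypothesis $\sup\{x\in\bR:U(x)=-\infty\}=0$ forces $X_T^{\Theta}>0$ almost surely, for otherwise $U(X_T^{\Theta})=-\infty$ on a set of positive $\PP$-measure and the expected utility could not be finite. In particular $(\Theta\cdot\cS)_T \ge -X_0$ almost surely, so Lemma \ref{lem:admisible.strategy}, whose applicability is granted by the standing assumption $(NFLVR)_{\cH(\bT)}$, yields $\Theta\in\cH_{X_0}(\bT)$; equivalently $X_t^{\Theta}\ge 0$ for every $t\in[0,T]$ almost surely.

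The remaining task is to upgrade non-negativity to the strict positivity required by $\cH^+(\bT)=\cH_{X_0}^*(\bT)$. I would introduce the $\bT$-stopping time $\tau=\inf\{t\in[0,T]:X_t^{\Theta}=0\}$ with $\inf\emptyset=+\infty$, and consider the post-$\tau$ strategy $\tilde\Theta_t=\Theta_t\,\mathbf{1}_{\{t>\tau\}}$. Using continuity of $\cS$ and standard stopping properties of the \Ito integral one obtains $(\tilde\Theta\cdot\cS)_t = X_t^{\Theta}-X_{t\wedge\tau}^{\Theta}\ge 0$, so that $\tilde\Theta\in\cH_0(\bT)\subset\cH(\bT)$ produces a non-negative terminal payoff from zero initial wealth. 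The $(NA)_{\cH(\bT)}$ property implied by $(NFLVR)_{\cH(\bT)}$ then forces $(\tilde\Theta\cdot\cS)_T=0$ almost surely. On $\{\tau\le T\}$ this reads $X_T^{\Theta}=0$, which contradicts $X_T^{\Theta}>0$ a.s.\ unless $\PP(\tau\le T)=0$. Therefore $X_t^{\Theta}>0$ for every $t\in[0,T]$ almost surely, i.e.\ $\Theta\in\cH^+(\bT)$.

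The step I expect to be most delicate is the verification that $\tilde\Theta$ is a bona-fide element of $\cH(\bT)$: measurability and the square-integrability condition of Definition \ref{integrability.condition} for the stopped process, together with the identification $(\tilde\Theta\cdot\cS)=X^{\Theta}-X^{\Theta}_{\cdot\wedge\tau}$. This rests on $\tau$ being a $\bT$-stopping time -- which follows from the right-continuity of $\bT$ combined with the continuity of $X^{\Theta}$ -- and on the usual composition rules for stochastic integrals against continuous semi-martingales. After that technical verification, the arbitrage-theoretic punchline is a single application of $(NA)_{\cH(\bT)}$ inherited from $(NFLVR)_{\cH(\bT)}$.
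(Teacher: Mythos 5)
Your core mechanism is the same as the paper's: under $(NFLVR)_{\cH(\bT)}$, finiteness of $\EE[U(X_T^{\Theta})]$ forces the terminal wealth to be non-negative, and Lemma \ref{lem:admisible.strategy} upgrades this to $X_t^{\Theta}\geq 0$ for all $t$; the paper's own two-line proof is exactly this step read contrapositively. Your extra stopping-time argument is aimed at the strict positivity required by $\cH^+=\cH^*_{X_0}$, a boundary case the paper silently skips, and that part is essentially sound (one small technical remark: with the paper's definitions, $(NA)_{\cH(\bT)}$ only constrains payoffs in $K(\cA)\cap L_+^{\infty}$, so for the possibly unbounded payoff $(\tilde\Theta\cdot\cS)_T$ you should argue via $(NFLVR)$ applied to a truncation such as $\min\{(\tilde\Theta\cdot\cS)_T,1\}\in C(\cH)\cap L_+^{\infty}$).

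The genuine gap is in the very first claim, on which your whole pointwise strategy rests: the hypothesis $\sup\{x:U(x)=-\infty\}=0$ does \emph{not} imply $U(0)=-\infty$, only that $U(x)=-\infty$ for $x<0$. For the paper's own CRRA utilities $U_\gamma$ with $0<\gamma<1$ one has $U_\gamma(0)=\gamma-1/\gamma$ finite. In that case finiteness of the expected utility gives only $X_T^{\Theta}\geq 0$ a.s., and your concluding contradiction (``$X_T^{\Theta}=0$ on $\{\tau\leq T\}$ contradicts $X_T^{\Theta}>0$ a.s.'') disappears: a strategy whose wealth touches zero and whose terminal wealth is zero on that event has finite expected utility but is not in $\cH^+$, so the stronger statement ``finite utility implies $\Theta\in\cH^+$'' is false in the generality of the proposition. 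The equality of suprema still holds, but it needs one more step, e.g.\ a scaling argument: for $\epsilon\in(0,1)$ set $\Theta^{\epsilon}=(1-\epsilon)\Theta$, whose wealth is $\epsilon X_0+(1-\epsilon)X_t^{\Theta}\geq \epsilon X_0>0$, so $\Theta^{\epsilon}\in\cH^+(\bT)$, and by concavity and monotonicity of $U$, $\EE[U(X_T^{\Theta^{\epsilon}})]\geq \epsilon U(X_0)+(1-\epsilon)\EE[U(X_T^{\Theta})]$, which as $\epsilon\to 0$ yields $\sup_{\Theta\in\cH(\bT)}\EE[U(X_T^{\Theta})]\leq\sup_{\Theta\in\cH^+(\bT)}\EE[U(X_T^{\Theta})]$. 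To be fair, the paper's proof glosses over the same zero-boundary issue; as written, your argument is complete only for utilities with $U(0)=-\infty$, such as the logarithm.
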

\begin{proof}
    We assume there exists $\Theta\in\cH$ that $\EE[U(X_T^{\Theta})]>\sup_{\Theta\in\cH^+(\bT)}\EE\left[U(X_T^{\Theta})\right]$. 
    Then, $\Theta\not\in\cH^+$ and $X_t^{\Theta}\not\geq 0$ almost surely for some $t\in[0,T]$. 
    Like (NFLVR)$_{\cH(\bT)}$ holds, lemma \ref{lem:admisible.strategy} says that $X_T^{\Theta} < 0$ with positive probability and we conclude that $\EE[U(X_T^{\Theta})]=-\infty$ follows.
\end{proof}

\section{Examples}\label{sec:examples}
In this section, we show several examples to highlight the differences that may or may not exist by playing with one set of strategies or another. 
We focus on the following specific model, deeply studied in the literature, see in particular \cite{merton1969, karatzas1996}, in where the risky asset is given by a Geometric Brownian motion. 
In equation \eqref{asset.risky}, we set the drift and volatility processes to $\mu(t,S_t) = \eta_t S_t$ and $\sigma(t,S_t) = \xi_t S_t$, so that the resulting model is
\begin{subequations}\label{DS.SDE}
\begin{align}
dD_t &= D_t \, r \, dt, 
\label{D.SDE}\\
dS_t &= S_t \, \left(\eta_t dt + \xi_t dB_t\right)   \label{S.SDE}
\end{align}
\end{subequations}
where $r>0$ is the constant interest rate. The processes 
$\eta=\prT{\eta}$ and $\xi=\prT{\xi}$, that we refer to as the proportional drift and volatility, are assumed to be adapted to the natural filtration of the process $B$ with $\eta$, $\xi$ and $1/\xi$ bounded.
Whenever the agent plays with $\Theta\in\cH^+$, the wealth process $X$ is almost surely strictly positive, 
and therefore we can use an alternative form to express the SDE \eqref{def.wealth.int}, that is
\begin{align}\label{def.wealth.pos}
    \frac{dX_t}{X_t} = (1-\pi_t)\frac{dD_t}{D_t} + \pi_t\frac{dS_t}{S_t}.
\end{align}
where $\pi_t = S_t N_t / X_t $ and $N_t$ is defined in \eqref{def.wealth.sum}. 
The following result, proved in \cite{merton1969} and adapted here to our notation, gives the optimal strategies and the corresponding maximal expected utilities for the cases when the agent has no additional information (i.e. under the filtration $\bF$) and works with the strategies that do not allow bankruptcy (i.e. in the set $\cH^+$).
For more details we refer the reader to \cite{karatzas1996}.

\begin{theorem}[Merton, 1969]\label{th:merton1969}
Under the filtration $\bF$ the optimal strategy is
$$
\argsup_{\Theta\in\cH^+}\EE\left[ U_\gamma(X_T^{\Theta}) \right] 
= \frac{1}{1-\gamma}\frac{\eta_t-r}{\xi^2_t} \ , \quad  \gamma\in[0,1)
$$

 and the maximal expected utility is given, with $\gamma\in[0,1)$, by
 \begin{subequations}\label{merton.gain}
 \begin{align}
    u(\cH^+) 
    &= \ln X_0 + rT + \frac{1}{2} \int_0^T\EE\left[ \left(\frac{\eta_t-r}{\xi_t}\right)^2 \right] \, dt 
    & \text{ logarithmic utility}\ , \label{merton.gain.log}\\
    v_\gamma(\cH^+) 
    &= \frac{X_0^{\gamma}}{\gamma}\EE\left[\exp\left(\gamma rT +\frac{1}{2} \frac{\gamma}{1-\gamma} \int_0^T\left( \frac{\eta_t-r}{\xi_t} \right)^2 dt\right)\right] + \frac{\gamma^2-1}{\gamma}
    & \text{ CRRA utility}\ . \label{merton.gain.CRRA}
 \end{align}
\end{subequations}
\end{theorem}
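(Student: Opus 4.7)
The plan is to handle the logarithmic case ($\gamma=0$) and the strict CRRA case ($\gamma\in(0,1)$) separately, in each combining a pointwise maximization of a ``drift'' with a martingality argument for the associated stochastic integral. Throughout I work with strategies in $\cH^+$ so that \eqref{def.wealth.pos} applies and $X_t>0$, which permits free use of $\ln X_t$ and $X_t^{\gamma}$.

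For the logarithmic utility, applying \Ito's formula to $\ln X_t$ via \eqref{def.wealth.pos} gives
\[
    \ln X_T \;=\; \ln X_0 + \int_0^T \bigl(r + \pi_t(\eta_t - r) - \tfrac{1}{2}\pi_t^2 \xi_t^2\bigr)\,dt + \int_0^T \pi_t \xi_t \, dB_t.
\]
The standing boundedness of $\eta$, $\xi$, $1/\xi$ together with admissibility of $\pi$ implies that the stochastic integral is a true martingale and vanishes in expectation. The remaining Lebesgue integrand is a concave quadratic in $\pi_t$, maximized pointwise in $(t,\omega)$ by $\pi_t^{\star}=(\eta_t-r)/\xi_t^2$ with value $r+(\eta_t-r)^2/(2\xi_t^2)$. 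The same boundedness forces $\pi^{\star}$ to be bounded, so $\pi^{\star}\in\cH^+$; integrating in $t$ and taking expectation then recovers \eqref{merton.gain.log}.

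For the CRRA utility with $\gamma\in(0,1)$, integrating \eqref{def.wealth.pos} and raising to the $\gamma$th power yields the exponential representation
\[
    X_T^{\gamma} \;=\; X_0^{\gamma}\, e^{\gamma rT}\, \exp\!\left(\int_0^T \left[\gamma\pi_t(\eta_t-r)-\tfrac{\gamma(1-\gamma)}{2}\pi_t^2\xi_t^2\right]dt\right)\, \mathcal{E}\!\left(\gamma\int_0^{\cdot}\pi_s\xi_s\,dB_s\right)_T.
\]
The quadratic in $\pi_t$ inside the deterministic exponential is again maximized pointwise, now by $\pi_t^{\star}=(\eta_t-r)/((1-\gamma)\xi_t^2)$, giving value $\gamma(\eta_t-r)^2/(2(1-\gamma)\xi_t^2)$. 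For this candidate $\gamma\pi_t^{\star}\xi_t=\gamma(\eta_t-r)/((1-\gamma)\xi_t)$ is bounded, so Novikov's condition (Proposition \ref{novikov}) applies, the Dol\'eans--Dade exponential is a true $\PP$-martingale of mean one, and the corresponding Girsanov change of measure removes it. Taking expectation then produces \eqref{merton.gain.CRRA}.

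The delicate step is clearly the CRRA one, since the ``drift'' exponential and the stochastic exponential are coupled through the common $\pi$, and pointwise maximization of the drift does not \emph{a priori} maximize the joint expectation. I would close this gap by a standard verification argument: the candidate value process generated by $\pi^{\star}$ is a $\PP$-martingale, while for any other $\pi\in\cH^+$ the corresponding process is a supermartingale, thanks to concavity of the quadratic in $\pi_t$. Checking that $\pi^{\star}$ lies in $\cH^+$ (not merely in $\cH$) and the integrability of all exponential factors both reduce, once more, to the boundedness of $\eta$, $\xi$ and $1/\xi$ built into the model.
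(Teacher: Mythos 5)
The paper itself does not prove this statement (it is quoted from \cite{merton1969}, with \cite{karatzas1996} and \cite{Bauerle} cited for details), so your proposal has to stand on its own, and as written it has two gaps. In the logarithmic case you assert that, for every competitor $\pi\in\cH^{+}$, the integral $\int_0^T\pi_t\xi_t\,dB_t$ is a true martingale ``by admissibility and boundedness of the coefficients''. That is not true: Definition \ref{integrability.condition} only guarantees $\int_0^T\pi_t^2\xi_t^2\,dt<\infty$ almost surely, so the stochastic integral is merely a local martingale and its expectation need not vanish. Your computation is fine for the candidate $\pi_t^{*}=(\eta_t-r)/\xi_t^2$, which is bounded, but the optimality half --- $\EE[\ln X_T^{\pi}]\le$ the right-hand side of \eqref{merton.gain.log} for \emph{all} $\pi\in\cH^{+}$ --- needs an additional idea: either localization plus a careful passage to the limit (delicate, because the drift $r+\pi_t(\eta_t-r)-\tfrac12\pi_t^2\xi_t^2$ is bounded above but not below), or the standard deflator argument (for the state-price density $H$, $H_tX_t^{\pi}$ is a supermartingale, then apply Jensen), which is the route followed in \cite{karatzas1996}.

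The CRRA step has a more substantial flaw. Your exponential representation of $X_T^{\gamma}$ and the pointwise maximizer $\pi_t^{*}=(\eta_t-r)/((1-\gamma)\xi_t^2)$ are correct, but Girsanov does not simply ``remove'' the stochastic exponential: with $d\QQ/d\PP=\mathcal{E}\bigl(\gamma\int\pi^{*}\xi\,dB\bigr)_T$ one gets $\EE_{\PP}[A\,\mathcal{E}_T]=\EE_{\QQ}[A]$ for the drift factor $A=\exp\bigl(\tfrac{\gamma}{2(1-\gamma)}\int_0^T((\eta_t-r)/\xi_t)^2\,dt\bigr)$, and $\EE_{\QQ}[A]$ equals $\EE_{\PP}[A]$ only when $A$ is deterministic. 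When $\eta$ and $\xi$ are genuinely random ($\bF$-adapted), the myopic pointwise maximizer is in general \emph{not} optimal for power utility --- there is an intertemporal hedging term --- so the verification you sketch (``candidate is a martingale, all others supermartingales, by concavity of the quadratic'') cannot hold as stated; a correct verification requires the opportunity/value process from the HJB equation or convex duality, which is precisely why the paper defers to \cite{merton1969} and \cite{Bauerle}. In short, your argument proves the theorem in the deterministic-coefficient (original Merton) setting, where the outer expectation in \eqref{merton.gain.CRRA} is redundant, but not in the stated generality, and the logarithmic part needs the localization or deflator argument to close the upper bound.
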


In presence of insider information, as expected, the optimal strategies change since the agent may take advantage of the additional information she has privileged access to.
The following result computes the same quantities as in Theorem \ref{th:merton1969} but under the initial enlarged filtration $\bG$. The process $\alpha^G=\prTmenos{\alpha^G}$ appearing in the statement comes from the semi-martingale decomposition given in Proposition \ref{drift.decomposition}. The result for the logarithmic utility has been proved in \cite{karatzas1996} while the one for general CRRA utilities can be obtained by solving the corresponding HJB equation. Details can be found, for example, in \cite{Bauerle}.
\begin{theorem}
Under the filtration $\bG$ the optimal strategy is
$$
\argsup_{\Theta\in\cH^+(\bG)} \EE\left[ U_\gamma(X_T) \right] 
        = \frac{1}{1-\gamma}\left(\frac{\eta_t-r}{\xi^2_t} + \frac{\alpha^G_t}{\xi_t}\right) \ , \quad 
        \gamma\in[0,1)
$$
 and the maximal expected utility is given, with $\gamma\in[0,1)$, by
 \begin{subequations}\label{insider.gain}
 \begin{align}
    u^\bG(\cH^+) 
    &= 
    \ln X_0 + rT + \frac{1}{2}\int_0^T\EE\left[ \left(\frac{\eta_t-r}{\xi_t}\right)^2 \right] \, dt 
        + \frac{1}{2}\int_0^T\EE\left[ (\alpha^G_t)^2 \right] \, dt 
    & \makebox[0pt][r]{logarithmic utility}\ , \label{insider.gain.log}\\
    v_\gamma^\bG(\cH^+) 
    &=\frac{X_0^{\gamma}}{\gamma}\EE\left[\exp\left(\gamma rT +\frac{1}{2} \frac{\gamma}{1-\gamma} \int_0^T\left( \frac{\eta_t+\alpha^G_t\xi_t-r}{\xi_t} \right)^2 dt\right)\right] + \frac{\gamma^2-1}{\gamma}
    & \text{ CRRA utility}\ . \label{insider.gain.CRRA}
 \end{align}
\end{subequations}
\end{theorem}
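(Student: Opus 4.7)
The strategy is to reduce the insider's problem to a classical Merton-type problem in the enlarged filtration $\bG$. By Proposition~\ref{drift.decomposition}, writing $B_t = W_t + \int_0^t \alpha^G_u\,du$ with $W$ a $\bG$-Brownian motion, the dynamics \eqref{S.SDE} become
\begin{align*}
    dS_t = S_t\,\tilde\mu_t\,dt + S_t\,\xi_t\,dW_t, \qquad \tilde\mu_t \defeq \eta_t + \xi_t\,\alpha^G_t,
\end{align*}
so in $\bG$ the agent faces a geometric Brownian motion model with $\bG$-adapted drift $\tilde\mu_t$ and volatility $\xi_t$. The announced optimal feedback $\pi^*_t = (\tilde\mu_t - r)/[(1-\gamma)\xi_t^2]$ is then obtained by formally applying Theorem~\ref{th:merton1969} with $\eta$ replaced by $\tilde\mu$; the task is to justify this substitution and convert the resulting value into the two cleanly split forms \eqref{insider.gain.log}--\eqref{insider.gain.CRRA}.

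For the logarithmic case ($\gamma = 0$) the cleanest route is direct Itô calculus. For any $\Theta \in \cH^+(\bG)$ written in proportional form via \eqref{def.wealth.pos}, integration of $d \ln X^\Theta_t$ yields
\begin{align*}
    \ln X^\Theta_T - \ln X_0 = \int_0^T \left((1-\pi_t) r + \pi_t \tilde\mu_t - \tfrac12 \pi_t^2 \xi_t^2\right) dt + \int_0^T \pi_t \xi_t\,dW_t.
\end{align*}
Taking expectations (the stochastic integral is a true $\bG$-martingale under the boundedness of $\eta,\xi,1/\xi$ and square-integrability of $\alpha^G$) and maximizing the integrand pointwise in $\pi$ recovers both $\pi^*$ and the intermediate form $u^\bG(\cH^+) = \ln X_0 + rT + \tfrac12\,\EE\bigl[\int_0^T ((\tilde\mu_t - r)/\xi_t)^2\,dt\bigr]$. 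Expanding $(\tilde\mu_t - r)^2 = (\eta_t - r)^2 + 2\xi_t(\eta_t - r)\alpha^G_t + \xi_t^2 (\alpha^G_t)^2$ splits this into the three contributions of \eqref{insider.gain.log}, provided that the cross term vanishes in expectation. I would establish $\EE\bigl[\int_0^T \tfrac{\eta_t - r}{\xi_t}\alpha^G_t\,dt\bigr] = 0$ by plugging the identity $dB_t = dW_t + \alpha^G_t\,dt$ into the stochastic integral of the bounded $\bF$-predictable integrand $f_t \defeq (\eta_t - r)/\xi_t$: both $\int_0^T f_t\,dB_t$ and $\int_0^T f_t\,dW_t$ are square-integrable martingales starting at $0$ (in $\bF$ and in $\bG$, respectively), so comparing their zero expectations forces the Lebesgue drift integral to vanish as well.

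For the CRRA case ($\gamma \in (0,1)$) I would use the classical HJB/ansatz approach in $\bG$. Trying $V(t,x;\omega) = \tfrac{x^\gamma}{\gamma}\,h(t;\omega) - \tfrac{1}{\gamma} + \gamma$ with $h$ positive and $\bG$-adapted, the first-order condition in the HJB equation recovers $\pi^*_t$, and substitution reduces the equation to the linear BSDE
\begin{align*}
    h(t) = \EE\Bigl[\exp\Bigl(\gamma r (T-t) + \tfrac{\gamma}{2(1-\gamma)} \int_t^T \Bigl(\tfrac{\tilde\mu_s - r}{\xi_s}\Bigr)^2 ds\Bigr) \,\Big|\, \cG_t\Bigr],
\end{align*}
with terminal condition $h(T) = 1$. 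Setting $t = 0$ and inserting into the ansatz yields $v_\gamma^\bG(\cH^+)$ in the stated form; the verification that this candidate is a genuine supremum and that $\pi^* \in \cH^+(\bG)$ follows along the lines of \cite{Bauerle}. I expect the main difficulty to be the integrability machinery underlying both cases --- the true-martingale property of $\int \pi^*\xi\,dW$, admissibility of $\pi^*$, and the interchange of supremum and expectation --- all of which reduce to sufficient integrability of $\int_0^T (\alpha^G_t)^2\,dt$, as is available whenever the Novikov-type condition \eqref{novikov.condition.bG} holds.
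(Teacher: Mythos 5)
Your logarithmic half is sound and follows essentially the route the paper itself invokes (it defers to \cite{karatzas1996}): pass to the $\bG$-semimartingale decomposition of Proposition \ref{drift.decomposition}, maximize the Lebesgue integrand pointwise, and split the value using $\EE\left[\int_0^T \frac{\eta_t-r}{\xi_t}\,\alpha^G_t\,dt\right]=0$. Your martingale-comparison argument for that identity is correct whenever $\int_0^T\EE\left[(\alpha^G_t)^2\right]dt<\infty$ (it amounts to the fact that $\alpha^G_t$ has zero $\cF_t$-conditional mean), and when that integral is infinite both sides of \eqref{insider.gain.log} are $+\infty$ by a truncation argument, so the formula survives.

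The genuine gap is in the CRRA step. Since $\eta$, $\xi$ and $\alpha^G$ are stochastic, your $h$ must be a genuinely $\bG$-adapted process, and then the ansatz $V=\frac{x^\gamma}{\gamma}h(t;\omega)-\frac1\gamma+\gamma$ cannot be inserted into a pointwise HJB equation: $h$ carries a martingale part, $dh_t=\mu^h_t\,dt+Z_t\,dW_t$, and It\^o applied to $h_t X_t^\gamma$ produces the covariation term $\gamma X_t^\gamma\pi_t\xi_t Z_t\,dt$. The first-order condition then yields, with $\tilde\mu_t=\eta_t+\xi_t\alpha^G_t$, the optimizer $\pi^*_t=\frac{1}{1-\gamma}\left(\frac{\tilde\mu_t-r}{\xi_t^2}+\frac{Z_t}{\xi_t h_t}\right)$, i.e.\ an intertemporal hedging term on top of the myopic one stated in the theorem, and after substitution the equation for $h$ is a BSDE that is quadratic in $Z$, not linear; its solution is the plain conditional expectation you wrote only if $Z\equiv0$, i.e.\ $h$ deterministic, which fails here because $\int_t^T\left(\frac{\tilde\mu_s-r}{\xi_s}\right)^2 ds$ is random. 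Put differently, for power utility with a stochastic market price of risk the value is not obtained by simply exponentiating $\frac{\gamma}{2(1-\gamma)}\int_0^T\theta_t^2\,dt$ inside an expectation; your ``reduction to a linear BSDE'' silently discards exactly the terms that distinguish the two. To reach \eqref{insider.gain.CRRA} and the myopic $\pi^*$ you need what the paper actually points to --- the Markovian HJB treatment of \cite{Bauerle} in the setting where the coefficients (given $G$) are functions of a state variable --- or a duality/martingale-method computation; as written, your CRRA derivation does not establish the claimed formula.
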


From the result above it follows that $u^\bG(\cH^+) <+\infty$ if and only if $\int_0^T\EE\left[ (\alpha_t^G)^2 \right]dt<+\infty$. 

As it is known, for the natural filtration $\bF$, an ELMM can be found and we conclude that NFLVR$_{\cH(\bF)}$ holds true. In the following result, we specify if the maximal expected utility of the agent that plays with the filtration $\bF$ for the different utilities that we have defined is bounded or not.
\begin{proposition}\label{finiteness.under.bF}
Under the modeling assumptions \eqref{DS.SDE}, $v_\gamma^\bF(\cH^+)<+\infty$ for $0\leq \gamma < 1$ and $v^\bF(\cH) = v^\bF(\cH^+) = +\infty$.
\end{proposition}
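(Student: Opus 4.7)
The plan is to handle the two parts of the statement separately: the finiteness $v_\gamma^\bF(\cH^+) < +\infty$ for $\gamma \in [0,1)$, and the divergence $v^\bF(\cH) = v^\bF(\cH^+) = +\infty$.

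For the finiteness, I would invoke Theorem \ref{th:merton1969} directly. The standing boundedness assumption on $\eta$, $\xi$ and $1/\xi$ implies that the Sharpe-ratio process $\lambda_t \defeq (\eta_t - r)/\xi_t$ is uniformly bounded, say $|\lambda_t| \leq K$ almost surely for all $t \in [0,T]$. For the logarithmic utility, formula \eqref{merton.gain.log} then gives $u(\cH^+) \leq \ln X_0 + rT + K^2 T/2 < +\infty$. For $\gamma \in (0,1)$, the exponent appearing inside the expectation in \eqref{merton.gain.CRRA} is pathwise bounded by $\gamma r T + \gamma K^2 T/(2(1-\gamma))$, so the expectation is finite and $v_\gamma(\cH^+) < +\infty$.

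For the divergence, since $\cH^+ \subset \cH$ we have $v^\bF(\cH) \geq v^\bF(\cH^+)$, and it suffices to show $v^\bF(\cH^+) = +\infty$. I would consider the one-parameter family of scaled Merton-type strategies with proportion $\pi_t^c \defeq c\,\lambda_t/\xi_t$ for $c > 0$, each yielding a strategy $\Theta^c \in \cH^+$ via \eqref{def.wealth.pos}. Solving the corresponding linear SDE in closed form gives
\begin{equation*}
X_T^c = X_0 \, e^{rT} \, \exp\!\left( c \int_0^T \lambda_t^2 \, dt \right) \, \mathcal{E}\!\left( c \int_0^T \lambda_t \, dB_t \right),
\end{equation*}
and since $c\,\lambda$ is bounded for each fixed $c$, the Novikov condition \eqref{novikov.condition} holds, so the Dol\'eans--Dade exponential on the right is a true martingale of mean one. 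Applying the Cameron--Martin--Girsanov change of measure of Proposition \ref{novikov} with $d\QQ_c \defeq \mathcal{E}(c \int_0^T \lambda_t\,dB_t)\,d\PP$ one obtains
\begin{equation*}
\EE[X_T^c] = X_0 \, e^{rT} \, \EE_{\QQ_c}\!\left[\exp\!\left(c \int_0^T \lambda_t^2\,dt\right)\right],
\end{equation*}
and by Jensen's inequality this is at least $X_0\,e^{rT}\exp\!\left(c\,\EE_{\QQ_c}\!\left[\int_0^T \lambda_t^2\,dt\right]\right)$, which tends to $+\infty$ as $c \to +\infty$ provided the Sharpe ratio is not pathwise identically zero.

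The main obstacle I expect is precisely this non-triviality assumption on $\lambda$. If instead $\eta_t \equiv r$ almost everywhere on $[0,T]\times\Omega$, then $\PP$ itself is an ELMM, the discounted wealth of every $\Theta \in \cH$ is a $\PP$-supermartingale, and one actually has $v^\bF(\cH) \leq X_0 e^{rT} < +\infty$; hence the statement implicitly requires that the risk premium be non-degenerate, which is the natural standing hypothesis for any non-trivial financial market. A minor technical point is the lower bound $\EE_{\QQ_c}[\int_0^T \lambda_t^2\,dt] \geq \delta > 0$ for all sufficiently large $c$, which in the genuinely stochastic case uses the equivalence $\QQ_c \sim \PP$ together with the non-negativity of $\lambda^2$ to retain the divergence in the exponent.
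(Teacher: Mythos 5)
Your treatment of the finiteness part coincides with the paper's: boundedness of $\lambda_t=(\eta_t-r)/\xi_t$ plugged into the Merton formulas \eqref{merton.gain.log}--\eqref{merton.gain.CRRA} gives $v_\gamma^\bF(\cH^+)<\infty$ for $\gamma\in[0,1)$. That part is fine.

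The divergence part is where you depart from the paper, and the step you call a ``minor technical point'' is a genuine gap. After the Girsanov step the measure $\QQ_c$ depends on $c$, and the equivalence $\QQ_c\sim\PP$ only yields $\EE_{\QQ_c}\bigl[\int_0^T\lambda_t^2\,dt\bigr]>0$ for each fixed $c$; it does not give a lower bound $\delta>0$ uniform in $c$, which is what your Jensen bound needs (in fact you need this expectation not to decay faster than $1/c$). It can fail under the paper's standing assumptions: take $\xi\equiv 1$ and $\eta_t=r+1/(1+B_t^2)$, so that $\lambda_t=1/(1+B_t^2)$ is bounded with $\eta,\xi,1/\xi$ bounded. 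Under $\QQ_c$ the process $B$ acquires the drift $c\lambda_t$, so it grows like $(3ct)^{1/3}$, and one computes $\EE_{\QQ_c}\bigl[\int_0^T\lambda_t^2\,dt\bigr]=O(1/c)$; hence $c\,\EE_{\QQ_c}\bigl[\int_0^T\lambda_t^2\,dt\bigr]$ stays bounded and the lower bound $X_0e^{rT}\exp\bigl(c\,\EE_{\QQ_c}[\int_0^T\lambda_t^2\,dt]\bigr)$ does not blow up. (The conclusion $\sup_c\EE[X_T^c]=+\infty$ is nevertheless true in the non-degenerate case, but your chain of bounds cannot deliver it: Jensen under $\QQ_c$ is too lossy because the exponent's fluctuations, not its $\QQ_c$-mean, carry the divergence.) The paper avoids this entirely by never changing the measure: it applies Jensen under $\PP$ to the closed-form expression \eqref{merton.gain.CRRA}, obtaining $v_\gamma(\cH^+)\ge \frac{X_0^{\gamma}}{\gamma}\exp\bigl(\gamma rT+\frac12\frac{\gamma}{1-\gamma}\EE[\int_0^T\lambda_t^2\,dt]\bigr)+\frac{\gamma^2-1}{\gamma}$, and lets $\gamma\to1^-$ using $v^\bF(\cH^+)\ge v_\gamma^\bF(\cH^+)$ from Proposition \ref{comparison.gain}; there the diverging coefficient $\frac{\gamma}{1-\gamma}$ multiplies the fixed constant $\EE_\PP[\int_0^T\lambda_t^2\,dt]$, so no uniformity issue arises. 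If you wish to keep your scaled family $\pi^c=c\lambda/\xi$, the clean repair is to note that for $c=1/(1-\gamma)$ this is the CRRA-optimal strategy and $\EE[X_T^c]\ge\EE[U_\gamma(X_T^c)]=v_\gamma^\bF(\cH^+)\to\infty$, i.e.\ to fall back on the paper's computation rather than the $\QQ_c$-Jensen bound. Finally, your observation that non-degeneracy ($\EE[\int_0^T\lambda_t^2\,dt]>0$) is implicitly required is correct, but it applies equally to the paper's own proof (if $\eta\equiv r$ then $\PP$ is an ELMM and $v^\bF(\cH)\le X_0e^{rT}$), so it is a shared implicit hypothesis rather than a defect specific to your argument.
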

\begin{proof}
For $0\leq\gamma<1$ the result follows by the boundedness of the processes $\eta$ and $1/\xi$ and the following expression
\begin{equation}
v_\gamma({\cH^+}) = \frac{X_0^{\gamma}}{\gamma}\exp(\gamma r T) \, \EE\left[ \exp\left( \frac{1}{2}\frac{\gamma}{1-\gamma} \int_0^T\left( \frac{\eta_t-r}{\xi_t} \right)^2dt \right)\right]  + \frac{\gamma^2-1}{\gamma} <+\infty\ .
\end{equation}
For $\gamma = 1$, using \emph{Jensen's inequality}, we get
\begin{equation}
    v({\cH^+}) = \lim_{\gamma\to 1}v_\gamma({\cH^+}) \geq \lim_{\gamma\to 1} \frac{X_0^{\gamma}}{\gamma}\exp\left( \gamma rT +\frac{1}{2} \frac{\gamma}{1-\gamma}\EE\left[ \int_0^T\left( \frac{\eta_t-r}{\xi_t} \right)^2dt \right]\right) + \frac{\gamma^2-1}{\gamma} = +\infty\ . 
\end{equation}
\end{proof}
\subsection{Example of (A)$_\bG$ with \texorpdfstring{$v_\gamma^\bG(\cH^+)=\infty$ for $\gamma\in[0,1]$}{v(gamma)(H^+)=infinity for all gammas}}
\label{sec:example.1}

Fixing $G=B_T$, we consider the enlargement of filtration 
$\bG = \bF^{\,\sigma(B_T)}$.
This implies that the insider agent knows since the time $t=0$ the final value $B_T(\omega)$ for any $\omega\in\Omega$. 
The semi-martingale decomposition in the filtration $\bG$ is
\begin{align}\label{semimartingale.decomposition1}
    dB_t =  \alpha_t^{B_T}dt + dW_t \ ,
    \quad \text{ where } \quad \alpha_t^{B_T} = \frac{B_T-B_t}{T-t}\ , \quad 0 \leq t < T
\end{align}
and the process $W=\prT{W}$ is a $\bG$-Brownian motion. 
Applying \Ito calculus, if we use logarithmic utility and the strategy set $\cH^+(\bG)$, equation \eqref{def.wealth.pos} has the following exact solution
$$ \ln X_T = \ln X_0 + \int_0^T \left( r(1-\pi_t) + \eta_t\pi_t +{\pi_t\xi_t}\frac{B_T-B_t}{T-t}-\frac{1}{2}\pi_t^2\xi_t^2 \right) dt + \int_0^T \pi_t\xi_tdW_t \ .$$
By pointwise maximizing the expectation of the equation above, 
we get the optimal strategy
\begin{align}\label{strategy.precise.value}
    \pi_t^*(B_T) = \argsup_{\pi_t \in \cH^+(\bG)}\EE[\ln X_T] = \frac{\eta_t - r}{\xi^2} + \frac{1}{\xi_t}\frac{B_T-B_t}{T-t}
\end{align}
that implies the following pathwise capital gain
\begin{align}\label{opt.wealth.karatzas}
     X_T = X_0\exp\left(rT + \frac{1}{2}\int_0^T(\pi_t^*(B_T) \xi_t)^2dt +\int_0^T \pi_t^*(B_T) \xi_tdW_t  \right)
\end{align}
\begin{proposition}\label{prop.inf.u.case}
Let $G = B_T$, then $v^\bG(\cH) = v_\gamma^\bG(\cH^+) = +\infty$,\ $\forall \gamma\in[0,1]$.
In addition the stronger result, $\lim_{t \to T} \ln(X_t^{\pi^*(B_T)}) = +\infty$, holds almost surely.
\end{proposition}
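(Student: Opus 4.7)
The plan is to work directly with the explicit form \eqref{opt.wealth.karatzas} of the optimal wealth and show that the quadratic variation term dominates. Writing $\langle M\rangle_t \defeq \int_0^t (\pi^*_u(B_T)\xi_u)^2\,du$, equation \eqref{opt.wealth.karatzas} gives, for $t<T$,
\begin{equation*}
\ln X_t \;=\; \ln X_0 + rt + \tfrac{1}{2}\langle M\rangle_t + \int_0^t \pi^*_u(B_T)\xi_u\,dW_u .
\end{equation*}
The first step is to prove the pathwise divergence $\langle M\rangle_t \to +\infty$ as $t\to T$ almost surely. Since $\pi^*_u(B_T)\xi_u = (\eta_u-r)/\xi_u + \alpha_u^{B_T}$ with $\alpha_u^{B_T}=(B_T-B_u)/(T-u)$ and the first summand is bounded, by an elementary inequality it suffices to prove that
\begin{equation*}
    I_T \defeq \int_0^T \frac{(B_T-B_u)^2}{(T-u)^2}\,du \;=\; +\infty \quad \PP\text{-a.s.}
\end{equation*}

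For this key step I would perform the time-reversal $\hat B_s = B_T-B_{T-s}$, which is a standard Brownian motion, so that $I_T = \int_0^T \hat B_s^2/s^2\,ds$. Brownian scaling shows that $I_T \stateq I_{cT}$ for every $c>0$, so the distribution of $I_T$ does not depend on $T$. On the other hand $T\mapsto I_T$ is strictly increasing on the event $\{I_T<+\infty\}$ because $\int_{T'}^T \hat B_s^2/s^2\,ds$ is a.s.\ strictly positive and finite for $0<T'<T$. Two random variables with identical distribution cannot satisfy $I_T>I_{T'}$ a.s.\ on a set of positive probability, so $\PP(I_T<+\infty)=0$, establishing the claim. (Equivalently, $\{I_T=+\infty\}\in\cF_{0+}$ under the time reversal, and Blumenthal's 0-1 law combined with the scaling identity forces probability one.)

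Next I would control the stochastic integral by Dambis-Dubins-Schwarz: there exists a $\bG$-Brownian motion $\beta$ with $\int_0^t\pi^*_u(B_T)\xi_u\,dW_u = \beta_{\langle M\rangle_t}$. The strong law of large numbers for Brownian motion gives $\beta_s/s\to 0$ a.s.\ as $s\to\infty$, and since $\langle M\rangle_t\to+\infty$ a.s.\ as $t\to T$, this yields $\beta_{\langle M\rangle_t}/\langle M\rangle_t\to 0$. Hence
\begin{equation*}
    \ln X_t \;=\; \ln X_0 + rt + \langle M\rangle_t\Bigl(\tfrac{1}{2}+o(1)\Bigr)\;\longrightarrow\;+\infty \quad \text{a.s.,}
\end{equation*}
which is the pathwise statement.

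To conclude $u^\bG(\cH^+)=+\infty$, I would introduce the truncated strategies $\pi^{*,\ep}_t \defeq \pi^*_t(B_T)\,\mathbbm{1}_{[0,T-\ep]}(t)$, each of which lies in $\cH^+(\bG)$ since $\int_0^{T-\ep}(\pi^*_u\xi_u)^2\,du<+\infty$ a.s. For these, the stochastic integral term has zero expectation, and direct computation of $\EE[(\pi^*_u\xi_u)^2]$ using $\EE[(B_T-B_u)^2]=T-u$ together with Cauchy-Schwarz on the cross term gives
\begin{equation*}
    \EE\bigl[\ln X_T^{\pi^{*,\ep}}\bigr] \;\geq\; \ln X_0 + rT + \tfrac{1}{2}\int_0^{T-\ep}\frac{du}{T-u} - C \;=\; \tfrac{1}{2}\ln(T/\ep) + O(1) \;\longrightarrow\; +\infty
\end{equation*}
as $\ep\to 0$, with $C$ a finite constant coming from the bounded terms. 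Thus $u^\bG(\cH^+)=+\infty$, and the chain of inequalities in Proposition~\ref{comparison.gain} extended to the insider filtration, $u^\bG(\cH^+)\leq v_\gamma^\bG(\cH^+)\leq v^\bG(\cH^+)\leq v^\bG(\cH)$, propagates the infinite value to every $\gamma\in[0,1]$. The only delicate point is the scaling/0-1 argument for the a.s.\ divergence of $I_T$; everything else is bookkeeping with the explicit solution and the Dambis-Dubins-Schwarz representation.
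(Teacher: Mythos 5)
Your proposal is correct, but it reaches both conclusions by a route that differs from the paper's. For the pathwise statement, the paper computes the stochastic integral $\int_0^t \alpha_s\,dB_s$ in closed form (It\^o applied to $x/(T-t)$ and $x^2/(T-t)$), so that after cancellation the only divergent term in $\ln X_t$ is the explicit $\tfrac12\int_0^t (T-s)^{-1}ds=\tfrac12\ln\frac{T}{T-t}$; you instead prove $\int_0^T(\alpha^{B_T}_u)^2du=+\infty$ a.s.\ by time reversal plus a scaling/monotonicity argument, and then dispose of the martingale part by Dambis--Dubins--Schwarz and the strong law for Brownian motion. Your argument is more robust: it never needs the closed-form cancellation, and it sidesteps the delicate point in the paper's wording that $(B_T-B_t)^2/(T-t)$ does not actually converge a.s.\ (its $\limsup$ is infinite by the iterated logarithm law; it is merely $o(\ln\frac{1}{T-t})$), whereas the paper's computation buys the explicit divergence rate $\tfrac12\ln\frac{T}{T-t}$, which your $\langle M\rangle_t(\tfrac12+o(1))$ only gives implicitly. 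For the expected-utility statement, the paper simply cites \cite{karatzas1996} for $u^\bG(\cH^+)=+\infty$ and then invokes Proposition~\ref{comparison.gain}; you rederive it self-containedly via the truncated strategies $\pi^{*,\ep}$ and the bound $\EE[\ln X_T^{\pi^{*,\ep}}]\geq \tfrac12\ln(T/\ep)+O(1)$, which is essentially the Pikovsky--Karatzas computation, and then propagate through the same chain of inequalities. Two small points of precision: the assertion ``two identically distributed random variables cannot satisfy $I_T>I_{T'}$ on a set of positive probability'' is false without the pointwise domination $I_{T'}\leq I_T$ -- which does hold here because the integrand is nonnegative, so make that explicit (e.g.\ compare $\EE[\arctan I_T]$ with $\EE[\arctan I_{T'}]$); and the Blumenthal $0$--$1$ law parenthetical by itself only shows the probability is $0$ or $1$, so the scaling/monotonicity step is still what rules out $0$.
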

\begin{proof}
 In \cite{karatzas1996}, it was proved that $\EE\left[\ln X_T^{\pi^*(B_T)}\right]$ is not bounded. 
 Using Proposition \ref{comparison.gain} we get the result for the first statement.

 For the second statement, we denote the function $\alpha_t^{B_T}$ in \eqref{semimartingale.decomposition1} by simply $\alpha_t$ and we use the definition $\beta_t= (\eta_t - r)/(\xi_t)$. 

From \eqref{opt.wealth.karatzas} we have that, for $0<s<t<T$,
\begin{align}
    \log \frac{X_T}{X_0 e^{rT}} 
    &= \frac{1}{2} \int_0^t (\beta_s^2 - \alpha_s^2) \, ds
            + \int_0^t (\beta_s + \alpha_s) \, dB_s \label{int.repr.log.Xt}
\end{align}

Using the semi-martingale decomposition \eqref{opt.wealth.karatzas}, see also \cite{ito1976} and \cite[Chapter VI]{Protter2005}, we have
\begin{equation}
    \int_0^t \alpha_s \, dB_s 
    = \int_0^t \frac{B_T-B_s}{T-s} \, dB_s 
    = B_T \int_0^t \frac{1}{T-s} \, dB_s
   - \int_0^t \frac{B_s}{T-s} \, dB_s \ . \label{int.alpha.dBs}
\end{equation}
By applying the \Ito's lemma to the functions $x/(T-t)$ and $x^2/(T-t)$ respectively, we get
\begin{align*}
    \int_0^t \frac{1}{T-s} \, dB_s 
    =& \frac{B_t}{T-t} - \int_0^t \frac{B_s}{(T-s)^2} \, ds \\
    \int_0^t \frac{B_s}{T-s} \, dB_s 
    =& \frac{1}{2}\frac{B_t^2}{T-t} - \frac{1}{2} \int_0^t \frac{B_s^2}{(T-s)^2} + \frac{1}{T-s} \, ds
\end{align*}
that substituted in \eqref{int.alpha.dBs} give
\begin{align}
    \int_0^t \alpha_s \, dB_s 
    &= \frac{1}{2} \int_0^t \frac{B_s^2 - 2 \, B_T \, B_s}{(T-s)^2} \, ds
    - \frac{1}{2} \frac{B_t^2 - 2 \, B_T \, B_t}{T-t} 
    + \frac{1}{2} \int_0^t \frac{1}{T-s} \, ds \ . \label{int.alpha.dBs.explicit}
\end{align}

\iftrue
In addition
\begin{align*}
    \frac{1}{2} \int_0^t \alpha_s^2 \, ds 
    &= \frac{1}{2} \int_0^t \frac{B_s^2 - 2B_T\, B_s+B_T^2}{(T-s)^2} \, ds 
\end{align*}
\fi

By substituting \eqref{int.alpha.dBs.explicit} in \eqref{int.repr.log.Xt} and by manipulating the expression we finally get
\begin{align*}
    \log \frac{X_T}{X_0 e^{rT}} 
    =& \frac{1}{2} \int_0^t \beta_s^2 \, ds
    + \int_0^t \beta_s \, dB_s + \frac{1}{2}\frac{B_T^2}{T}
    - \frac{1}{2} \frac{(B_T - B_t)^2}{T-t}  
    + \frac{1}{2} \int_0^t \frac{1}{T-s} \, ds \ . \label{int.repr.log.Xt.explicit}
\end{align*}

We have $(B_T - B_t)^2 / (T-t) \sim \chi^2_1$, so all terms but the integral on the right end side have well-defined limits as $t \to T$. 
Since the integral diverges almost surely we get the result.
\end{proof}

The fact that the optimal expected logarithmic utility is almost surely unbounded implies that the filtration $\bG$ admits arbitrage opportunities.
However the strategy $\pi^*(B_T)$ is quite complicated to implement as, according to \eqref{strategy.precise.value}, it is of finite quadratic variation,
while for this example, it is easy to see that the arbitrage can be achieved more easily by employing strategies with a more clear financial meaning, such as buy, hold and sell. 
We give in the following a precise definition of this kind of strategies as given in \cite[Definition 7.1]{Schachermayer1994a}.
\begin{definition}
A simple predictable strategy is a linear combination of processes of the form \hbox{$\Theta_t=M\bOne_{(T_1,T_2]}$}, where $M$ is $\cF_{T_1}$ measurable and $T_1$ and $T_2$ are finite stopping times with respect to the filtration $\cF_{T}$.
\end{definition}
When stopping time $T_1$ happens, the agent \emph{buys} a quantity $M\in\cF_{T_1}$ of shares of the risky asset at price $S_{T_1}$, borrowing the corresponding money from the riskless asset.
Until $T_2$, she \emph{holds} her position and then she \emph{sells} her shares at price $S_{T_2}$.

For the following proposition, we assume that the proportional volatility process is constant and the proportional drift process is deterministic.
Under this simplifying assumption, the insider information on $B_T$ is equivalent to the one on $S_T$ according to the following equation
$$ S_T = S_0\exp\left(\int_0^T\left(\eta_t-\frac{\xi^2}{2}\right)dt + \xi B_T\right) \ .$$
\begin{proposition}\label{prop:strategy.semi-infinite.inteval}
    Let $G = B_T$ and $X_0>0$, then $(A)_{\cH^+(\bG)}$ holds.
\end{proposition}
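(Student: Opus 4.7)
The plan is to exhibit a simple $\bG$-predictable buy-and-hold strategy in $\cH^+(\bG)$ whose terminal gain is strictly positive and bounded, thereby witnessing $K(\cH^+(\bG))\cap L^\infty_+\neq\{0\}$. The starting point is that, under the simplifying hypotheses that $\eta$ is deterministic and $\xi>0$ is constant, the closed form $S_T = S_0\exp\bigl(\int_0^T(\eta_u - \xi^2/2)\,du + \xi B_T\bigr)$ shows that $S_T$ is a measurable function of $B_T$, so at $t=0$ the insider already knows $S_T$; in particular the event $A = \{S_T > S_0 e^{rT}\}$ belongs to $\cG_0$ and has positive probability because $B_T$ is non-degenerate Gaussian.

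Next I would propose the strategy $\Theta_t = (M,N)\bOne_{(0,T]}$ with $\cG_0$-measurable, $t$-independent components $N = (X_0/S_T)\bOne_A$ (number of stock shares) and $M = X_0 - N S_0 = X_0\bigl(1 - (S_0/S_T)\bOne_A\bigr)$ (units of bond), the latter enforced by self-financing at $t=0$. On $A^c$ this amounts to holding the whole initial wealth in the bond, while on $A$ it is a long position in the stock; the key choice $N = X_0/S_T$ rather than the naive $N = X_0/S_0$ is what will cap the terminal wealth and deliver a \emph{bounded} contingent claim at time $T$.

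The verifications are then routine. The strategy $\Theta$ is $\bG$-predictable and self-financing by construction; for the super-$X_0$-admissibility, on $A$ both $M$ and $N$ are strictly positive (since $S_T > S_0 e^{rT} > S_0$), so $X_t = M D_t + N S_t > 0$ throughout $[0,T]$, while on $A^c$ one has $X_t = X_0 D_t > 0$ trivially, hence $\Theta\in\cH^+(\bG)$. For the arbitrage, on $A^c$ one reads off $(\Theta\cdot\cS)_T = X_0(e^{rT}-1)$, and on $A$ one computes $(\Theta\cdot\cS)_T = X_0(1-S_0/S_T)e^{rT}$, which using $S_0/S_T < e^{-rT}$ is strictly larger than $X_0(e^{rT}-1)$; together with the upper bound $(\Theta\cdot\cS)_T \leq X_0 e^{rT}$ this shows $(\Theta\cdot\cS)_T \in L^\infty_+\setminus\{0\}$. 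The most delicate step will be the simultaneous tuning of $N$---large enough to give a strict improvement on $A$, small enough to keep the wealth bounded---and the normalisation $N = X_0/S_T$ is precisely what turns the $S_T$ appearing in the terminal wealth into a cancellation rather than a source of unboundedness.
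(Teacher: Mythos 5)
Your proposal is correct, but it takes a genuinely different route from the paper's. The paper's argument is dynamic: it introduces the $\bG$-stopping time $\tau_\epsilon=\inf\{S_t< e^{-r(T-t)}(S_T-\epsilon)\}$ and lets the insider move her whole bond-grown wealth into the stock at $\tau_\epsilon$, ending with $e^{rT}X_0\,S_T/(S_T-\epsilon)$ on $\{\tau_\epsilon<T\}$ and $e^{rT}X_0$ otherwise; the arbitrage then rests on $\PP(\tau_\epsilon<T)>0$ (asserted rather than proved), and the same ``buy the dip below the known discounted target'' device is reused for the interval information in Proposition \ref{prop:strategy.finite.inteval}. Your strategy is static: at time $0$, on the $\cG_0$-event $A=\{S_T>S_0e^{rT}\}$ (which is $\cG_0$-measurable precisely under the deterministic-drift/constant-volatility assumption the paper imposes before the statement, and has positive probability because $B_T$ is Gaussian) you buy $N=X_0/S_T$ shares and keep the remainder in the bond. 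This buys two simplifications: the favourable event trivially has positive probability, with no hitting-time argument needed, and the scaling $N=X_0/S_T$ caps the terminal gain by $X_0e^{rT}$, so the claim you exhibit lies squarely in $L^\infty_+$ as required by the formal Definition \ref{def.NA.NFLVR} (the paper's payoff $e^{rT}X_0\,S_T/(S_T-\epsilon)$ is bounded only if $\epsilon$ is read as a fixed fraction of $S_T$); moreover, your verification that $X_T\geq e^{rT}X_0$ almost surely with strict inequality on $A$ gives the economically meaningful arbitrage relative to the bond, which is exactly the form of the paper's conclusion. What the dynamic rule buys in exchange is robustness: it only needs a known lower bound on the terminal price, which is why it transfers verbatim to the interval information of Section \ref{sec:example.2}, whereas your position sizing uses the exact value of $S_T$ (though replacing $S_T$ by the known lower bound $b_1$ would adapt it). Your admissibility check ($M,N\geq 0$, hence $X_t=MD_t+NS_t>0$, so $\Theta\in\cH^+(\bG)$) is correct and complete.
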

\begin{proof}
Let $0<\epsilon < S_T$ and we define the following stopping time.
\begin{equation}\label{stopping.time}
    \tau_\epsilon = \inf\{ S_t\ < e^{-r(T-t)}\left(S_T - \epsilon\right) \} \ .
\end{equation}
On the set $\{\tau_\epsilon<T\}$, the agent invests her money in the risky asset at time $\tau_\epsilon$. 
This strategy is modeled by,
$$ N_t = X_0 \frac{e^{r \, \tau_\epsilon}}{S_{\tau_\epsilon}}\bOne\{t\geq \tau_\epsilon\}. $$
On $\{\tau_\epsilon < T\}$, $X_T = e^{rT} X_0 \frac{S_T}{S_T-\epsilon} $ and on its complement $X_T = e^{rT}X_0$. 
As $\PP(\tau_\epsilon<T)>0$, we conclude that $X_T\geq e^{rT}X_0$ almost surely and $\PP(X_T > e^{rT}X_0) >0.$
\end{proof}
The Figure \ref{fig:tau} shows an example of the situation that is described in the proof. 
When the stopping time happens, the insider trader invests in the stock 
as she knows almost surely that she will realize a positive profit.
\begin{figure}[H]\centering
    \includegraphics{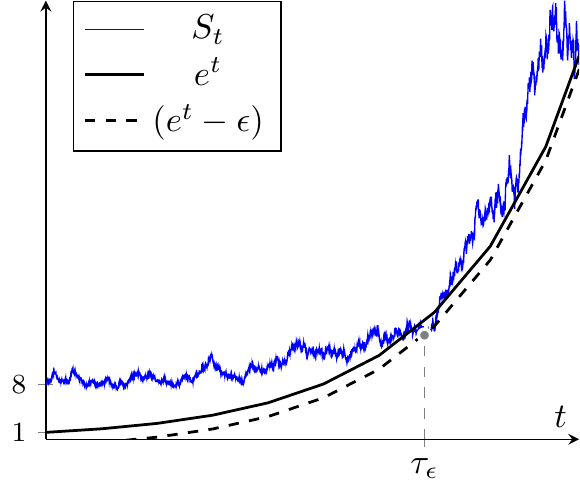}
    \caption{\label{fig:tau}
    An example of realization of the stopping time $\tau_\epsilon$ defined in \eqref{stopping.time}.}
\end{figure}

\begin{corollary}
    Let $G = B_T$, then we have $(A)_{\cH(\bG)},\ (FLVR)_{\cH^+(\bG)}$ and $(FLVR)_{\cH(\bG)}$.
\end{corollary}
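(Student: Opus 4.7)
The plan is to observe that this corollary is essentially a formal consequence of Proposition~\ref{prop:strategy.semi-infinite.inteval}, obtained by propagating the arbitrage found there through two elementary monotonicity facts: the inclusion $\cH^+(\bG)\subset\cH(\bG)$ between admissible classes, and the inclusion $K(\cA)\subset\overline{C(\cA)}$ between contingent-claim sets, which underlies the implication $(NFLVR)_{\cA}\Rightarrow (NA)_{\cA}$ (contrapositively $(A)_{\cA}\Rightarrow (FLVR)_{\cA}$) noted immediately after Definition~\ref{def.NA.NFLVR}.

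First I would establish $(A)_{\cH(\bG)}$. Proposition~\ref{prop:strategy.semi-infinite.inteval} produces an explicit strategy $\Theta\in\cH^+(\bG)$ with $X_T^\Theta\geq e^{rT}X_0$ almost surely and $\PP(X_T^\Theta>e^{rT}X_0)>0$, i.e.\ an arbitrage opportunity. Since $\cH^+(\bG)\subset\cH(\bG)$, the same strategy witnesses $(A)_{\cH(\bG)}$.

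Next I would deduce $(FLVR)_{\cH^+(\bG)}$ from $(A)_{\cH^+(\bG)}$ using the already-noted inclusion $K(\cH^+(\bG))\subset\overline{C(\cH^+(\bG))}$: any element of $K(\cA)\cap L^\infty_+$ is trivially an element of $\overline{C(\cA)}\cap L^\infty_+$, so the failure of $(NA)_{\cH^+(\bG)}$ forces the failure of $(NFLVR)_{\cH^+(\bG)}$. The same argument applied to $\cA=\cH(\bG)$, using the arbitrage produced in the preceding step, yields $(FLVR)_{\cH(\bG)}$.

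I do not anticipate any genuine obstacle: the entire content lies in the construction of the simple buy-hold-sell strategy given in Proposition~\ref{prop:strategy.semi-infinite.inteval}, and this corollary simply packages that construction under the two weaker notions and the larger admissibility class. Only care is needed to state the two inclusions in the correct direction and to invoke the contrapositive of $(NFLVR)\Rightarrow (NA)$ rather than the implication itself.
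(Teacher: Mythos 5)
Your proposal is correct and coincides with what the paper intends: the corollary is stated without proof precisely because it follows, exactly as you argue, from the arbitrage strategy of Proposition~\ref{prop:strategy.semi-infinite.inteval} together with the inclusion $\cH^+(\bG)\subset\cH(\bG)$ and the contrapositive of the implication $(NFLVR)_{\cA}\Rightarrow(NA)_{\cA}$ noted after Definition~\ref{def.NA.NFLVR}. No gap; your only extra care (direction of the inclusions and of the contrapositive) is the right one.
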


\subsection{Example of (A)$_\bG$ with \texorpdfstring{$v_0(\cH^+)<\infty$ and $v_\gamma^\bG(\cH^+)=\infty$ for $\gamma\in(0,1]$}{v(gamma)(H^+)=infinity for all gammas but 0}}
\label{sec:example.2}
Here we analyze another example of enlargement of filtrations. It was introduced by \cite{karatzas1996} to study the case when the insider trader knows a lower or an upper bound of the stock price in a certain future horizon time.
We use the initial enlargement $\bG = \bF^{\,\sigma(G)}$ with
\begin{equation}
\label{G.example.2}
    G = \bOne \{ c_1 \leq B_T\leq c_2 \}\ .
\end{equation}

In the following, we show that the additional information carried by the filtration $\bG$
implies a finite terminal logarithmic utility, and therefore different for the case analyzed in Section \ref{sec:example.1}.

We start by computing the explicit expression of the drift $\alpha^G = \prTmenos{\alpha^G}$ appearing in the $\bG$-semi-martingale decomposition given in Proposition \ref{drift.decomposition}. The proof is deferred to the appendix.

\begin{proposition}\label{alpha.int.dec}
    Let $G$ as in \eqref{G.example.2}, then
    $$\alpha^g_{t} 
    = \frac{(-1)^g}{\sqrt{T-t}}\ \frac{\Phi'\left(\frac{c_2-B_t}{\sqrt{T-t}}\right) - \Phi'\left(\frac{c_1-B_t}{\sqrt{T-t}}\right) }{\PP( G=g \, |B_t)}
    \ , \quad 0 \leq t < T$$
and more explicitly
\begin{equation} 
\alpha^g_{t} =
\left\{\begin{array}{ll}
    \displaystyle
    \frac{1}{\sqrt{T-t}}\ \frac{\Phi'\left(\frac{c_2-B_t}{\sqrt{T-t}}\right) - \Phi'\left(\frac{c_1-B_t}{\sqrt{T-t}}\right) }{\Phi\left(-\frac{c_2-B_t}{\sqrt{T-t}}\right) +\Phi\left(\frac{c_1-B_t}{\sqrt{T-t}}\right)}
    & \text{ when } g = 0 \ ,\\
    \displaystyle \frac{1}{\sqrt{T-t}}\ \frac{\Phi'\left(\frac{c_1-B_t}{\sqrt{T-t}}\right) - \Phi'\left(\frac{c_2-B_t}{\sqrt{T-t}}\right) }{\Phi\left(\frac{c_2-B_t}{\sqrt{T-t}}\right) - \Phi\left(\frac{c_1-B_t}{\sqrt{T-t}}\right)} 
    & \text{ when } g = 1 \ .
\end{array}\right.
\end{equation} 
where $\Phi$ denotes the distribution of a standard Gaussian random variable.
\end{proposition}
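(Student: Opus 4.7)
The plan is to compute $\alpha^g_t$ by first writing down the Jacod density $\eta^g_t$ explicitly for each $g\in\{0,1\}$, then applying It\^o's formula and matching the resulting stochastic integral with the relation $d\eta^g_t = \eta^g_t \alpha^g_t \, dB_t$ that follows from Proposition \ref{drift.decomposition}.

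The first step is to compute the conditional law of $G$ given $\cF_t$. Since $G$ is discrete, the Radon-Nikodym derivative becomes $\eta^g_t = \PP(G=g\,|\,\cF_t)/\PP(G=g)$. Using that $B_T-B_t$ is independent of $\cF_t$ and Gaussian with variance $T-t$, I can write
\begin{align*}
\PP(G=1\,|\,\cF_t) &= \Phi\!\left(\tfrac{c_2-B_t}{\sqrt{T-t}}\right) - \Phi\!\left(\tfrac{c_1-B_t}{\sqrt{T-t}}\right),\\
\PP(G=0\,|\,\cF_t) &= 1-\PP(G=1\,|\,\cF_t) = \Phi\!\left(-\tfrac{c_2-B_t}{\sqrt{T-t}}\right)+\Phi\!\left(\tfrac{c_1-B_t}{\sqrt{T-t}}\right).
\end{align*}
Note that both expressions are functions of $(t, B_t)$ only, so they are amenable to It\^o's formula.

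The second step is to note that from Proposition \ref{drift.decomposition} the bracket identity $\langle\eta^G,B\rangle_t=\int_0^t \eta_u^G\alpha_u^G\,du$ combined with the fact that $\eta^g_t$ is an $\bF$-martingale implies $d\eta^g_t=\eta^g_t\alpha^g_t\,dB_t$. Thus $\alpha^g_t$ is recovered as the Brownian coefficient of $\eta^g_t$ divided by $\eta^g_t$ itself. Applying It\^o's formula to the function of $(t,B_t)$ above, the drift terms vanish automatically because the martingale property forces them to cancel — so I only need to read off the coefficient of $dB_t$, which by the chain rule equals $-\frac{1}{\sqrt{T-t}}\bigl[\Phi'(\tfrac{c_2-B_t}{\sqrt{T-t}})-\Phi'(\tfrac{c_1-B_t}{\sqrt{T-t}})\bigr]$ (up to the factor $1/\PP(G=1)$ that cancels against $\eta^1_t$ in the denominator) for $g=1$, and its negative for $g=0$.

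Dividing by $\eta^g_t$ (equivalently, by $\PP(G=g\,|\,\cF_t)/\PP(G=g)$) yields the two displayed expressions for $g=0$ and $g=1$, with the $1/\PP(G=g)$ factors cancelling top and bottom. The alternating sign combines into the single factor $(-1)^g$ and the denominator is exactly $\PP(G=g\,|\,B_t)=\PP(G=g\,|\,\cF_t)$ (by the Markov property of $B$), giving the unified formula. No significant obstacle is expected here: the argument is a direct It\^o computation, and the only delicate point is to confirm that the heat equation $\partial_t u+\tfrac12\partial_{xx}u=0$ is satisfied by the two functions $\Phi(\tfrac{c_i-x}{\sqrt{T-t}})$, which guarantees the automatic cancellation of the drift terms so that the recipe $\alpha^g_t=(\text{Brownian coefficient})/\eta^g_t$ applies cleanly.
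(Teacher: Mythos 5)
Your proposal is correct and follows essentially the same route as the paper: write $\PP(G=g\,|\,\cF_t)$ as a difference of Gaussian distribution functions of $(t,B_t)$, apply It\^o's formula (the drift cancelling by the martingale/heat-equation property), and identify $\alpha^g_t$ from the relation $d\eta^g_t=\eta^g_t\alpha^g_t\,dB_t$ coming from the Dol\'eans-Dade representation of the Jacod density, with the constant $\PP(G=g)$ cancelling. The signs and the $g=0$ denominator $\Phi\bigl(-\tfrac{c_2-B_t}{\sqrt{T-t}}\bigr)+\Phi\bigl(\tfrac{c_1-B_t}{\sqrt{T-t}}\bigr)$ all check out.
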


Before the main result of this section we need to introduce the following technical lemma whose proof is deferred to the appendix.
\begin{Lemma}\label{lem:bound.I.finite}
The integral of the function $I(x,t)$ defined as 
\begin{equation}
    I(x,t) =\frac{1}{\sqrt{T-t}} \frac{ \left[ \Phi'(z_1)-\Phi'(z_2)\right]^2 }{\left[\Phi\left( z_2  \right)  - \Phi\left( z_1  \right)\right] \left[\Phi\left( -z_2  \right)  + \Phi\left( z_1  \right)\right]}, \label{def.I}
\end{equation}
in the variable $x\in\bR$ is uniformly bounded for $t\in[0,T]$.
In \eqref{def.I}, we have used the following definitions
$z_2 = {(c_2-x)}/{\sqrt{T-t}}$ and 
$z_1 = {(c_1-x)}/{\sqrt{T-t}}$. 
\end{Lemma}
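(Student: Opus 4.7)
The plan is to reduce the claim to a uniform bound on a one-parameter integral $J(a)$ and combine continuity on compacts with an asymptotic analysis as $a\to\infty$. The main obstacle will be the behaviour near the symmetry axis $y = a/2$, where the denominator becomes exponentially small in $a$.

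\textbf{Reduction and the candidate limit.} The change of variable $y = (c_2-x)/\sqrt{T-t}$ (so that $z_2 = y$ and $z_1 = y-a$ with $a := (c_2-c_1)/\sqrt{T-t}$) produces a Jacobian $\sqrt{T-t}$ that cancels the prefactor $1/\sqrt{T-t}$ in $I(x,t)$. Using $\Phi(-y) = 1-\Phi(y)$ and writing $p_a(y) := \Phi(y) - \Phi(y-a)$, one obtains
\[
\int_{\bR} I(x,t)\,dx \;=\; J(a) \;:=\; \int_{\bR} \frac{[p_a'(y)]^2}{p_a(y)\,(1-p_a(y))}\,dy.
\]
As $t\in[0,T)$, $a$ ranges over $[a_0,\infty)$ with $a_0 := (c_2-c_1)/\sqrt{T} > 0$, so the lemma is equivalent to $\sup_{a\geq a_0} J(a) < \infty$. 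A standard dominated-convergence argument shows $J$ is continuous on $(0,\infty)$, so $J$ is bounded on every compact $[a_0,A]$; the remaining task is $\limsup_{a\to\infty} J(a) < \infty$. Pointwise, the integrand converges to $[\Phi'(y)]^2/[\Phi(y)(1-\Phi(y))]$, whose integral $L$ is finite by Mills' asymptotic $1-\Phi(y) \sim \Phi'(y)/y$ as $y\to+\infty$ (the limit integrand decays like $y\,\Phi'(y)$ at $\pm\infty$ and equals $2/\pi$ at $y=0$); $L$ is the natural candidate for $\lim J(a)$.

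\textbf{Shrinking window near $y = a/2$.} Naive DCT fails because $1-p_a(y) = \Phi(-y) + \Phi(y-a)$ attains its minimum $2(1-\Phi(a/2))$ at $y = a/2$, which is exponentially small in $a$. Isolate the shrinking window $B_a := [a/2 - C/a,\, a/2 + C/a]$ for a suitable constant $C$. The symmetry $p_a(a-y) = p_a(y)$ forces $p_a'(a/2) = 0$ and $p_a''(a/2) = -a\,\Phi'(a/2)$; Taylor expansion therefore gives $p_a'(a/2+h) \approx -a\,\Phi'(a/2)\,h$ and $p_a(a/2+h) \approx p_a(a/2) - \tfrac{a\Phi'(a/2)}{2}h^2$. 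Combining with Mills' estimate $1-\Phi(a/2) \sim 2\Phi'(a/2)/a$ and the substitution $h = u/a$, a direct calculation yields
\[
\int_{B_a} \frac{[p_a'(y)]^2}{p_a(y)\,(1-p_a(y))}\,dy \;=\; O\bigl(\Phi'(a/2)\bigr) \;\longrightarrow\; 0 \qquad (a\to\infty).
\]

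\textbf{Complement of $B_a$ and conclusion.} For a fixed large $M$, partition $\bR\setminus B_a$ as: (i) the edges $[-M,M] \cup [a-M,\,a+M]$, on which both $p_a$ and $1-p_a$ are bounded below uniformly for $a$ large, so the integrand is bounded on a bounded set; (ii) the tails $(-\infty,-M] \cup [a+M,\infty)$, on which Mills' inequality gives $p_a(y) \gtrsim \Phi'(y)/|y|$ (with the mirror statement for the right tail) and $|p_a'(y)| \lesssim \Phi'(y)$, yielding the $a$-independent integrable envelope $|y|\Phi'(y)$; (iii) the middle strips $[M,\,a/2 - C/a] \cup [a/2 + C/a,\,a-M]$, where $1/p_a$ is uniformly bounded and
\[
\frac{[p_a'(y)]^2}{1-p_a(y)} \;\leq\; 2\frac{\Phi'(y)^2}{\Phi(-y)} + 2\frac{\Phi'(y-a)^2}{\Phi(y-a)},
\]
the two terms contributing $O(\Phi'(M))$ and $O(\Phi'(a/2))$ respectively after integration (again by Mills'). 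Summing these bounded contributions with the vanishing window estimate gives $\limsup_{a\to\infty} J(a) \leq L < \infty$, and combined with continuity on compacts this proves $\sup_{a\geq a_0} J(a) < \infty$, which is the assertion of the lemma.
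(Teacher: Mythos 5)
Your argument is correct, and it reaches the bound by a genuinely different route than the paper. You normalize via $y=(c_2-x)/\sqrt{T-t}$ so that everything depends on the single parameter $a=(c_2-c_1)/\sqrt{T-t}\in[a_0,\infty)$, and you prove $\sup_a J(a)<\infty$ by continuity of $J$ on compacts plus an asymptotic analysis as $a\to\infty$: Mills-ratio envelopes on the tails, crude bounds on the blocks around $y=0$ and $y=a$, the estimate $[p_a'(y)]^2/(1-p_a(y))\le 2\,\Phi'(y)^2/\Phi(-y)+2\,\Phi'(y-a)^2/\Phi(y-a)$ on the middle strips, and a Taylor expansion in the shrinking window around the symmetry point $y=a/2$, where $1-p_a$ is exponentially small but $p_a'$ vanishes. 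The paper instead splits the $x$-integral at $c_1$ and $c_2$, removes the $t$-dependence at once by monotonicity (replacing $s_t$ by its minimum $s_0$ in the denominators and discarding the smaller Gaussian term in the numerator), and then bounds each of the three resulting pieces by a fixed integrable function via elementary comparison with $1/z^2$; no limiting analysis in $t$ is needed. Your route costs more bookkeeping but is more informative, since it identifies the limiting value of the integral as $t\to T$ and pinpoints where a blow-up could occur; the paper's route is shorter and entirely elementary. Two small remarks on your write-up: first, your middle-strip bound is valid on the window $B_a$ as well and already gives a contribution of order $\Phi'(a/2)$ there, so the Taylor-expansion step, while correct, is redundant; second, by the symmetry $p_a(a-y)=p_a(y)$ the block near $y=a$ contributes the same limit as the block near $y=0$, so $\lim_{a\to\infty}J(a)=2L$ rather than $L$ --- harmless, since only finiteness of the limsup is needed for the lemma.
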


The following result shows that the logarithmic utility optimization allows for a finite optimum. This result was first conjectured in \cite{karatzas1996}, 
where they supported the conjecture via numerical results.
Then the conjecture was solved in the general entropy setting by \cite{AmendingerImkellerSchweizer1998}. Here we give a more direct proof on the line of the arguments given in \cite{karatzas1996}.

\begin{theorem}\label{thm:karatzas1996.conjecture}
    Let $G$ as in \eqref{G.example.2}, then
    $u^{\bG}(\cH^+)<\infty$.
\end{theorem}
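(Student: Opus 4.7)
The plan is to leverage the explicit formula \eqref{insider.gain.log} for $u^{\bG}(\cH^+)$. Since $\eta$, $\xi$ and $1/\xi$ are bounded, the summands $\ln X_0 + rT$ and $\tfrac{1}{2}\int_0^T \EE[((\eta_t - r)/\xi_t)^2]\,dt$ are automatically finite, so the theorem reduces to establishing that
$$\int_0^T \EE\bigl[(\alpha_t^G)^2\bigr]\,dt < +\infty.$$

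The first step would be to exploit the fact that $G$ is binary. Since $\alpha_t^g$ (as given in Proposition \ref{alpha.int.dec}) is a deterministic function of $B_t$, and hence $\cF_t$-measurable, I would condition on $\cF_t$ to obtain
$$\EE\bigl[(\alpha_t^G)^2\bigr] = \EE\!\left[\sum_{g\in\{0,1\}}(\alpha_t^g)^2\,\PP(G=g\mid\cF_t)\right].$$
A direct substitution of the formulas from Proposition \ref{alpha.int.dec}, together with the elementary identity $\sum_g \PP(G=g\mid B_t)=1$, collapses the inner sum into
$$\sum_{g\in\{0,1\}}(\alpha_t^g)^2\,\PP(G=g\mid B_t) = \frac{I(B_t,t)}{\sqrt{T-t}},$$
with $I$ precisely the function appearing in Lemma \ref{lem:bound.I.finite}. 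The cancellation here is the algebraic heart of the proof: the factor $[\Phi'(z_1)-\Phi'(z_2)]^2$ is common to both values of $g$, and the two conditional probabilities merge into their product in the denominator.

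Next I would take the expectation with respect to $B_t \sim \mathcal{N}(0,t)$ and bound the Gaussian density of $B_t$ by its pointwise maximum $1/\sqrt{2\pi t}$, giving
$$\EE\bigl[(\alpha_t^G)^2\bigr] \leq \frac{1}{\sqrt{2\pi t(T-t)}}\int_{\bR} I(x,t)\,dx.$$
By Lemma \ref{lem:bound.I.finite} the remaining $x$-integral is uniformly bounded in $t$ by some constant $C<+\infty$, so Fubini (all integrands are non-negative) together with the classical Beta-function identity $\int_0^T dt/\sqrt{t(T-t)}=\pi$ delivers
$$\int_0^T \EE\bigl[(\alpha_t^G)^2\bigr]\,dt \leq \frac{C}{\sqrt{2\pi}}\int_0^T \frac{dt}{\sqrt{t(T-t)}} = \frac{C\sqrt{\pi}}{\sqrt{2}} < +\infty,$$
which is the required estimate.

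The main obstacle, and what Lemma \ref{lem:bound.I.finite} is designed to overcome, is reconciling two competing singularities: pointwise $I(x,t)/\sqrt{T-t}$ blows up as $t\to T$, while the density of $B_t$ degenerates at $t=0$. What saves the argument is that both singularities are of square-root type, so their product is an integrable Beta kernel on $[0,T]$. Once one has the uniform $x$-bound provided by Lemma \ref{lem:bound.I.finite}, the rest is routine manipulation of the explicit drift formula in Proposition \ref{alpha.int.dec}.
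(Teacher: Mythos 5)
Your proof is correct and follows essentially the same route as the paper: reduce via \eqref{insider.gain.log} to showing $\int_0^T \EE[(\alpha_t^G)^2]\,dt<\infty$, express $\EE[(\alpha_t^G)^2]$ as a Gaussian integral of $I(x,t)$, invoke Lemma \ref{lem:bound.I.finite} to get a bound of the form $K/\sqrt{t(T-t)}$, and integrate the resulting Beta-type kernel. The only (harmless) difference is that you derive the identity $\EE[(\alpha_t^G)^2]=\EE[I(B_t,t)/\sqrt{T-t}\,]$ yourself by conditioning on $\cF_t$, whereas the paper imports the corresponding expression from \cite[Equation (4.25)]{karatzas1996}.
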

\begin{proof}
By using the expression of $\alpha^G$, given in \cite[Equation (4.25)]{karatzas1996}, we have
$$ \EE[(\alpha_t^G)^2] =  \frac{1}{2\pi\sqrt{T-t}\sqrt{2\pi t}} \int_{\mathbb{R}}  I(x,t) \, e^{-x^2/2} \, dx $$
where $I(x,t)$ is defined in \eqref{def.I} and the volatility process is bounded.
By \eqref{insider.gain.log}, it is enough to prove that, for some constant $K>0$,
$$ \EE[(\alpha_t^G)^2] \leq \frac{K}{\sqrt{t(T-t)}} \ . $$
This follows by Lemma \ref{lem:bound.I.finite}.
\end{proof}
\begin{proposition}
    Let $G$ as in \eqref{G.example.2}, then $v_\gamma^{\bG}(\cH^+)=+\infty$ for $\gamma \in [1/2,1]$.
\end{proposition}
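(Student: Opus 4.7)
Since $U_\gamma(x)$ is strictly increasing in $\gamma$ (Proposition~\ref{comparison.gain}), so is $v_\gamma^\bG(\cH^+)$, and it suffices to establish the claim at $\gamma = 1/2$ (the case $\gamma = 1$ then follows by monotonicity). Applying \eqref{insider.gain.CRRA} at $\gamma = 1/2$, the statement reduces to
$$\EE\!\left[\exp\!\left(\tfrac{1}{2} \int_0^T (\alpha^G_t + \beta_t)^2\, dt\right)\right] = +\infty, \qquad \beta_t := (\eta_t - r)/\xi_t,$$
with $\beta$ bounded by assumption. Using the elementary bound $(a+b)^2 \ge (1-\epsilon)\, a^2 - \epsilon^{-1} b^2$ together with the a.s.\ bound on $\int_0^T \beta_t^2\, dt$, it then suffices to prove that $\EE[\exp(\lambda Q^G_T)] = +\infty$ for some $\lambda < 1/2$ arbitrarily close to $1/2$, where $Q^G_T := \int_0^T (\alpha^G_t)^2\, dt$.

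The main step is a heavy-tail lower bound on $Q^G_T$ obtained by conditioning on the event $\{B_T \in (c_1, c_1 + \delta)\} \subset \{G = 1\}$, of probability $\asymp \delta$. By Proposition~\ref{alpha.int.dec}, writing $z_1 := (c_1 - B_t)/\sqrt{T-t}$ and $z_2 := (c_2 - B_t)/\sqrt{T-t}$, the contribution involving $\Phi'(z_2)$ is exponentially small whenever $B_t$ is bounded away from $c_2$, giving $\alpha^1(B_t, t) \approx h(z_1)/\sqrt{T-t}$ with $h(z) := \Phi'(z)/(1-\Phi(z))$ the standard-normal hazard rate. Conditionally on $B_T = c_1 + \delta$ the process $B$ is a Brownian bridge from $0$ to $c_1 + \delta$, and after the time change $u = \log(T/(T-t))$ the rescaled variable $z_1$ becomes, over the regime $\sqrt{T-t} \gg \delta$, a stationary Ornstein--Uhlenbeck-type process with standard Gaussian one-dimensional marginals. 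An ergodic computation for the additive functional $\int h(z_1)^2\, du$ then yields
$$\EE\bigl[Q^G_T \,\big|\, B_T = c_1 + \delta\bigr] \;\ge\; 2A \log(1/\delta) - C, \qquad A := \int_{-\infty}^\infty h(z)^2\, \Phi'(z)\, dz,$$
for a constant $C$ independent of $\delta$.

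Applying Jensen's inequality conditionally and integrating against the density of $B_T$ (bounded below on $(c_1, c_1 + \varepsilon)$) gives
$$\EE[\exp(\lambda Q^G_T)] \;\ge\; c\, e^{-\lambda C}\int_0^\varepsilon \delta^{-2\lambda A}\, d\delta,$$
which diverges precisely when $2\lambda A \ge 1$, so it is enough to check that $A > 1$; one may then pick $\lambda \in (1/(2A), 1/2)$ and conclude. The inequality $A > 1$ is a short exercise on the Gaussian hazard function: the Mill-type bound $h(z) > z$ for $z > 0$ gives $\int_0^\infty h^2\, \Phi' > 1/2$, and the identity $h(z)^2 = 1 + z h(z) - \mathrm{Var}(Y \mid Y > z)$ for $Y \sim N(0,1)$ together with the positive contribution from a neighborhood of $z = 0$ (where $h(0) = \sqrt{2/\pi}$) push the integral strictly above $1$. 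I expect the main obstacle to be the ergodic bridge computation in the previous paragraph: making it rigorous requires identifying $z_1$ after the time change as a stationary Markov process and then applying an ergodic theorem, with control on fluctuations, to its additive functional.
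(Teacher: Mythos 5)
Your reduction steps are fine: monotonicity in $\gamma$, the formula \eqref{insider.gain.CRRA} at $\gamma=1/2$, and the splitting $(a+b)^2\ge(1-\epsilon)a^2-\epsilon^{-1}b^2$ together with the a.s.\ bound on $\int_0^T\beta_t^2\,dt$ correctly reduce the claim to $\EE[\exp(\lambda Q^G_T)]=+\infty$ for a single $\lambda<1/2$. The genuine gap is that the core of your argument — the uniform lower bound $\EE[Q^G_T\mid B_T=c_1+\delta]\ge 2A\log(1/\delta)-C$ with $C$ independent of $\delta$ — is not proved but only motivated by an Ornstein--Uhlenbeck/ergodic heuristic, as you yourself acknowledge. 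To make it rigorous you must (i) control the error from replacing $\alpha^1_t$ by $h(z_1)/\sqrt{T-t}$ (the $\Phi'(z_2)$ term in the numerator and the replacement of $\Phi(z_2)-\Phi(z_1)$ by $1-\Phi(z_1)$ in the denominator, uniformly over bridge paths that may approach $c_2$), (ii) handle the non-stationary regimes near $t=0$ and in the transition zone $\sqrt{T-t}\asymp\delta$ with errors that are $O(1)$ uniformly in $\delta$, and (iii) actually prove $A=\int h^2\Phi'\,dz>1$, which your sketch only gestures at ($h>z$ on $(0,\infty)$ gives $1/2$, and the truncated-variance identity by itself only gives $A\le 1+\int zh\,\Phi'\,dz$, not a strict lower bound above $1$); the inequality is true (numerically $A\approx1.19$), but your whole route collapses without it, since you need strictly $\lambda<1/2$. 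A helpful simplification: no ergodic theorem is needed at all — since the integrand is nonnegative, Tonelli reduces the conditional expectation to explicit one-dimensional Gaussian marginals of the bridge, $z_1(t)\sim N\bigl(c_1\sqrt{T-t}/T-\delta t/(T\sqrt{T-t}),\,t/T\bigr)$ — but the uniform-in-$\delta$ error control in (i)--(ii) is still real work that the proposal does not contain.

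You should also compare with the much shorter soft argument the paper uses, which avoids all tail asymptotics. By Corollary \ref{arbitrage.interval}, $(FLVR)_{\cH(\bG)}$ holds for this $G$ (via the explicit buy-and-hold strategy), so no ELMM for $\bG$ can exist; by the contrapositive of Corollary \ref{cor:novikov.condition.bG} (equivalently, Girsanov applied with the full market-price-of-risk kernel $\beta+\alpha^G$ in $\bG$), the Novikov integral must diverge, i.e.
\begin{equation*}
\EE\left[\exp\left(\tfrac12\int_0^T\bigl(\beta_t+\alpha^G_t\bigr)^2\,dt\right)\right]=+\infty ,
\end{equation*}
and since $\tfrac12\,\gamma/(1-\gamma)\ge\tfrac12$ for $\gamma\ge1/2$, the expectation in \eqref{insider.gain.CRRA} is infinite. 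This works exactly at the coefficient $1/2$, so no $\epsilon$-room, no constant $A>1$, and no conditional tail estimate are needed. Your route, if completed, would be strictly stronger (it gives a quantitative tail of $Q^G_T$ and divergence already for $\gamma$ slightly below $1/2$, roughly $\gamma>1/(1+A)$), but as written it is a program with its central estimate missing rather than a proof.
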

\begin{proof}
By Corollary \ref{arbitrage.interval} the $(FLVR)_{\cH(\bG)}$ condition holds and therefore an ELMM can not exist. 
This implies that the Novikov condition, given in \eqref{novikov.condition.bG}, is not satisfied. We conclude that
      $\EE\left[ \exp\left( \frac{1}{2}\int_0^T(\alpha^G)^2 dt \right) \right] = +\infty.$
Moreover, if $\gamma\geq 1/2$, it follows that $\gamma/(1-\gamma)\geq 1$ and we can conclude that
\begin{align*}
  v_\gamma({\cH^+}) &= \frac{X_0^{\gamma}}{\gamma}\exp(\gamma r T)\EE\left[ \exp\left( \frac{1}{2}\frac{\gamma}{1-\gamma} \int_0^T\left( \frac{\eta_t-r}{\xi_t} + \alpha_t^G \right)^2dt \right)\right] + \frac{\gamma^2-1}{\gamma}\\
  &\geq \frac{X_0^{\gamma}}{\gamma}\exp(\gamma r T)\EE\left[ \exp\left( \frac{1}{2} \int_0^T\left( \frac{\eta_t-r}{\xi_t} + \alpha_t^G \right)^2dt \right)\right] + \frac{\gamma^2-1}{\gamma}=+\infty\ . 
\end{align*}
\end{proof}
Now, we are going to enunciate the results on arbitrage for the strategy sets $\cH(\bG)$ and $\cH^+(\bG)$.
Using the solution of the process $S$ at time $T$, under constant proportional volatility and proportional deterministic drift processes, this problem is equivalent to,
\begin{equation}\label{G.example.2.S}
    G = \bOne \Big\{b_1\leq  S_T \leq b_2 \Big\}\ ,\quad b_i = S_0\exp\left(\int_0^T\left(\eta_t - \frac{\xi^2}{2}\right) dt + \xi\ c_i \right) \ . 
\end{equation}

So, under these hypothesis of the coefficient processes, we can link the privileged information between the process $B=\prT{B}$ and $S=\prT{S}$. 
In the following statements, we use the assumption and notation given in \eqref{G.example.2.S}. 

\begin{proposition}\label{prop:strategy.finite.inteval}
    Let $G$ as in \eqref{G.example.2}, then  $(A)_{\cH^+(\bG)}$ holds.
\end{proposition}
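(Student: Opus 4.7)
The plan is to mimic the buy-and-hold construction of Proposition \ref{prop:strategy.semi-infinite.inteval}, but now exploiting the lower bound $S_T \geq b_1$ that becomes available on the event $\{G=1\}$. Because $G$ is $\cG_0$-measurable, the insider can condition her activity on this event and remain inactive on $\{G=0\}$. Using only a \emph{buy} action triggered by $b_1$ (rather than a symmetric short-selling leg triggered by $b_2$) is dictated by the $\cH^+$ requirement: the long buy-and-hold wealth is automatically strictly positive, whereas a short position could be driven below zero by upward spikes of $S$ and would at best belong to $\cH(\bG)$.

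Fix $\epsilon \in (0, b_1)$ and introduce the $\bG$-stopping time
\begin{equation*}
    \tau_\epsilon \;=\; \inf\bigl\{\, t \in [0,T]\;:\; e^{r(T-t)}\, S_t \,\leq\, b_1 - \epsilon \,\bigr\},
\end{equation*}
with $\inf\emptyset = +\infty$. I would define the strategy by $N_t = \bigl(X_0\, e^{r\tau_\epsilon}/S_{\tau_\epsilon}\bigr)\,\bOne\{t \geq \tau_\epsilon\}\,\bOne\{G=1\}$, with $M_t$ fixed by self-financing: the agent stays in the risk-free asset, and on $\{G=1,\,\tau_\epsilon<T\}$ she converts her whole wealth $X_0\, e^{r\tau_\epsilon}$ into stock at time $\tau_\epsilon$. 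On the atoms $\{G=0\}$ and $\{G=1\}$ of the $\cG_0$-partition the wealth equals respectively $X_0\, e^{rt}$ everywhere, or $X_0\, e^{rt}$ for $t<\tau_\epsilon$ and $X_0\, e^{r\tau_\epsilon}\, S_t / S_{\tau_\epsilon}$ for $t \geq \tau_\epsilon$; both trajectories are strictly positive almost surely, so $\Theta \in \cH^+(\bG)$.

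On $\{G=0\} \cup \{G=1,\,\tau_\epsilon \geq T\}$ we have $X_T = X_0\, e^{rT}$ exactly; on $\{G=1,\,\tau_\epsilon < T\}$, the inequalities $S_{\tau_\epsilon} \leq e^{-r(T-\tau_\epsilon)}(b_1-\epsilon)$ (from the definition of $\tau_\epsilon$ and continuity of $S$) together with $S_T \geq b_1$ (from $G = 1$) yield
\begin{equation*}
    X_T \;=\; \frac{X_0\, e^{r\tau_\epsilon}}{S_{\tau_\epsilon}}\,S_T \;\geq\; X_0\, e^{rT}\,\frac{b_1}{b_1-\epsilon} \;>\; X_0\, e^{rT}.
\end{equation*}
Hence $X_T \geq X_0\, e^{rT}$ almost surely, and arbitrage follows once $\PP(G=1,\,\tau_\epsilon < T) > 0$ is established. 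I view this positivity as the one non-trivial step, and I would derive it from the support theorem for Brownian motion: one exhibits a continuous path $\phi$ with $\phi(0)=0$ that pushes $e^{r(T-t)}\, S_t$ strictly below $b_1 - \epsilon$ at some $t_0 \in (0,T)$ while steering $\phi(T)$ into the interior of $[c_1,c_2]$, and observes that any sup-norm neighbourhood of $\phi$ has strictly positive Wiener measure and is contained in the event $\{G=1,\,\tau_\epsilon < T\}$ by continuity of the map $\phi \mapsto S$ in sup norm.
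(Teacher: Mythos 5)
Your construction is correct and is essentially the argument the paper intends, since its own proof simply adapts the buy-and-hold strategy of Proposition \ref{prop:strategy.semi-infinite.inteval} to the lower bound $b_1$ available on $\{G=1\}$. You merely make explicit two details the paper leaves implicit: deactivating the strategy on $\{G=0\}$ so the wealth stays strictly positive, and justifying $\PP(G=1,\,\tau_\epsilon<T)>0$ via the support theorem.
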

\begin{proof}
The result follows by a similar argument of Proposition \ref{prop:strategy.semi-infinite.inteval}.
\end{proof}
\begin{corollary}\label{arbitrage.interval}
Let $G$ as in \eqref{G.example.2},
then we have $(A)_{\cH(\bG)},\ (FLVR)_{\cH^+(\bG)}$ and $(FLVR)_{\cH(\bG)}$.
\end{corollary}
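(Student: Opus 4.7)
The plan is to derive all three claims as immediate formal consequences of Proposition \ref{prop:strategy.finite.inteval}, relying only on the set inclusion $\cH^+(\bG)\subset\cH(\bG)$ and the general implication $(A)_\cA\Longrightarrow(FLVR)_\cA$ that one reads off directly from the definitions of $K(\cA)$, $C_0(\cA)$ and $C(\cA)$ in Section \ref{sec:basic.notions}.

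First I would handle $(A)_{\cH(\bG)}$: Proposition \ref{prop:strategy.finite.inteval} supplies a strategy $\Theta\in\cH^+(\bG)$ whose terminal gain $(\Theta\cdot\cS)_T$ is a non-trivial element of $K(\cH^+(\bG))\cap L^\infty_+$. Since the same $\Theta$ belongs to the larger class $\cH(\bG)$, the inclusion $K(\cH^+(\bG))\subset K(\cH(\bG))$ yields a non-trivial element of $K(\cH(\bG))\cap L^\infty_+$, whence $(A)_{\cH(\bG)}$.

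For the two $(FLVR)$ statements I would invoke the following elementary observation: if $h\in K(\cA)\cap L^\infty_+$ and $h\neq 0$, then writing $h=h-0$ exhibits $h\in K(\cA)-L^0_+=C_0(\cA)$; boundedness of $h$ then gives $h\in C_0(\cA)\cap L^\infty=C(\cA)\subset\overline{C(\cA)}$, and $h\geq 0$ places $h$ in $\overline{C(\cA)}\cap L^\infty_+$. Applied to $\cA=\cH^+(\bG)$ this yields $(FLVR)_{\cH^+(\bG)}$, and applied to $\cA=\cH(\bG)$ (or equivalently through the containment $C(\cH^+(\bG))\subset C(\cH(\bG))$) it yields $(FLVR)_{\cH(\bG)}$.

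I do not foresee any real obstacle; the whole corollary is a short diagram chase once Proposition \ref{prop:strategy.finite.inteval} is granted. The only point worth a brief sanity check is that the terminal gain produced by the buy--hold--sell strategy of Proposition \ref{prop:strategy.finite.inteval} actually lies in $L^\infty$: the analogue of $X_T=e^{rT}X_0\, S_T/(S_T-\epsilon)$ from the proof of Proposition \ref{prop:strategy.semi-infinite.inteval} can blow up as $S_T$ approaches the threshold, but capping the position at the stopping time $\sigma=\inf\{t\geq\tau_\epsilon:X_t\geq L\}\wedge T$ for any fixed $L>e^{rT}X_0$ produces a strategy that still lies in $\cH^+(\bG)$, preserves the lower bound $X_T\geq e^{rT}X_0$ a.s.\ with strict inequality on a set of positive probability, and enforces $X_T\leq Le^{rT}$; so a bounded arbitrage element of $L^\infty_+$ is always available.
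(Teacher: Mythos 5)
Your proposal is correct and is exactly the (unwritten) argument the paper intends: $(A)_{\cH^+(\bG)}$ from Proposition \ref{prop:strategy.finite.inteval} passes to $(A)_{\cH(\bG)}$ via $\cH^+(\bG)\subset\cH(\bG)$, and both $(FLVR)$ statements follow from the inclusion $K(\cA)\subset\overline{C(\cA)}$ already noted after Definition \ref{def.NA.NFLVR}. Your final capping device is harmless but unnecessary here: on the relevant event $\{G=1\}$ one has $S_T\in[b_1,b_2]$ and the denominator is the constant $b_1-\epsilon$, so the terminal gain is automatically bounded by $e^{rT}X_0\bigl(b_2/(b_1-\epsilon)-1\bigr)$ and lies in $L^{\infty}_+$ without modification.
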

\begin{proposition}
    Let $G$ as in \eqref{G.example.2}, 
    then $v^{\bG}(\cH)=+\infty$.
\end{proposition}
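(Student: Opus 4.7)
The plan is to avoid any direct computation involving the insider drift $\alpha^G$ and instead transfer the existing result from Proposition \ref{finiteness.under.bF} by monotonicity of the supremum. Since Proposition \ref{finiteness.under.bF} already gives $v^\bF(\cH) = +\infty$, it suffices to show that $\cH(\bF) \subset \cH(\bG)$, because then
\[
v^\bG(\cH) = \sup_{\Theta\in\cH(\bG)} \EE[X_T^\Theta] \;\geq\; \sup_{\Theta\in\cH(\bF)} \EE[X_T^\Theta] = v^\bF(\cH) = +\infty.
\]

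The inclusion $\cH(\bF) \subset \cH(\bG)$ should be immediate from the definitions. First, by the construction of the initial enlargement in \eqref{bG.def} one has $\cF_t \subset \cF_t^{\sigma(G)}$ for every $t$, so any $\bF$-progressive process is automatically $\bG$-progressive. Second, the integrability requirement in Definition \ref{integrability.condition} is pathwise in $\sigma(t,S_t)$ and does not mention the filtration, so it carries over. Finally, the admissibility condition $(\Theta\cdot\cS)_t \geq -a$ is an almost sure pathwise statement that depends only on the process $\cS$ and not on the filtration with respect to which $\Theta$ is measurable.

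I expect no real obstacle here: the insider information can only enlarge the set of available strategies, so the unboundedness of $v^\bF(\cH)$ propagates automatically. I would remark, after the proof, that this argument in fact does not use the specific form \eqref{G.example.2} of $G$ — under the standing assumption on the initial enlargement, the identity $v^\bG(\cH) = +\infty$ holds for any choice of $G$, which stands in sharp contrast with the finiteness results obtained for $\cH^+(\bG)$ under logarithmic utility in Theorem \ref{thm:karatzas1996.conjecture}. Thus the interesting information-theoretic phenomena in this model live entirely in the restricted class $\cH^+$, while on the unrestricted class $\cH$ the linear utility is trivially degenerate.
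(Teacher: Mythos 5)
Your proof is correct: $\cH(\bF)\subset\cH(\bG)$ holds for exactly the reasons you give (the measurability requirement is monotone in the filtration, while the integrability and admissibility conditions are pathwise), so $v^{\bG}(\cH)\geq v^{\bF}(\cH)=+\infty$ follows from Proposition \ref{finiteness.under.bF}. However, this is a genuinely different route from the paper, which instead constructs an explicit insider strategy: a buy-and-hold position entered at the stopping time $\tau_\epsilon=\inf\{S_t<e^{-r(T-t)}(b_1-\epsilon)\}$, scaled by a leverage factor $M$, whose expected terminal wealth grows without bound as $M\to\infty$. What the paper's construction buys is information your argument deliberately discards: it shows that the unbounded expected profit is generated by the arbitrage coming from the privileged knowledge of the bound $b_1$ on $S_T$, using a simple (buy, hold, sell) strategy, which is the point of the section linking $(A)_{\cH(\bG)}$ to utility maximization; your monotonicity argument shows only that the literal statement is already implied by the no-information case and says nothing about the insider. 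Your argument is more elementary and, as you note, applies to any $G$, but be careful with the closing remark: by Proposition \ref{finiteness.under.bF} one also has $v^{\bF}(\cH^+)=+\infty$, so the linear utility is ``trivially degenerate'' on $\cH^+$ as well, not only on $\cH$; the genuinely information-sensitive quantities are $u$ and $v_\gamma$ for $\gamma<1$, together with which class of strategies realizes the arbitrage.
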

\begin{proof}
Let $0< \epsilon< b_1/2$ and consider the following stopping time $\tau_\epsilon$,
\begin{equation}\label{stopping.time.interval}
    \tau_\epsilon = \inf\{ S_t\ < e^{-r(T-t)}(b_1 - \epsilon) \} \ .
\end{equation}
and the strategy 
$\Theta = \{(M_t,N_t) :0\leq t \leq T\}$ with 
$ N_t = M \, X_0 \frac{e^{r\, \tau_\epsilon}}{S_{\tau_\epsilon}}\bOne\{t \geq \tau_\epsilon\},$
for some constant $M>0$.
On $\{\tau_\epsilon < T\}$, $X_T = e^{rT} X_0\  M\frac{S_T}{b_1- \epsilon}+e^{rT}X_0\ (1-M)$ and on its complement $X_T = e^{rT}X_0$,
hence,
$$ \EE[X_T^\Theta] = e^{rT}X_0  + e^{rT}X_0\  M\left[\frac{S_T}{b_1 -\epsilon}-1\right]\PP(\tau_\epsilon < T), $$
which is not bounded as $M\to\infty$ for a fixed $\epsilon >0$.
\end{proof}

\subsection{Example of (NFLVR)$_\bG$ with \texorpdfstring{$v_\gamma^\bG(\cH^+)<\infty$ for $\gamma\in[0,1)$ and $v_1^\bG(\cH^+) = +\infty$}{v(gamma)(H^+)=infinity only for gamma=1}}
\label{sec:example.3}

In this section we show that the acquisition of additional information by an insider agent does not directly implies that she can take advantage of an arbitrage. Indeed we show that even knowing information about the terminal price of the risky asset (we model this by a function of $B_T$) may not lead to an arbitrage condition.

We use the initial enlargement $\bG = \bF^{\,\sigma(G)}$ with
\begin{equation}
\label{G.example.3}
    G = \bOne \Big\{ B_T \in \cup_{k=-\infty}^{+\infty} [2k-1,2k] \Big\}
\end{equation}
implying that the insider trader only knows if the Brownian motion will end up in a particular infinite union of intervals of size one.
Following the same arguments of Proposition \ref{alpha.int.dec}, next result gives a closed form expression of the new drift of the $\bG$-semi-martingale decomposition. It's proof is deferred to the appendix.
\begin{proposition}\label{alpha.int.unbounded}
    Let $G$ be as in \eqref{G.example.3}, then
    $$\alpha^g_{t} 
    =\frac{(-1)^g}{\sqrt{T-t}\ \PP(G=g \,|B_t)}
    \left(\sum_{k=-\infty}^{+\infty} \Phi'\left(\frac{2k-B_t}{\sqrt{T-t}}\right) - \Phi'\left(\frac{2k-1-B_t}{\sqrt{T-t}}\right) \right)
    \ , \quad 0 \leq t < T$$
and more explicitly
\begin{equation} 
\alpha^g_{t} =
\left\{\begin{array}{ll}
    \displaystyle
 \frac{1}{\sqrt{T-t}}\ \frac{\sum_{k=-\infty}^{+\infty} \Phi'\left(\frac{2k-B_t}{\sqrt{T-t}}\right) - \Phi'\left(\frac{2k-1-B_t}{\sqrt{T-t}}\right) }{\sum_{k=-\infty}^{+\infty}\Phi\left(-\frac{2k-B_t}{\sqrt{T-t}}\right) + \Phi\left(\frac{2k-1-B_t}{\sqrt{T-t}}\right)}
    & \text{ when } g = 0 \ ,\\
    \displaystyle
 \frac{1}{\sqrt{T-t}}\ \frac{\sum_{k=-\infty}^{+\infty} \Phi'\left(\frac{2k-1-B_t}{\sqrt{T-t}}\right) - \Phi'\left(\frac{2k-B_t}{\sqrt{T-t}}\right) }{\sum_{k=-\infty}^{+\infty}\Phi\left(\frac{2k-B_t}{\sqrt{T-t}}\right) - \Phi\left(\frac{2k-1-B_t}{\sqrt{T-t}}\right)}
    & \text{ when } g = 1 \ .
\end{array}\right.
\end{equation}
where $\Phi$ denotes the distribution of a standard Gaussian random variable.
\end{proposition}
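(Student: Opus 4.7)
The plan is to follow the same strategy used in Proposition \ref{alpha.int.dec}, adapted to the fact that the set $A = \bigcup_{k \in \mathbb{Z}}[2k-1, 2k]$ is now an infinite disjoint union of intervals rather than a single one. The central objects remain the $\bF$-conditional density $\eta^g_t = \PP(G = g \mid \cF_t)/\PP(G = g)$, which by the Markov property of $B$ is a function of $B_t$ alone, and the identification
\begin{equation*}
\alpha^g_t = -\frac{\partial_x f_g(B_t, t)}{f_g(B_t, t)}, \qquad f_g(x,t) \defeq \PP(G = g \mid B_t = x),
\end{equation*}
obtained by matching the $dB_t$-coefficient of $f_g(B_t,t)$ (given by Ito's formula) with the identity $d\eta^g_t = -\eta^g_t \alpha^g_t \, dB_t$ provided by Remark \ref{drift.computation}.

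First, using the fact that $B_T - B_t \sim \mathcal{N}(0, T-t)$ is independent of $\cF_t$, I would compute
\begin{equation*}
f_1(x, t) = \sum_{k \in \mathbb{Z}}\left[\Phi\!\left(\tfrac{2k - x}{\sqrt{T-t}}\right) - \Phi\!\left(\tfrac{2k - 1 - x}{\sqrt{T-t}}\right)\right],
\end{equation*}
with $f_0(x,t) = 1 - f_1(x,t)$ rewritten in the series form stated in the proposition via the reindexing $k \mapsto k+1$ and the relation $1 - \Phi(y) = \Phi(-y)$. The martingale property of $f_g(B_t, t)$ forces $f_g$ to solve the backward heat equation $\partial_t f_g + \tfrac{1}{2}\partial_{xx} f_g = 0$, so applying Ito's formula makes the finite-variation part vanish and leaves only $\partial_x f_g(B_t, t) \, dB_t$, from which the displayed expression for $\alpha^g_t$ follows at once.

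It then remains to differentiate the series defining $f_g$ term by term in $x$. Since $\Phi'(y) = (2\pi)^{-1/2} e^{-y^2/2}$ decays super-polynomially, both series $\sum_k \Phi'\!\bigl((2k - x)/\sqrt{T-t}\bigr)$ and $\sum_k \Phi'\!\bigl((2k - 1 - x)/\sqrt{T-t}\bigr)$ converge absolutely and uniformly on compact subsets of $x$ for each fixed $t \in [0,T)$, legitimating the termwise differentiation by dominated convergence. Using $\partial_x \Phi\bigl((c-x)/\sqrt{T-t}\bigr) = -\Phi'\bigl((c-x)/\sqrt{T-t}\bigr)/\sqrt{T-t}$ and collecting terms yields the desired closed form; the sign factor $(-1)^g$ appears naturally because $\partial_x f_0 = -\partial_x f_1$, which swaps the order of the two $\Phi'$-terms in the numerator between the cases $g=0$ and $g=1$.

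The only obstacle is essentially bookkeeping: keeping track of the shifted indices in the two series and of the signs arising in the $x$-derivative. No new analytic input beyond Proposition \ref{alpha.int.dec} is required, which is precisely why the paper says the proof follows the same arguments.
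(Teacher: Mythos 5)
Your overall route is the paper's: write $\PP(G=1\,|\,\cF_t)$ as the Gaussian series $f_1(B_t,t)$, differentiate the series term by term, and identify $\alpha^g$ through the exponential-martingale relation for $\eta^g$. The only methodological difference is how the interchange of the infinite sum with the differentiation is justified: the paper applies \Ito's formula to each summand and invokes the dominated convergence theorem for semi-martingales (Protter, Thm.~IV.32), whereas you argue deterministically that the differentiated series converges locally uniformly and then use the martingale/backward-heat-equation structure of $f_g(B_t,t)$. That variant is fine, but note that to apply \Ito's formula to $f_g(B_t,t)$ you need $f_g\in C^{1,2}$, so you must also justify termwise differentiation for $\partial_{xx}f_g$ and $\partial_t f_g$ (same Gaussian-decay estimate), not only for $\partial_x f_g$.

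The one genuine problem is a sign. You set $\alpha^g_t=-\partial_x f_g(B_t,t)/f_g(B_t,t)$, citing the Dol\'eans--Dade form in Remark \ref{drift.computation}; but that minus sign is inconsistent with the bracket identity $\langle\eta^G,B\rangle_t=\int_0^t\eta^G_u\alpha^G_u\,du$ of Proposition \ref{drift.decomposition}, which forces $d\eta^g_t=+\eta^g_t\alpha^g_t\,dB_t$, i.e.\ $\alpha^g_t=+\partial_x\log f_g(B_t,t)$ (this is also what equation \eqref{eta.derivative} in the appendix uses, what Proposition \ref{alpha.int.dec} states, and what the benchmark case $G=B_T$ gives, where $\alpha_t=(B_T-B_t)/(T-t)$ is the $x$-derivative of the log conditional density). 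With your identification the computation you describe produces the \emph{negative} of the stated formulas, so the claim that it ``yields the desired closed form at once'' hides an unresolved sign flip. Replace the identification by $\alpha^g_t=\partial_x f_g(B_t,t)/f_g(B_t,t)$ (equivalently $d\eta^g_t=\eta^g_t\alpha^g_t\,dB_t$) and the rest of your bookkeeping, including the $(-1)^g$ factor coming from $\partial_x f_0=-\partial_x f_1$, goes through as you describe.
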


Before stating the main result of this section we state the following version of the Mean Value Theorem for definite integrals, whose proof may be found in \cite[Lemma 2.1]{inequality}.
\begin{Lemma}\label{lem:mean.value.thm}
Let $c_1<c_2$, and $f(\cdot)$ a differentiable function. There exists $\xi \in(c_1,c_2)$ such that
\begin{align}
    \frac{1}{c_2-c_1}\int_{c_1}^{c_2} f(y) \, dy &= \notag \frac{f(c_1)+f(c_2)}{2} -\frac{1}{c_2-c_1} \int_{c_1}^{c_2} \left(y- \frac{c_1+c_2}{2}\right) \, f'(y) \, dy \\
    &=\frac{f(c_1)+f(c_2)}{2} - \left(\xi - \frac{c_1+c_2}{2}\right) \, f'(\xi) \ . \label{mean.value.thm}
\end{align}
\end{Lemma}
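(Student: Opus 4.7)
The plan is to establish the two equalities in sequence: the first is a direct integration by parts, and the second is an application of the first Mean Value Theorem for integrals (in its Intermediate-Value form).

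For the first equality, I would set $v(y) = y - \tfrac{c_1+c_2}{2}$ and notice that $v'(y) = 1$, so that $\int_{c_1}^{c_2} f(y)\,dy = \int_{c_1}^{c_2} v'(y) \, f(y) \, dy$. Integrating by parts with $u = f$ and $dv = dy$ (where I absorb the constant shift into the antiderivative by choosing $v(y) = y - \tfrac{c_1+c_2}{2}$) yields the boundary term
\begin{equation*}
    \left[\left(y-\tfrac{c_1+c_2}{2}\right)f(y)\right]_{c_1}^{c_2} = \tfrac{c_2-c_1}{2}\, f(c_2) + \tfrac{c_2-c_1}{2}\, f(c_1) = \tfrac{c_2-c_1}{2}\bigl(f(c_1)+f(c_2)\bigr),
\end{equation*}
together with the remainder $-\int_{c_1}^{c_2}(y-\tfrac{c_1+c_2}{2})f'(y)\,dy$. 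Dividing through by $c_2-c_1$ gives the first equality of \eqref{mean.value.thm}.

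For the second equality, I would apply the first Mean Value Theorem for integrals to the continuous integrand $h(y) = (y-\tfrac{c_1+c_2}{2})\,f'(y)$ on $[c_1,c_2]$ (assuming, as is implicit in the statement, that $f'$ is continuous, so that $h$ is continuous). Since $h$ is continuous on a compact interval, it attains its minimum $m$ and maximum $M$, and the average $\overline{h} \defeq \frac{1}{c_2-c_1}\int_{c_1}^{c_2} h(y)\,dy$ satisfies $m \leq \overline{h} \leq M$. By the Intermediate Value Theorem applied to $h$ on $[c_1,c_2]$, there exists $\xi$ in $[c_1,c_2]$ with $h(\xi) = \overline{h}$, which gives the desired identity with $(\xi - \tfrac{c_1+c_2}{2})f'(\xi)$ on the right-hand side.

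The only mild obstacle is ensuring that $\xi$ lies in the \emph{open} interval $(c_1,c_2)$ rather than its closure. This is handled by noting that $h(c_1) = -\tfrac{c_2-c_1}{2}f'(c_1)$ and $h(c_2) = \tfrac{c_2-c_1}{2}f'(c_2)$: if $\overline{h}$ equals $h(c_1)$ or $h(c_2)$ while $h$ is not constant on the endpoints (the non-constant case is the generic one of interest; if $h$ is constant the claim is trivial), continuity of $h$ together with values attained in the interior allows us to pick $\xi$ strictly inside. Alternatively, one may simply appeal to the standard statement of the Mean Value Theorem for integrals as given in \cite{inequality}, which directly yields $\xi \in (c_1,c_2)$ under the continuity assumption on $f'$.
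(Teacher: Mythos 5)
Your argument is correct, but note that the paper does not prove this lemma at all: it simply defers to \cite[Lemma 2.1]{inequality}, so your write-up is a self-contained substitute rather than a variant of an in-paper proof. The first step (integration by parts with the shifted antiderivative $y-\tfrac{c_1+c_2}{2}$, giving the ``corrected trapezoid'' identity) is exactly right. For the second step you invoke the average-value form of the mean value theorem for integrals, which forces you to add the hypothesis that $f'$ is continuous and then to spend a paragraph arguing that $\xi$ can be taken in the open interval; that discussion is somewhat loose (``the non-constant case is the generic one of interest''), although it can be made rigorous: if the average equals the maximum or minimum of $h(y)=(y-\tfrac{c_1+c_2}{2})f'(y)$ then $h$ is constant, and otherwise the intermediate value theorem applied between a maximizer and a minimizer gives an interior $\xi$. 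A cleaner route, which also removes the extra continuity assumption, is to observe that $h$ is itself a derivative: setting
\begin{equation*}
  H(y) \;=\; \left(y-\tfrac{c_1+c_2}{2}\right) f(y) \;-\; \int_{c_1}^{y} f(s)\,ds \ ,
\end{equation*}
one has $H'(y) = \left(y-\tfrac{c_1+c_2}{2}\right) f'(y)$, and Lagrange's mean value theorem applied to $H$ on $[c_1,c_2]$ produces $\xi\in(c_1,c_2)$ directly, reproducing both equalities of \eqref{mean.value.thm} under the bare differentiability stated in the lemma. Since in the paper the lemma is only ever applied to Gaussian densities, your extra smoothness assumption is harmless for its intended use.
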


The following result shows that under the enlargement by the random variable in \eqref{G.example.3}
the insider does not get any possibility of arbitrage. 

\begin{theorem}
     Let $G$ be as in \eqref{G.example.3}, then (NFLVR)$_\bG$ holds.
\end{theorem}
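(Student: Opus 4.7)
The strategy is to verify the Novikov condition
$\EE\!\left[\exp\!\left(\tfrac{1}{2}\int_0^T (\alpha^G_t)^2\,dt\right)\right]<\infty$
and then invoke Corollary \ref{cor:novikov.condition.bG}. Since $\eta$, $\xi$ and $1/\xi$ are bounded, the estimate extends immediately to the combined drift $\beta+\alpha^G$ with $\beta=(\eta-r)/\xi$, and the Cameron-Martin-Girsanov theorem (Proposition \ref{novikov}) then produces an equivalent local martingale measure in $\bG$, yielding $(NFLVR)_{\cH(\bG)}$ through the Fundamental Theorem of Asset Pricing (Proposition \ref{nflvr.ELMM}).

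The plan is to estimate $\alpha^G_t$ starting from the closed-form expression in Proposition \ref{alpha.int.unbounded}. Rewriting each numerator increment as
\[
\Phi'\!\left(\tfrac{2k-1-B_t}{\sqrt{T-t}}\right)-\Phi'\!\left(\tfrac{2k-B_t}{\sqrt{T-t}}\right)=-\tfrac{1}{\sqrt{T-t}}\int_{2k-1}^{2k}\Phi''\!\left(\tfrac{y-B_t}{\sqrt{T-t}}\right)dy,
\]
one applies Lemma \ref{lem:mean.value.thm} on each unit interval; a parallel expansion is used in the denominator, whose sum equals $\PP(G=g|B_t)$. The essential structural fact is that $E=\cup_k[2k-1,2k]$ occupies exactly half of $\bR$ in alternating unit intervals of length $1$: every point of $\bR$ lies within $1/2$ of both $E$ and $E^c$, and by $2$-periodicity the leading MVT contributions from consecutive intervals cancel in pairs. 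These cancellations are precisely what distinguishes this example from the unbounded-interval case of Section \ref{sec:example.1} and from the finite-interval case of Section \ref{sec:example.2}, where the analogous summation gives an $\alpha^G$ with squared $L^2$-norm over $[0,T]$ whose tails are incompatible with Novikov.

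The main obstacle will be the precise bookkeeping of the MVT correction terms, so as to conclude that the exponential moment of $\tfrac{1}{2}\int_0^T(\alpha^G_t)^2\,dt$ under $\PP$ remains finite. Heuristically, knowledge of $G$ does not let the insider localize $B_T$ inside any specific bounded interval, so the buy-hold-sell strategies of Propositions \ref{prop:strategy.semi-infinite.inteval} and \ref{prop:strategy.finite.inteval} cannot be deployed here; converting this qualitative observation into the required Novikov-type bound, via the pairwise MVT cancellation described above, is the technical core of the argument.
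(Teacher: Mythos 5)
Your reduction coincides with the paper's: check the Novikov condition \eqref{novikov.condition.bG} and invoke Corollary \ref{cor:novikov.condition.bG}, starting from the closed form of Proposition \ref{alpha.int.unbounded} and Lemma \ref{lem:mean.value.thm} (the paper additionally uses Jensen's inequality first, to replace $\exp\bigl(\tfrac12\int_0^T(\alpha^G_t)^2dt\bigr)$ by $\tfrac1T\int_0^T\exp\bigl(\tfrac T2(\alpha^G_t)^2\bigr)dt$). But your argument stops exactly where the work begins, and the mechanism you announce for that work is not the one that can deliver the bound. The paper uses no cancellation between consecutive intervals: writing $A=\cup_{k}[2k-1,2k]$ and conditioning on $B_t=x$, it compares each numerator difference $\Phi'\bigl(\tfrac{2k-1-x}{\sqrt{T-t}}\bigr)-\Phi'\bigl(\tfrac{2k-x}{\sqrt{T-t}}\bigr)$ with the conditional mass $\PP\bigl(B_T\in[2k-1,2k]\,|B_t=x\bigr)$ of the \emph{same} interval, using Lemma \ref{lem:mean.value.thm} to expand that mass as $\tfrac{c_2-c_1}{\sqrt{T-t}}\,\tfrac{\Phi'(z_1)+\Phi'(z_2)}{2}$ plus a correction argued to be negligible; because the interval length is bounded below by $1$, each single-interval ratio (with the $1/\sqrt{T-t}$ prefactor) is bounded by an absolute constant, and because $\PP(G=g\,|B_t)$ is precisely the sum of these positive masses, the ratio of sums is controlled by the worst single-interval ratio, yielding a uniform pointwise bound on $|\alpha^G_t|$ (the paper states $\le 3$ off $A$, treating $B_t\in A$ separately), after which Novikov is immediate. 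Cancellation by $2$-periodicity acts only inside the numerator, can only shrink it, and never couples it to the denominator $\PP(G=g\,|B_t)$, which is where a potential $1/\sqrt{T-t}$ blow-up must be beaten; so even a complete bookkeeping of your ``pairwise cancellations'' would not by itself produce the exponential moment.

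Concretely, the regime your sketch must confront (and does not) is $\{G=1\}$ with $B_t$ in a gap of $A$ at distance $d\in(0,1/2]$ from $\partial A$ and $T-t$ small: there the dominant terms give $|\alpha^1_t|\approx\tfrac{1}{\sqrt{T-t}}\,\Phi'\bigl(\tfrac{d}{\sqrt{T-t}}\bigr)\big/\Phi\bigl(-\tfrac{d}{\sqrt{T-t}}\bigr)$, which the observation that every point lies within $1/2$ of $A$ does not control; any complete proof must weigh this large value against the correspondingly small conditional probability weight in the expectation, and this uniformity in $x$ at scale $\sqrt{T-t}$ is the delicate point of the interval-by-interval comparison even in the paper's own write-up. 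Since you explicitly defer this estimate as ``the technical core,'' your proposal as it stands establishes only the reduction to a Novikov bound --- the part it shares with the paper --- and not the bound itself; as a proof it is incomplete, and its guiding idea (periodic cancellation) points away from the comparison that actually has to be made.
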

\begin{proof}
By Corollary \ref{cor:novikov.condition.bG}, it is enough to prove that the process $\alpha^G$ satisfies the Novikov condition \eqref{novikov.condition.bG}. 
Using the Jensen inequality we get 
$$  \EE\left[ \exp\left( \frac{1}{2}\int_0^T(\alpha_t^G)^2 dt \right) \right] \leq \frac{1}{T}\int_0^T \EE\left[\exp \left( \frac{T}{2}(\alpha_t^G)^2 \right) \right] \, dt \ ,$$
so we are left to find a bound for the right hand side of the expression above. 
The expectation can be computed as
\begin{equation}
    \EE\left[\exp \left(\frac{T}{2}(\alpha_t^G)^2 \right) \right]
    =\sum_{g\in\{0,1\}}
    \int_{\bR}
    \exp\left( \frac{T}{2}(\alpha_t^g)^2\right)
    \PP(G=g\,|B_t = x) \, 
    d\PP(B_t \leq x) \ .
\end{equation}
Using the similarity of the integrals for $g\in\{0,1\}$, 
we focus on getting a uniform bound, integrable with respect to $t$ for the case $g=1$.

Let $A=\cup_{k=-\infty}^{+\infty} [2k-1,2k]$ be the set appearing  in \eqref{G.example.3}. If $x\in A$, then $\PP(G=1\,|B_t = x)$ is far from zero and the numerator of $\alpha^1_t$ is uniformly convergent as $t \to T$ and bounded by $2$ in the interval $[0,T]$. Bounding the term $\PP(G=1\,|B_t = x)$ by $1$ we get the following inequality 
$$\EE\left[\exp \left(\frac{T}{2}(\alpha_t^1)^2 \right)\right] 
\leq   
K + \int_{A^c}
    \exp\left( \frac{T}{2}(\alpha_t^1)^2\right) \,  
    d\PP(B_t \leq x) \ ,$$
where $K>0$ is some constant.
    
For the moment we focus on the value of the integral in a single interval, say $[c_1, c_2]$, to later generalize the argument to the whole set $A^c$.

In the following we assume that $x \not\in [c_1,c_2]$ and $z_i = (c_i-x)/\sqrt{T-t}$ with $i\in\{1,2\}$. By Lemma \ref{lem:mean.value.thm}, there exists $\zeta\in[c_1,c_2]$ such that
\begin{align}
     \PP\left( B_T\in [c_1,c_2] \,|B_t = x\right)
     =& \frac{c_2-c_1}{\sqrt{T-t}}\frac{\Phi'(z_1)+\Phi'(z_2)}{2} \notag  \\
     &+ \left(\frac{\zeta-x}{\sqrt{T-t}}- \frac{c_1+c_2-2x}{2\sqrt{T-t}}\right) \frac{c_2-c_1}{\sqrt{T-t}}\exp\left(-\frac{(\zeta-x)^2}{2(T-t)}\right) \ .
\end{align}
Using the expression above to find an alternative form of the following term,
\begin{align}
    \frac{1}{\sqrt{T-t}} \frac{\Phi'(z_1)-\Phi'(z_2)}{ \PP\left( B_T\in [c_1,c_2] \,|B_t = x\right)} 
    &= \frac{\left(\Phi'(z_1)-\Phi'(z_2)\right)/\left(\Phi'(z_1)+\Phi'(z_2)\right)}{ {(c_2-c_1)}/2  + \left(\zeta- \frac{c_1+c_2}{2}\right) \frac{c_2-c_1}{\sqrt{T-t}}\frac{\exp\left(-\frac{(\zeta-x)^2}{2(T-t)}\right)}{\Phi'(z_1)+\Phi'(z_2)}}
\end{align}
that allows to show that it is bounded for all $t\in[0,T]$. 
It is enough to look at the limit as $t\to T$. 
The numerator is easily shown to be bounded as
$$ 0\leq \lim_{t\to T} \frac{\Phi'(z_1)-\Phi'(z_2)}{\Phi'(z_1)+\Phi'(z_2)}\leq  \lim_{t\to T} \frac{\Phi'(z_1)}{\Phi'(z_1)+\Phi'(z_2)} \leq 1 \ .$$
Therefore the boundedness follows from the fact that the denominator is sum of a constant term and another one that goes to zero. 
Indeed, assuming w.l.o.g. that $x < c_1 < c_2$ -- the other case being equivalent by the symmetry of the function $\Phi'$ --, we have
\begin{align*}
     \lim_{t\to T} \frac{\frac{c_2-c_1}{\sqrt{T-t}} \exp\left(-\frac{(\xi-x)^2}{2(T-t)}\right)}{\Phi'(z_1)+\Phi'(z_2)} &= \lim_{t\to T}\sqrt{2\pi} \frac{\frac{c_2-c_1}{\sqrt{T-t}} \exp\left(-\frac{(\xi-x)^2}{2(T-t)}\right)}{\exp\left(-\frac{(c_1-x)^2}{2(T-t)}\right)+\exp\left(-\frac{(c_2-x)^2}{2(T-t)}\right)}\\ &=\lim_{t\to T} \sqrt{2\pi} \frac{\frac{c_2-c_1}{\sqrt{T-t}}\exp\left( -\frac{(\xi-x)^2-(c_1-x)^2}{2(T-t)}\right)}{1+\exp\left(-\frac{(c_2-x)^2-(c_1-x)^2}{2(T-t)}\right)} = 0 \ .
\end{align*}

Fixing $t$ sufficiently near to $T$, we have shown that  
$$\frac{1}{\sqrt{T-t}}\frac{\Phi'(z_1)-\Phi'(z_2)}{ \PP\left( B_T\in [c_1,c_2] \,|B_t = x\right)} \leq 1 + \frac{{2}}{c_2-c_1} \ , \quad x\not\in[c_1,c_2] \ .$$

Replacing the interval $[c_1,c_2]$ by the set $A = \cup_{k=-\infty}^{+\infty} [2k-1,2k]$, and by repeating the same argument above, we get that
$$\alpha_t^1 = \frac{\sum_{k=-\infty}^{+\infty} \Phi'\left(\frac{2k-1-x}{\sqrt{T-t}}\right) - \Phi'\left(\frac{2k-x}{\sqrt{T-t}}\right) } {\PP( B_T\in A \,|B_t = x)} \leq 3 
\ , \quad x \in A^c \ .$$
Finally,
$$\EE\left[\exp \left(\frac{T}{2}(\alpha_t^1)^2 \right)\right] 
\leq   
K + \int_{A^c}
    \exp\left( \frac{9}{2}T\right)d\PP(B_t \leq x) < +\infty$$
and the result follows. 
\end{proof}

\begin{proposition}\label{finiteness.under.bG}
Let $G$ be as in \eqref{G.example.3}, then $v_\gamma^\bG(\cH^+)<+\infty$ for $0\leq \gamma < 1$ and $v^\bG(\cH) = v^\bG(\cH^+) = +\infty$.
\end{proposition}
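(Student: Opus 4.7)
The plan is to proceed in the spirit of Proposition \ref{finiteness.under.bF}, using as the main additional ingredient a slight strengthening of the integrability established in the proof of the preceding (NFLVR)$_\bG$ theorem.

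The key observation is that that proof yields, with essentially no change, the stronger estimate $\EE[\exp(C\int_0^T(\alpha_t^G)^2\,dt)]<+\infty$ for every $C>0$. Indeed, the two bounds that drove the argument there---namely $|\alpha_t^1|\le 3$ on $A^c$ and the uniform boundedness of the numerator of $\alpha_t^1$ on $A$ by $2$ with denominator bounded away from zero---do not depend on the specific constant $T/2$ appearing in the Novikov exponent. Replacing $T/2$ by an arbitrary $CT$ in the Jensen step gives $\EE[\exp(CT(\alpha_t^G)^2)]\le K_C$ uniformly in $t\in[0,T]$, and one more application of Jensen yields the claim.

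Granted this, the finiteness assertion $v_\gamma^\bG(\cH^+)<+\infty$ for $0\le\gamma<1$ follows from the closed form \eqref{insider.gain.CRRA} by writing $(a+b)^2\le 2a^2+2b^2$ to separate the two contributions inside the exponential, using the boundedness of $(\eta_t-r)/\xi_t$ (a deterministic bound coming out of the expectation) for the first term and the strengthened estimate above, applied with $C=\gamma/(1-\gamma)$, for the second. For the divergence part I would take $\gamma\to 1^-$ in \eqref{insider.gain.CRRA} and apply Jensen's inequality to pull the expectation inside the exponent exactly as in Proposition \ref{finiteness.under.bF}: since the exponential rate $\gamma/(2(1-\gamma))$ diverges while the inner expectation is strictly positive, this forces $v^\bG(\cH^+)=+\infty$, and then $v^\bG(\cH)\ge v^\bG(\cH^+)=+\infty$ via the inclusion $\cH^+(\bG)\subset\cH(\bG)$.

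The main obstacle I anticipate is to make fully rigorous the uniform bound on $\alpha^G$ over the set $A$, since the preceding proof only sketches this case. I would close the gap by a careful asymptotic analysis: for $x$ in the interior of $[2k-1,2k]\subset A$ both arguments $(2k-1-x)/\sqrt{T-t}$ and $(2k-x)/\sqrt{T-t}$ diverge with opposite signs as $t\to T$, so the numerator of $\alpha_t^g$ decays as $O(\exp(-c/(T-t)))$ and dominates the $1/\sqrt{T-t}$ prefactor, while the measure-zero set of boundary points contributes nothing to the expectation.
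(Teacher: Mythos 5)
Your skeleton is exactly the route the paper intends for this proposition (it simply says ``similarly to Proposition \ref{finiteness.under.bF}, starting from equations \eqref{insider.gain}''): reduce the case $0\le\gamma<1$ to an exponential moment of $\int_0^T(\alpha^G_t)^2\,dt$ via \eqref{insider.gain.CRRA} and $(a+b)^2\le 2a^2+2b^2$, and get $v^\bG(\cH)=v^\bG(\cH^+)=+\infty$ by Jensen and the inclusion $\cH^+(\bG)\subset\cH(\bG)$. The divergence half is fine. The gap is in the step you yourself flag and then dismiss: the claim that the preceding argument yields $\EE[\exp(C\int_0^T(\alpha^G_t)^2dt)]<+\infty$ for \emph{every} $C>0$, justified by interior decay plus ``the boundary of $A$ is a null set''.

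The obstruction is not the boundary points of $A$ but the band of width of order $\sqrt{T-t}$ around them. If $x=2k-a\sqrt{T-t}$ with $a$ of order one, then in the formulas of Proposition \ref{alpha.int.unbounded} the numerator tends to $-\Phi'(a)$ and the denominator to $\Phi(a)$, so $|\alpha^1_t(x)|\sim\Phi'(a)/\bigl(\Phi(a)\sqrt{T-t}\bigr)$; the prefactor $(T-t)^{-1/2}$ survives there, on both sides of the boundary. Your interior bound $O(\exp(-c/(T-t)))$ has $c$ proportional to the squared distance to the boundary and degenerates exactly on this band, whose probability weight is of order $\sqrt{T-t}$ (not zero), while the integrand there is of size $\exp(\mathrm{const}\cdot C/(T-t))$; hence the asserted uniform-in-$t$ bound $K_C$ does not follow, for any $C$. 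Worse, the ``every $C$'' statement is itself doubtful: on the event that $B_T$ falls at distance $\delta$ from the boundary of $A$ (probability of order $\delta$), one gets $\int_0^T(\alpha^G_t)^2dt$ of order $\ln(1/\delta)$, so this integral has only an exponential tail and its exponential moments should be finite only for $C$ below a threshold --- precisely what you would need to let $C=\gamma/(1-\gamma)\to\infty$ as $\gamma\to1^-$. So the finiteness assertion for $\gamma$ near $1$ is not established by your argument; closing it requires either a genuinely uniform Gaussian estimate over the critical band, carried out in the rescaled variable $a$ in the spirit of Lemma \ref{lem:bound.I.finite}, or a different mechanism altogether (for instance a duality bound exploiting that $\eta^G_T$ is bounded for the two-valued $G$), rather than pointwise boundedness of $\alpha^G$.
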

\begin{proof}
Similarly to the arguments used in Proposition \ref{finiteness.under.bF}, 
we can show the result starting from equations \eqref{insider.gain}.
\end{proof}

\section{Conclusions}\label{sec:conclusions}

\begin{table}[h]
    \caption{\label{tab:conc.rmks}Summary of the utility maximization problems analyzed in Section \ref{sec:examples}.}
\begin{center}
    \def\rsp{\hspace*{1cm}}
    \def\lsp{\hspace*{0.15cm}}
\begin{tabular}{lccccccc}
\hline
Sect. & $\bT$ & $u^\bT(\cH^+)$ & $u^\bT(\cH)$ & $v_\gamma^\bT(\cH^+)$  & $v_\gamma^\bT(\cH)$ & $v^\bT(\cH^+) = v^\bT(\cH)$ & Arbitrage cond\\\hline

\ref{sec:examples}  & $\bF$ & \multicolumn{4}{c}{$<+\infty$} & \multicolumn{1}{c}{$=+\infty$} & (NA)$_\bF$ \\ 
\ref{sec:example.3} & $\bG$ & \multicolumn{4}{c}{$<+\infty$} & \multicolumn{1}{c}{$=+\infty$} & (NA)$_\bG$ \\
\ref{sec:example.2} & $\bG$ & \multicolumn{1}{c}{$<+\infty$}  &
\multicolumn{3}{c}{\small (in $\cH^{+}$ for $\gamma\geq 1/2$)} &\multicolumn{1}{c}{$=+\infty$} &
  (A)$_\bG$ \\ 
\ref{sec:example.1} & $\bG$ & \multicolumn{4}{c}{}  &
 \multicolumn{1}{c}{$=+\infty$}  & (A)$_\bG$ \\\hline
\end{tabular}
\end{center}
\end{table}

In this paper we analyzed the relations between the arbitrage conditions, the utility maximization problems and the enlargements of filtration. In particular we considered all these concepts with the respect to the class of strategies the agent may employ in maintaining her portfolio, by focusing on the general admissible class, $\cH$, and the one that does not allow for temporary-bankruptcy, $\cH^+$.

In terms of arbitrage, by Lemma \ref{lem:arbitrage.comparison}, there is practically no difference between working with the class $\cH$ and the class $\cH^+$.
However in terms of utility maximization, the difference becomes clear as we found cases in which the (logarithmic) utility maximization is finite even in presence of arbitrage. 
We have included examples of this type and Table \ref{tab:conc.rmks} shows a brief summary of the results.

In particular it shows with the example of Section 4.3 that an enlargement of initial type does not always imply arbitrage for the insider.

\section*{Acknowledgments}
This research was partially supported by the Spanish Ministry of Economy and Competitiveness grants MTM2017-85618-P (via FEDER funds),  MTM2015-72907-EXP and FPU18/01101. We thank the referees for the insightful comments that helped to improve the presentation of this work.

\section*{Conflict of interest}
The authors declare no conflicts of interest in this paper.

\bibliographystyle{AIMSunsort}
\bibliography{main.bib}

\section*{Appendix}
\begin{proof}[Proof of Proposition \ref{alpha.int.dec}]
From \eqref{Doleans-Dade} it follows that the process $\alpha^G=\prTmenos{\alpha^G}$  satisfies the following relation
\begin{equation}\label{eta.derivative}
    d\eta_t = \eta_t \alpha^G_t dB_t\ ,
\end{equation}
where $\eta$ is the Radon-Nikodym derivative between the law of $G = \bOne \{ c_1 \leq B_T\leq c_2 \}$ conditioned to the the $\sigma$-algebra $\cF_t$ and its unconditional law, i.e,
\begin{equation}\label{eq.eta}
    \eta_t(g) = \frac{\PP(G = g |\cF_t)}{\PP(G = g)}\ ,\quad g\in\{0,1\}\ .
\end{equation} 
Substituting \eqref{eq.eta} in \eqref{eta.derivative} we try to get the an expression for the process $\alpha^G$. 

The conditional probability mass function of $G$ computed for $\{G=1\}$ may be expressed in terms of Gaussian distributions as
\begin{equation}\label{prob.G.by.Gaussian}
\PP(G = 1 |\cF_t) = \Phi\left(\frac{c_2-B_t}{\sqrt{T-t}}\right) - \Phi\left(\frac{c_1-B_t}{\sqrt{T-t}}\right) \ .
\end{equation}

By \Ito's lemma applied to the function $\Phi\left((a-x)/(\sqrt{T-t})\right)$, with $a\in\bR$,  we have that  
\begin{equation}\label{ito.Gauss}
   d\Phi\left(\frac{a-B_t}{\sqrt{T-t}}\right) = -\frac{1}{\sqrt{T-t}}\Phi'\left(\frac{a-B_t}{\sqrt{T-t}}\right) dB_t \ .  
\end{equation}

Substituting \eqref{ito.Gauss} in \eqref{prob.G.by.Gaussian} and applying linearity it follows that 
$$ d\PP(G = 1 |\cF_t) =  \frac{1}{\sqrt{T-t}} \left(\Phi'\left(\frac{c_1-B_t}{\sqrt{T-t}}\right) - \Phi'\left(\frac{c_2-B_t}{\sqrt{T-t}}\right) \right) dB_t $$
and finally,
\begin{equation}
    \frac{d\eta_t}{\eta_t} =  \frac{1}{\sqrt{T-t}} \frac{\Phi'\left( \frac{c_1-B_t}{\sqrt{T-t}}\right) - \Phi'\left(\frac{c_2-B_t}{\sqrt{T-t}}\right)}{\PP(G = 1 |\cF_t)} dB_t\ ,
\end{equation}
and we get the result. The case $\{G=0\}$ follows on the same lines.

\end{proof}

\begin{proof}[Proof of Lemma \ref{lem:bound.I.finite}]
We start by splitting $\bR$ in three intervals  $(-\infty,c_1]$, $(c_1,c_2)$ and $[c_2,\infty)$, then we prove that on each interval the integral is finite.

\textbf{Interval $(-\infty,c_1]$:}
We apply a change of variable in $z_1$ and express $z_2 = z_1 + (c_2-c_1)/\sqrt{T-t}$. 
We let $s_t:=(c_2-c_1)/\sqrt{T-t}$ and call its minimum in $t$ as $s_0>0$. 
We get
\begin{align}
\int_{-\infty}^{c_1}I(x,t)dx  &=  \int_0^{+\infty} \frac{\left[\Phi'(z_1)-\Phi'(z_1 + s_t)\right]^2  }{\left[\Phi\left( z_1 +s_t  \right)  - \Phi\left( z_1  \right)\right] \left[\Phi\left( -z_1 - s_t  \right)  + \Phi\left( z_1  \right)\right]} dz_1 \nonumber \\
&= \int_0^{+\infty} \left( \frac{ \left[\Phi'(z_1)-\Phi'(z_1 + s_t)\right]^2}{\Phi\left( z_1 +s_t  \right)  - \Phi\left( z_1  \right)} + \frac{\left[\Phi'(z_1)-\Phi'(z_1 + s_t)\right]^2}{\Phi\left( -z_1 - s_t  \right)  + \Phi\left( z_1  \right)}  \right) dz_1 \ .
\label{lower.semi.interval}\end{align}
We continue by showing that both terms are finite.
We first consider the first term.
\begin{align*}
\int_0^{+\infty}\frac{ \left[\Phi'(z_1)-\Phi'(z_1 + s_t)\right]^2}{\Phi\left( z_1 +s_t  \right)  - \Phi(z_1)} dz_1 &\leq \int_0^{+\infty}\frac{ \left[\Phi'(z_1)-\Phi'(z_1 + s_t)\right]^2}{\Phi\left( z_1 +s_0  \right)  - \Phi(z_1)} dz_1 \leq \int_0^{+\infty}\frac{ \left[\Phi'(z_1)\right]^2}{\Phi\left( z_1 +s_0  \right)  - \Phi(z_1)} dz_1\\
&= \int_0^{1}\frac{ \left[\Phi'(z_1)\right]^2}{\Phi\left( z_1 +s_0  \right)  - \Phi(z_1)} dz_1 + \int_1^{+\infty}\frac{ \left[\Phi'(z_1)\right]^2}{\Phi\left( z_1 +s_0  \right)  - \Phi(z_1)} dz_1\ .
\end{align*}
The integral in $[0,1]$ is clearly bounded. While for the interval $[1,+\infty]$, we apply a comparison criteria with the function $f(z) = 1/z^2$, as follows,
\begin{align*}
\lim_{z_1\to\infty} z_1^2 \frac{  \left[\Phi'(z_1)\right]^2}{\Phi\left( z_1 +s_0  \right)  - \Phi(z_1)} &= \lim_{z_1\to\infty} \frac{1}{\sqrt[]{2\pi}} \frac{z_1^2\left[\exp\left( -{z_1^2}/{2} \right) \right]^2}{\int_{z_1}^{z_1+s_0}\exp(-u^2/2) \, du}\\
&=\lim_{z_1\to\infty} \frac{1}{\sqrt[]{2\pi}} \frac{2z_1\left[\exp\left( -{z_1^2}/{2} \right) \right]^2 - 2z_1^3 \left[\exp\left( -{z_1^2}/{2} \right) \right]^2}{\exp(-(z_1+s_0)^2/2) - \exp(-z_1^2/2)}\\
&= \lim_{z_1\to\infty} \frac{1}{\sqrt[]{2\pi}} \frac{2z_1\exp\left( -{z_1^2}/{2} \right) - 2z_1^3 \exp\left( -{z_1^2}/{2} \right) }{\exp(-z_1s_0)\exp(-s_0^2/2) - 1}\to 0 \ .
\end{align*}
In the second equality above, we used L'Hopital Rule and we conclude that the integral is finite on $(1,+\infty)$. 

As for the second term in \eqref{lower.semi.interval}, we have the following bound,
\begin{align*}
\int_0^{+\infty}\frac{\left[\Phi'(z_1)-\Phi'(z_1 + s_t)\right]^2}{\Phi(-z_1-s_t)  + \Phi(z_1)} dz_1 &\leq \int_0^{+\infty} \frac{\left[\Phi'(z_1)-\Phi'(z_1 + s_t)\right]^2}{\Phi(z_1)} dz_1 \\
&\leq 2 \int_0^{+\infty}\left[\Phi'(z_1)-\Phi'(z_1 + s_t)\right]^2dz_1\\
&\leq 2 \int_0^{+\infty}\left[\Phi'(z_1)\right]^2dz_1 = \frac{1}{\sqrt[]{2}} \ .
\end{align*}

\textbf{Interval $[c_2,+\infty)$:}
We proceed in the same way as above, now applying a change of variable in $z_2$.
\begin{align}
\int^{+\infty}_{c_2}I(x,t)dx &= \int_{-\infty}^{0}\frac{ \left[ \Phi'(z_2-s_t)-\Phi'(z_2)\right]^2  }{\left[\Phi\left( z_2  \right)  - \Phi\left( z_2-s_t  \right)\right] \left[\Phi\left( -z_2  \right)  + \Phi\left( z_2-s_t \right)\right]} dz_2 \nonumber \\
&= \int_{-\infty}^{0}\left(\frac{ \left[ \Phi'(z_2-s_t)-\Phi'(z_2)\right]^2  }{\Phi\left( z_2  \right)  - \Phi\left( z_2-s_t  \right) }+ \frac{ \left[ \Phi'(z_2-s_t)-\Phi'(z_2)\right]^2  }{\Phi\left( -z_2  \right)  + \Phi\left( z_2-s_t \right)}\right) dz_2
\label{upper.semi.interval}\end{align}
We show that both terms in \eqref{upper.semi.interval} are finite. 
For the first one we have
\begin{align*}
\int_{-\infty}^{0} \frac{ \left[ \Phi'(z_2-s_t)- \Phi'(z_2)\right]^2  }{\Phi\left( z_2  \right)  - \Phi\left( z_2-s_t  \right) } dz_2 &\leq \int_{-\infty}^{0} \frac{ \left[ \Phi'(z_2-s_t)-\Phi'(z_2)\right]^2  }{\Phi\left( z_2  \right)  - \Phi\left( z_2-s_0  \right) } dz_2\leq \int_{-\infty}^{0}\frac{ \left[\Phi'(z_2)\right]^2  }{\Phi\left( z_2  \right)  - \Phi\left( z_2-s_0  \right) } dz_2 \ ,
\end{align*}
and applying the same reasoning as before, we conclude that the integral is finite. Then for the second term we have
\begin{align*}
 \int_{-\infty}^{0}\frac{ \left[ \Phi'(z_2-s_t) -\Phi'(z_2)\right]^2  }{\Phi\left( -z_2  \right)  + \Phi\left( z_2-s_t \right)} dz_2 &\leq \int_{-\infty}^{0}\frac{ \left[ \Phi'(z_2-s_t)-\Phi'(z_2)\right]^2  }{\Phi\left( -z_2  \right)  } dz_2\leq 2\int_{-\infty}^{0} \left[\Phi'(z_2)\right]^2 dz_2 = \frac{1}{\sqrt[]{2}} \ .
\end{align*}

\textbf{Interval $(c_1,c_2)$:}
We proceed by applying a change of variable, and we arbitrarily choose to do it in the variable $z_2$. 
We get
\begin{align}
\int_{c_1}^{c_2}I(x,t)dx &= \int_0^{s_t} \left[ \frac{[\Phi'(z_2)-\Phi'(z_2-s_t)]^2}{\Phi(z_2) - \Phi(z_2-s_t) }+\frac{[\Phi'(z_2)-\Phi'(z_2-s_t)]^2}{\Phi(-z_2) + \Phi(z_2-s_t) }\right] dz_2
\label{middle.interval}\end{align}
and again we show that both integrals in \eqref{middle.interval} are bounded.
For the first integral we have
\begin{align*}
\int_0^{s_t} \frac{[\Phi'(z_2)-\Phi'(z_2-s_t)]^2}{\Phi(z_2) - \Phi(z_2-s_t) }dz_2&= 2\int_0^{s_t/2} \frac{[\Phi'(z_2)-\Phi'(z_2-s_t)]^2}{\Phi(z_2) - \Phi(z_2-s_t) }dz_2\\
&\leq 2\int_0^{s_t/2} \frac{[\Phi'(z_2)]^2}{\Phi(z_2) - \Phi(z_2-s_0) }dz_2\\
&\leq 2\int_0^{\infty} \frac{[\Phi'(z_2)]^2}{\Phi(z_2) - \Phi(z_2-s_0) }dz_2
\end{align*}
where the first equality holds because the function we are integrating is symmetric with respect $s_t/2$. 
The last integral is finite as it is trivially so on $[0,1]$ and using a comparison criteria with the function $f(z)=1/z^2$ 
it is also integrable on $[1,+\infty]$. 
In a similar way we analyze the second integral in \eqref{middle.interval} and by symmetry of the function with respect to the $s_t/2$ we get 
\begin{align*}
\int_0^{s_t}\frac{[\Phi'(z_2)-\Phi'(z_2-s_t)]^2}{\Phi(-z_2) + \Phi(z_2-s_t) } dz_2& =  2\int_0^{s_t/2} \frac{[\Phi'(z_2)-\Phi'(z_2-s_t)]^2}{\Phi(-z_2) + \Phi(z_2-s_t) }dz_2 \ .
\end{align*}
Then we compute the following bound
\begin{align*}
\Phi'(z_2)-\Phi'(z_2-s_t) &= \frac{1}{\sqrt[]{2\pi}}\left[ \exp\left(-\frac{z_2^2}{2}\right) -\exp\left(-\frac{(z_2-s_t)^2}{2}\right)\right] \\
&=\frac{1}{\sqrt[]{2\pi}}\exp\left(-\frac{z_2^2}{2}\right)\left[1-\exp\left(-\frac{s_t^2-2z_2s_t}{2}\right)\right]\\
&\leq \frac{1}{\sqrt[]{2\pi}}\exp\left(-\frac{z_2^2}{2}\right) \ ,
\end{align*}
where the  last inequality holds because $s_t^2-2z_2s_t\geq 0$ as $0\leq z_2\leq s_t/2$.
\begin{align*}
\int_0^{s_t}\frac{[\Phi'(z_2)-\Phi'(z_2-s_t)]^2}{\Phi(-z_2) + \Phi(z_2-s_t) } dz_2 &\leq \sqrt[]{\frac{2}{\pi}} \int_0^{s_t/2} \frac{\exp\left(-z_2^2\right)}{\Phi(-z_2) + \Phi(z_2-s_t) }dz_2 \leq  \int_0^{s_t/2} \frac{\exp\left(-z_2^2\right)}{\Phi(-z_2) }dz_2 \\
&\leq \int_0^1 \frac{\exp\left(-z_2^2\right)}{\Phi(-z_2) }dz_2 + \int_1^{+\infty} \frac{\exp\left(-z_2^2\right)}{\Phi(-z_2) }dz_2 \ .
\end{align*}
The first integral is trivially bounded. For the second to be bounded, we apply a comparison criteria with the function $f(z) = 1/z^2$. Putting together the given bounds we may bound the integral in \eqref{middle.interval} and the proof is finished.
\end{proof}

\begin{proof}[Proof of Proposition \ref{alpha.int.unbounded}]
We write 
\begin{equation}\label{prob.G.by.Gaussian.sum}
\PP(G= 1\ |\cF_t) =\sum_{k=-\infty}^{+\infty} \Phi\left(\frac{2k-B_t}{\sqrt{T-t}}\right) - \Phi\left(\frac{(2k-1)-B_t}{\sqrt{T-t}}\right) \ .
\end{equation}

Applying \eqref{ito.Gauss} to each term appearing in \eqref{prob.G.by.Gaussian.sum} and using  
the Dominated Convergence Theorem for semi-martingales \cite[Theorem IV.32]{Protter2005}
we get
\begin{equation} 
d\PP(G = 1 |\cF_t) =  \sum_{k=-\infty}^{+\infty} \frac{1}{\sqrt{T-t}} \left(\Phi'\left(\frac{2k-B_t}{\sqrt{T-t}}\right) - \Phi'\left(\frac{(2k-1)-B_t}{\sqrt{T-t}}\right) \right) dB_t \ .
\end{equation}
Then, the result follows on the same lines as the proof of Proposition \ref{alpha.int.dec}.
\end{proof}

\end{document}